\documentclass[graybox,vecarrow,envcountsect,envcountsame]{svmult}

\usepackage[bottom]{footmisc}

\usepackage{newtxtext}
\usepackage[varvw]{newtxmath}

%-- CUSTOM --------------------------------------------------------------------%

\usepackage{mathtools}
\usepackage{enumitem}

\newcommand{\nin}{\not\in}

% Mathematical definition
\newcommand{\defn}{\coloneqq}
\newcommand{\nfed}{\eqqcolon}

\newcommand{\one}{{{\mathchoice {\rm 1\mskip-4mu l} {\rm 1\mskip-4mu l} {\rm 1\mskip-4.5mu l} {\rm 1\mskip-5mu l}}}}

\newcommand\numberthis{\addtocounter{equation}{1}\tag{\theequation}}

\newcommand{\dif}{\mathop{}\!\mathrm{d}}
\newcommand{\e}{\mathrm{e}}
\newcommand{\im}{\mathrm{i}}

\newcommand{\field}[1][K]{{\mathbb{#1}}}

\newcommand{\RR}{{\field[R]}}
\newcommand{\CC}{{\field[C]}}
\newcommand{\CCR}{\mathrm{CCR}}
\newcommand{\RS}{\mathrm{RS}}
\renewcommand{\Re}{\operatorname{Re}}
\renewcommand{\Im}{\operatorname{Im}}

\DeclareMathOperator{\supp}{supp}

\DeclareMathOperator{\Dom}{\cD}
\DeclareMathOperator{\Ran}{\cR}
\DeclareMathOperator{\Ker}{\cN}

\DeclareMathOperator*{\slim}{s-lim}

\DeclareMathOperator{\sgn}{sgn}

\newcommand{\loc}{\mathrm{loc}}

\newcommand{\conj}[1]{\overline{#1}}
\DeclarePairedDelimiter{\abs}{\lvert}{\rvert}
\DeclarePairedDelimiter{\norm}{\lVert}{\rVert}

\DeclarePairedDelimiterX{\cinner}[2]{(}{)}{#1\mkern2mu\delimsize\vert\mkern2mu\mathopen{}#2}
\DeclarePairedDelimiterX{\rinner}[2]{\langle}{\rangle}{#1\mkern2mu\delimsize\vert\mkern2mu\mathopen{}#2}

\newcommand{\cl}{\mathrm{cl}}

\newcommand{\T}{\mathrm{T}}

\newcommand{\Feyn}{\mathrm{F}}
\newcommand{\aFeyn}{{\overline{\mathrm{F}}}}
\newcommand{\PJ}{\mathrm{PJ}}
\newcommand{\en}{\mathrm{en}}

\newcommand{\dyn}{\mathrm{dyn}}
\newcommand{\s}{{\mathrm s}}

\newcommand{\ri}{\mathrm{i}}
\renewcommand{\sc}{\mathrm{sc}}

\makeatletter
\newcommand{\aotimes}{{\mathop{\otimes}\limits^{
  \vbox to .15ex {\kern-2\ex@\hbox{\tiny alg}\vss}}}}
\makeatother
\newcommand{\Sol}{\cW}
%{\mathrm{Sol}}
\renewcommand{\bar}{\overline}

\newcommand{\cW}{{\mathcal W}}
\newcommand{\cD}{{\mathcal D}}
\newcommand{\cK}{{\mathcal K}}
\newcommand{\cV}{{\mathcal V}}
\newcommand{\cX}{{\mathcal X}}
\newcommand{\cY}{{\mathcal Y}}

\newcommand{\cZ}{{\mathcal Z}}
\newcommand{\cR}{{\mathcal R}}
\newcommand{\cN}{{\mathcal N}}
\newcommand{\torder}[1]{\mathrm{T}\{#1\}}
\newcommand{\atorder}[1]{\bar{\mathrm{T}}\{#1\}}
\newcommand{\dGamma}{\mathop{\dif\Gamma}}

\newcommand{\itembullet}{\hspace\leftmargin\llap{\textbullet\hspace\labelsep}}

\newcommand{\bes}{\begin{subequations}}
\newcommand{\ees}{\end{subequations}}

\hyphenation{pro-pa-ga-tor}
\hyphenation{pro-pa-ga-tors}

%%% Theorem environments and assumption environments %%%%%%%%%%%%%%%

\spnewtheorem{conjecture}[theorem]{Conjecture}{\bfseries}{\itshape}
\spnewtheorem{assumption}[theorem]{Assumption}{\bfseries}{\rmfamily}
%%%%%%%%%%%%%%%%%%%%%%%%%%%%%%%%%%%%%%%%%%%%%%%%%%%%%%%%%%%%%%%%%%%%

%== DOCUMENT ==================================================================%

\begin{document}

\title*{An Evolution Equation Approach to Linear Quantum Field Theory}

\author{Jan Dereziński and Daniel Siemssen}

\institute{Jan Dereziński \at Department of Mathematical Methods in Physics, Faculty of Physics, University of Warsaw, Pasteura~5, 02-093 Warszawa, Poland, \email{jan.derezinski@fuw.edu.pl}
\and Daniel Siemssen \at Department of Mathematics and Informatics, University of Wuppertal, Gaußstraße 20, 42119 Wuppertal, Germany.}

\maketitle

\abstract{% Should be no more than 200 words
  In the first part of our paper we analyze (non-autonomous) pseudo-unitary evolutions on Krein spaces.
  We describe various bisolutions and inverses of the corresponding Cauchy data operator.
  We prove that with boundary conditions given by a maximal positive and maximal negative space we can always associate an inverse, which can be viewed as a generalization of the usual \emph{Feynman propagator}.
  These constructions can be applied to a large class of globally hyperbolic manifolds.
  In this context, the Feynman propagator turns out to be a distinguished inverse of the Klein--Gordon operator.
  We discuss when the Feynman propagator can be viewed as the inverse of the Klein--Gordon operator in the operator-theoretic sense.
  In particular, if the conditions of asymptotic stationarity and stability are satisfied, we give heuristic arguments why we expect the Feynman propagator to be the boundary value of the resolvent of a self-adjoint realization of the Klein--Gordon operator.
  All these constructions find an important application in Quantum Field Theory on curved spacetimes.
  The Feynman propagator yields the expectation values of time-ordered products of fields between the ``in vacuum'' and the ``out vacuum'' -- the basic ingredient of Feynman diagrams.
  Hence it is sometimes called the \emph{in-out Feynman propagator}.
}

%%% Number equations within sections %%%%%%%%%%%%%%%%%%%%%
\renewcommand{\theequation}{\thesection.\arabic{equation}}
\numberwithin{equation}{section}
%%%%%%%%%%%%%%%%%%%%%%%%%%%%%%%%%%%%%%%%%%%%%%%%%%%%%%%%%%

%******************************************************************************%
\section{Introduction}

The wave equation or, more generally, the Klein--Gordon equation on curved Lorentzian manifolds is one of the classic topics of linear partial differential equations \cite{leray,friedlander,bar}.
One could expect that it is difficult to find new important concepts in this subject.
However, the present paper analyzes a few natural objects associated to the Klein--Gordon equation, which we believe are rather fundamental and to a large extent were overlooked in the mathematical literature until quite recently.
These objects include the \emph{(in-out) Feynman and anti-Feynman inverse} (or \emph{propagator}), and various well-posed realizations of the \emph{Klein--Gordon operator}.
A proof of the existence of the Feynman and anti-Feynman propagators under rather mild assumptions is probably the main result of our paper.

The Feynman and anti-Feynman propagators play a central role when we compute the scattering operator in Quantum Field Theory (QFT) on curved spacetimes.
In fact, the Feynman propagator is associated to each internal line of a Feynman diagram.

The Feynman propagator is thus crucial in the global approach to QFT, involving the whole spacetime, when one wants to compute the scattering operator.
Our results are thus complementary to a large mathematical literature devoted to QFT on curved spacetimes involving the \emph{local} approach \cite{brunetti-fredenhagen,bar-fredenhagen,hollands-wald1,hollands}.

%------------------------------------------------------------------------------%
\subsection{Propagators on the Minkowski spacetime}

As indicated above, the main motivation of our paper comes from \emph{curved} spacetimes.
However, it is natural to start from the \emph{flat} Minkowski spacetime.
Let us recall the most important ``propagators'' or ``two-point functions'' used in QFT, known from many textbooks, e.g., Appendix 2 of Bogoliubov--Shirkov \cite{bogoliubov} and Appendix C of Bjorken--Drell \cite{bjorken}:
\begin{subequations}
  \label{pro}
  \begin{align}
    \intertext{\itembullet the forward/backward or retarded/advanced propagator}
    G^{\vee/\wedge}(x,y) & \defn \frac{1}{(2\uppi)^4} \int\frac{\e^{-\im(x-y)\cdot p}}{p^2+m^2\pm\im 0\sgn(p_0)} \dif p, \label{pro1}
    \\[.0em]
    \intertext{\itembullet the Feynman/anti-Feynman propagator}
    G^{\mathrm{F}/\bar{\mathrm{F}}}(x,y) & \defn \frac{1}{(2\uppi)^4} \int\frac{\e^{-\im(x-y)\cdot p}}{p^2+m^2\mp\im0} \dif p, \label{pro2}
    \\[.0em]
    \intertext{\itembullet the Pauli--Jordan propagator}
    G^{\mathrm{PJ}}(x,y) & \defn \frac{\im}{(2\uppi)^3} \int\e^{-\im(x-y)\cdot p}\sgn(p_0)\delta(p^2+m^2) \dif p, \label{pro3}
    \\[.0em]
    \intertext{\itembullet the positive/negative frequency or particle/antiparticle bisolution}
    G^{(\pm)}(x,y) & \defn \frac{1}{(2\uppi)^3} \int\e^{-\im(x-y)\cdot p}\theta(\pm p_0)\delta(p^2+m^2) \dif p.
    \label{pro4}
  \end{align}
\end{subequations}
Mathematically,~\eqref{pro1}, \eqref{pro2} are distinguished inverses and~\eqref{pro3}, \eqref{pro4} are distinguished bisolutions of the Klein--Gordon operator $-\Box+m^2$.
We will call them jointly ``propagators''.
They satisfy a number of identities:
\begin{subequations}
  \label{iden}
  \begin{align}
    G^\PJ              & = G^\vee - G^\wedge                                \label{idi1} \\
                       & = \im G^{(+)} - \im G^{(-)},                       \label{idi2} \\
    G^\Feyn - G^\aFeyn & = \im G^{(+)} +\im G^{(-)},                        \label{idi3} \\
    G^\Feyn + G^\aFeyn & = G^\vee + G^\wedge,                               \label{idi4} \\
    G^\Feyn            & = \im G^{(+)} + G^\wedge = \im G^{(-)} + G^\vee,   \label{idi5} \\
    G^\aFeyn           & = -\im G^{(+)} + G^\vee = -\im G^{(-)} + G^\wedge. \label{idi6}
  \end{align}
\end{subequations}

Let us describe applications of these propagators to quantum field theory.
We will restrict ourselves to scalar fields.
We will consider both basic formalisms for scalar free fields -- the \emph{neutral} or \emph{real} and \emph{charged} or \emph{complex} formalism.

In the neutral formalism the basic object is a self-adjoint operator-valued distribution on spacetime $\hat\phi(x)$ satisfying the Klein--Gordon equation
\begin{equation*}
  (-\Box + m^2) \hat\phi(x) = 0.
\end{equation*}
Here are various ``two-point functions'' of these fields:
\begin{subequations}
  \begin{align*}
    \intertext{\itembullet commutation relations}
    [\hat\phi(x),\hat\phi(y)] & = -\im G^\PJ(x,y)\one,
    \\[.5em]
    \intertext{\itembullet vacuum expectation of products of fields}
    \cinner{\Omega}{\hat\phi(x)\hat\phi(y)\Omega} & = G^{(+)}(x,y),
    \\[.5em]
    \intertext{\itembullet vacuum expectation of direct/reverse time-ordered products of fields}
    \cinner[\big]{\Omega}{\torder{\hat\phi(x)\hat\phi(y)}\Omega} & = -\im G^{\Feyn}(x,y), \\
    \cinner[\big]{\Omega}{\atorder{\hat\phi(x)\hat\phi(y)}\Omega} & = \im G^{\aFeyn}(x,y).
  \end{align*}
\end{subequations}

In the charged formalism the field is non-self-adjoint.
It will be denoted with a different letter: $\hat\psi(x)$.
It also satisfies the Klein--Gordon equation
\begin{equation*}
  (-\Box + m^2) \hat\psi(x) = (-\Box + m^2) \hat\psi^*(x) = 0.
\end{equation*}
The ``two-point functions'' of the charged field are slightly more rich than in the neutral case:
\begin{subequations}
  \label{two}
  \begin{align}
    [\hat\psi(x),\hat\psi^*(y)]                                     & = -\im G^\PJ(x,y)\one,  \label{two1}  \\[.5em]
    \cinner{\Omega}{\hat\psi(x)\hat\psi^*(y)\Omega}                 & = G^{(+)}(x,y),         \label{two2a} \\
    \cinner{\Omega}{\hat\psi^*(x)\hat\psi(y)\Omega}                 & = G^{(-)}(x,y),         \label{two2b} \\[.5em]
    \cinner[\big]{\Omega}{\torder{\hat\psi(x)\hat\psi^*(y)}\Omega}  & = -\im G^{\Feyn}(x,y),  \label{two3}  \\[.5em]
    \cinner[\big]{\Omega}{\atorder{\hat\psi(x)\hat\psi^*(y)}\Omega} & = \im G^{\aFeyn}(x,y). \label{two4}
  \end{align}
\end{subequations}

We will use the name \emph{classical propagators} as the joint name for $G^\PJ$, $G^\vee$ and $G^\wedge$.
The functions $G^{(+)}$, $G^{(-)}$, $G^\Feyn$ and $G^\aFeyn$ express vacuum expectation values, therefore they will be jointly called \emph{non-classical propagators}.

%------------------------------------------------------------------------------%
\subsection{The Klein--Gordon equation on a curved spacetime and its classical propagators}
\label{The Klein-Gordon equation}

In the first part of the introduction (until
Subsect.~\ref{sub-Abstract Klein-Gordon operator}) we will discuss the
Klein--Gordon equation and related objects in a purely mathematical
setting, without a direct reference to  classical or quantum fields.

Consider a globally hyperbolic manifold $M$ equipped with a metric tensor $g=[g_{\mu\nu}]$ and its inverse $g^{-1}=[g^{\mu\nu}]$, an electromagnetic potential $A=[A_\mu]$ and a scalar potential $Y$.
Throughout most of the introduction we will assume that $g,A,Y$ are smooth -- this assumption will not be necessary in the rest of our paper.

Let $D_\mu \defn -\im\partial_\mu$.
Our paper is devoted to the \emph{Klein--Gordon operator}:
\begin{align}
  \label{eq:klein-gordon-half-}
  K & \defn -\abs{g}^{-\frac14} (D_\mu - A_\mu) \abs{g}^\frac12 g^{\mu\nu} (D_\nu - A_\nu)\abs{g}^{-\frac14} - Y.
\end{align}
(Note that we use the so-called half-density formalism, see Subsect.~\ref{Half-densities on a pseudo-Riemannian manifold}.)
The equation
\begin{equation}
  \label{KGG}
  Ku=0
\end{equation}
will be called the (homogeneous) \emph{Klein--Gordon equation}.

It has been shown by many authors that there exist unique distributions $G^\vee$ and $G^\wedge$ on $M\times M$ with the following properties.
If $f\in C_\mathrm{c}^\infty(M)$, then
\begin{align*}
  u^\vee(x)=\int G^\vee(x,y) f(y)\dif y,\quad u^\wedge(x)=\int G^\wedge(x,y) f(y)\dif y
\end{align*}
satisfy
\begin{align*}
  Ku^\vee=K u^\wedge=f
\end{align*}
and $\supp u^\vee$, $\supp u^\wedge$ are contained in the future, resp.\ past causal shadow of $\supp f$.
We can also generalize the Pauli--Jordan propagator \eqref{pro3} by using the identity~\eqref{idi1}:
\[ G^\PJ = G^\vee - G^\wedge. \]

Thus the classical propagators $G^\PJ$, $G^\vee$ and $G^\wedge$ are well defined (and also well known) for general Klein--Gordon equations on globally hyperbolic manifolds.

%------------------------------------------------------------------------------%
\subsection{Pseudounitary structure}
\label{Pseudounitary structure}

We will denote by $\cW_\sc$ the space of smooth space-compact functions $M\ni x\mapsto u(x)\in\mathbb{C}$ solving \eqref{KGG}.

Let $u,v\in \cW_\sc$.
Then it is easy to  see that
\begin{align*}
  \notag j^\mu(x;\bar u,v) \defn & -\bar{u(x)} g^{\mu\nu}(x)|g|^{\frac14}(x)\bigl(D_\nu-A_\nu(x)\bigr)|g|^{-\frac14}(x)v(x) \\  & -\bar{\bigl( D_\nu -A_\nu(x)\bigr)|g|^{-\frac14}(x)u(x)}g^{\mu\nu}(x)|g|^{\frac14}(x)v(x).
\end{align*}
is a conserved current, that is $\partial_\mu j^\mu=0$.
Hence a Hermitian form on $\cW_\sc$ called sometimes the \emph{charge}
\begin{equation}
  \label{symplec0}
  \cinner{u}{Qv} \defn \int_{\mathcal{S}} j^\mu(x,\bar
  u,v)\dif\mathrm{s}_\mu(x)
\end{equation}
does not depend on the choice of a Cauchy surface $\mathcal{S}$ (where $\dif\mathrm{s}_\mu(x)$ denotes the natural measure on $\mathcal{S}$ times the normal vector).

Note that many authors instead of the charge prefer to use the symplectic form on $\cW_\sc$ given by the imaginary part of \eqref{symplec0}.

The charge form $Q$ is not positive definite: it contains vectors with positive and negative charge.
The space $\cW_\sc$ can be decomposed in many ways in a direct sum of a maximally positive space and a maximally negative space, both orthogonal to one another in the sense of the charge form.
Every such a decomposition can be encoded with help of an \emph{admissible involution} $S_\bullet$: an operator on $\cW_\sc$ satisfying
\begin{equation}
  \label{admis}
  S_\bullet^2=\one,\quad \cinner{u}{QS_\bullet v} = \cinner{S_\bullet u}{Q v}\quad \text{is positive}.
\end{equation}
As every involution, $S_\bullet$, determines a pair projections, so that
\begin{equation*}
  \Pi_\bullet^{(+)}+\Pi_\bullet^{(-)}=\one,\quad S_\bullet=\Pi_\bullet^{(+)}-\Pi_\bullet^{(-)}.
\end{equation*}
The ranges of $\Pi_\bullet^{(+)}$ and $\Pi_\bullet^{(-)}$ are $Q$-orthogonal and maximally positive, resp.\ negative.
They will be used to define Fock states in QFT.

%------------------------------------------------------------------------------%
\subsection{Non-classical propagators on curved spacetimes}
\label{Non-classical propagators on curved spacetimes}

In the literature it is often claimed that it makes no sense to ask for distinguished non-classical propagators on generic spacetimes.
The main message of our paper disputes this statement.
We will argue that for a large class of non-stationary spacetimes there exist physically relevant distinguished non-classical propagators.

We will consider two types of spacetimes.
\begin{enumerate}
  \item the \emph{slab geometry case}, where $M$ can be identified with $[t_-,t_+]\times\Sigma$, where $[t_-,t_+]$ is a finite interval describing time;
  \item the \emph{unrestricted time case}, where $M$ is a globally hyperbolic spacetime with no boundary, equipped with the Klein--Gordon equation which is \emph{asymptotically stationary} and \emph{stable} in the future and in the past.
\end{enumerate}

By the asymptotic stationarity we will mean that one can identify $M$ with $\mathbb{R}\times\Sigma$ such that $g$, $A$ and $Y$ converge as $t\to\pm\infty$ to limiting values sufficiently fast.
Often, for simplicity we will just assume that there exists $T>0$ such that for $\pm t>T$ the Klein--Gordon operator is stationary.

By stability we mean the positivity of the Hamiltonian.
Thus in Case (2) we assume that for large $t$ the Hamiltonian is positive.

Case (2) is probably more interesting both physically and mathematically.
Nevertheless, it is instructive to compare Case (1) with Case (2).

In our opinion the assumption of asymptotic stationarity and stability in Case (2) is natural from the physical point of view.
Asymptotic stationarity is a necessary condition to apply the ideas of scattering theory, which is the main means of extracting useful information from QFT.
Stability is satisfied in typical physics applications.

Non-classical propagators are associated with special boundary conditions in the past and future.
After quantization, they will be used to encode Fock representations of Canonical Commutation Relation.
In this subsection we will describe them without a reference to quantum fields, as a part of an operator-theoretic analysis of the Klein--Gordon operator.

Consider first Case (1).
In order to define non-classical propagators at time $t_+$ and $t_-$ we fix a pair of admissible involutions $S_+$, resp.\ $S_-$.
They lead to corresponding projections $\Pi_\pm^{(+)}$ and $\Pi_\pm^{(-)}$.
The choice of $S_\pm$ is to a large extent arbitrary, although one can
argue that  those satisfying the so-called \emph{Hadamard property} \cite{radzikowski,fewster-verch:necessary,wald:backreaction} are more physical than the others.

In Case (2)  it is natural to select the  admissible involutions $S_\pm$ given by the sign of the generator of the dynamics at $t\to\pm\infty$.
(Recall that we assume that the evolution is asymptotically stationary and stable.)
Thus $\Pi_\pm^{(+)}$ is the projection onto ``in/out positive frequency modes'' and $\Pi_\pm^{(-)}$ onto ``in/out negative frequency modes''.

The projections $\Pi_\pm^{(+)}$ and $\Pi_\pm^{(-)}$ naturally define two pairs of bisolutions of the Klein--Gordon equation, $G_\pm^{(+)}$ and $G_\pm^{(-)}$.
The identity~\eqref{idi2} now splits into two independent identities
\begin{equation}
  \label{idi2-}
  G^\PJ = \im G_\pm^{(+)} - \im G_\pm^{(-)},
\end{equation}

It is less obvious that the Feynman and anti-Feynman propagators also possess natural unique generalizations.
The Feynman propagator $G^{\Feyn}$ can be described as the inverse of the Klein--Gordon operator corresponding to the Cauchy data in $\Pi_+^{(+)}$ for $t=t_+$ or $t\to+\infty$, and in $\Pi_-^{(-)}$ for $t=t_-$ or $t\to+\infty$.
The anti-Feynman propagator $G^{\aFeyn}$ is the inverse of the Klein--Gordon operator corresponding to the Cauchy data in $\Pi_+^{(-)}$ for $t=t_+$ or $t\to+\infty$, and in $\Pi_-^{(+)}$ for $t=t_-$ or $t\to-\infty$.
Clearly, in Case (1) $G^{\Feyn}$ and $G^{\aFeyn}$ depend on the choice of $S_+, S_-$.
In Case (2) they are defined uniquely.

Note that $G^{\Feyn}$ are $G^{\aFeyn}$ are sometimes called the
\emph{  in-out}, resp.\ \emph{out-in Feynman propagators} \cite{fukuma}, to distinguish them from some other, non-canonical proposals, such as those mentioned below in (\ref{idi5.}).
We will sometimes use  these terms to stress their physical meaning.
However, in our opinion, when one writes \emph{the Feynman propagator} using the definite article \emph{the}, there should be no doubt that $G^\Feyn$ is meant.

Note that in the generic case the relations \eqref{iden} are not satisfied, except for \eqref{idi1} and the two versions of \eqref{idi2}, see also \eqref{idi2-}.

%------------------------------------------------------------------------------%
\subsection{Well-posedness/self-adjointness of the Klein--Gordon operator}
\label{Self-adjointness of the Klein-Gordon operator}

Formally, the Feynman and anti-Feynman propagators are inverses of the Klein--Gordon operator.
One can ask whether this can be interpreted in a more precise operator-theoretic sense.
We will see that this is often true, however the situation is quite different in Case (1) and (2).

It is easy to see that the Klein--Gordon operator $K$ is Hermitian (symmetric) on, say, $C_\mathrm{c}^\infty(M)$.
In Case (1) $K$ is obviously not essentially self-adjoint -- it possesses many extensions parametrized by boundary conditions at $t=t_+$ and $t=t_-$.
The admissible involutions $S_+$ and $S_-$ determine special boundary conditions that lead to closed realizations $K^\Feyn=(K^\aFeyn)^*$, so that we have
\begin{equation*}
  G^\Feyn=(K^\Feyn)^{-1},\quad G^\aFeyn=(K^\aFeyn)^{-1}=(G^\Feyn)^*.
\end{equation*}

Note that $K^\Feyn$ and $K^\aFeyn$ are not self-adjoint.
Clearly, they are invertible, and hence \emph{well-posed}.
\footnote{%
  An operator which has a non-empty resolvent set is  called \emph{well-posed}, see \cite{EE}.
  For instance, self-adjoint operators are well-posed.
}

In Case (2) there seems to be no need for boundary conditions and one can expect that $K$ is often essentially self-adjoint on $C_\mathrm{c}^\infty(M)$.
Suppose that $K$ is essentially self-adjoint and let us denote by $K^{\mathrm{s.a.}}$ its self-adjoint extension.
Then we can expect that
\begin{equation}
  \label{KG2}
  G^\Feyn =\lim_{\epsilon\searrow0} (K^{\mathrm{s.a.}}-\im\epsilon)^{-1},
  \quad
  G^\aFeyn = \lim_{\epsilon\searrow0}(K^{\mathrm{s.a.}}+\im\epsilon)^{-1},
\end{equation}
in the sense of quadratic forms on an appropriate weighted space, e.g., $\langle t\rangle^{-s} L^2(M)$ with $s>\frac12$.

The above conjectures are obviously true on the Minkowski space.
They also hold in the stationary case.
In the absence of the electrostatic potential this is straightforward, with the electrostatic potential it requires some work, see \cite{derezinski-siemssen:static}.
There exists also recent interesting papers by Vasy~\cite{vasy:selfadjoint} and Taira--Nakamura \cite{nakamura,nakamura2,nakamura3}, where all these properties are proven for some classes of spacetimes, mostly assuming the asymptotic Minkowskian property and non-trapping conditions.

The question of the self-adjointness of the Klein--Gordon operator is beyond the scope of our paper.
It is much more difficult to answer and most probably requires additional assumptions (like non-trapping conditions).
However, as the analysis of our paper shows, the Feynman inverse is well-defined for essentially all asymptotically stable and stationary spacetimes.

Note that if asymptotic stability and stationarity does not hold, but $K$ can be interpreted as a self-adjoint operator, then one can try to use \eqref{KG2} as the \emph{definition} of the Feynman/anti-Feynman propagators.

%------------------------------------------------------------------------------%
\subsection{Reduction to a 1st order  equation for the Cauchy data}
\label{Reduction to a 1st order  equation for the Cauchy data}

In order to compute non-classical (actually, also classical) propagators, it is useful to convert the Klein--Gordon equation into a 1st order evolution equation on the phase space describing Cauchy data.
To this end, we fix a decomposition $M=I\times\Sigma$, where $I=[t_-,t_+]$ or $I=\mathbb{R}$.
We assume that $M$ is Lorentzian and $\Sigma$ is Riemannian.
We will use Latin letters for spatial indices.
We introduce
\begin{align*}
  h=[h_{ij}]=[g_{ij}], & \quad h^{-1}=[h^{ij}], \\ \beta^j \defn g_{0i}h^{ij}, & \quad \alpha^2 \defn g_{0i}h^{ij}g_{j0}-g_{00}.
\end{align*}
Note that $[h_{ij}]$, $[h^{ij}]$ are positive definite and
$\alpha^2>0$.
Set
\begin{align*}
  L & \defn \abs{g}^{-\frac14} (D_i - A_i) \abs{g}^\frac12 h^{ij} (D_j - A_j)\abs{g}^{-\frac14} + Y, \\ W & \defn \beta^iD_i-A_0+\beta^iA_i+\frac{\ri}{4}\abs{g}^{-1}\abs{g}_{,0} -\frac{\ri}{4}\beta^i\abs{g}^{-1}\abs{g}_{,i}.
\end{align*}
Then the Klein--Gordon operator and the charge can be written as
\begin{align}
  K              & =(D_0+W^*)\frac{1}{\alpha^2}(D_0+W)-L,\label{hilbert1a}
  \\\notag
  \cinner{u}{Qv} & =\int_\Sigma
  \bar{u(t,\vec x)}\frac{1}{\alpha^2(t,\vec x)}\bigl(D_0+W(t,\vec x)\bigr)
  v(t,\vec x)\dif\vec x                                                    \\ & \quad+
    \int_\Sigma
    \bar{\bigl(D_0+W(t,\vec x)\bigr)
      u(t,\vec x)}\frac{1}{\alpha^2(t,\vec x)}
    v(t,\vec x)  \dif\vec x. \notag
\end{align}
Therefore, the Klein--Gordon equation  $Ku=0$ can be rewritten as a 1st order equation for the Cauchy data on $\Sigma$:
\[
  \bigl(\partial_t+\im B(t)\bigr)w=0,
\]
where
\begin{align*}
  B(t)=
        \begin{bmatrix}
          B_{11}(t) & B_{12}(t) \\B_{21}(t) & B_{22}(t)
        \end{bmatrix}
   & \defn
           \begin{bmatrix}
             W(t) & \alpha^2(t) \\L(t) & W^*(t)
           \end{bmatrix}
  ,                                                                \\ w=
  \begin{bmatrix}
    w_1 \\ w_2
  \end{bmatrix}
   & \defn
           \begin{bmatrix}
             u \\ -\alpha^{-2}\bigl(-\ri\partial_t+W(t)\bigr)u
           \end{bmatrix}
  .
\end{align*}
The current preserved by the dynamics is given by the matrix
\[Q=
  \begin{bmatrix}
    0 & \one \\\one & 0
  \end{bmatrix}
  .
\]
It is natural to introduce the \emph{classical Hamiltonian}
\begin{align*}
  H(t) & = QB(t) =
  \begin{bmatrix}
    L(t) & W^*(t) \\ W(t) & \alpha^2(t)
  \end{bmatrix}
\end{align*}
and the \emph{Cauchy data operator}
\begin{equation*}
  M \defn \partial_t+\im B(t).
\end{equation*}
(The notational clash with the occasionally appearing manifold $M$ should cause no confusion.)

We will say that an operator $E$ is a \emph{bisolution}/\emph{inverse or Green's operator of $M$} if it satisfies
\begin{alignat*}{2}
  M E w & = 0,                   & \quad E M w & = 0,
  \\\text{resp.}\quad
  M E w & = w,
        & \quad
  E M w & = w
\end{alignat*}
for a large class of functions $t\mapsto w(t)=\begin{bmatrix}w_1(t)\\w_2(t)\end{bmatrix}$.
An inverse/bisolution of $M$ can be written as a $2\times2$ matrix
\begin{align*}
  E(t,s)=
          \begin{bmatrix}
            E_{11}(t,s) & E_{12}(t,s) \\E_{21}(t,s) & E_{22}(t,s)
          \end{bmatrix}
  .
\end{align*}
If we set
\begin{equation}
  \label{rightupper}
  G(t,s) \defn \ri E_{12}(t,s),
\end{equation}
then $G$ is formally a bisolution/inverse of the Klein--Gordon operator:
\begin{alignat*}{2}
  KG u & = 0,                     & \quad GK u & = 0,
  \\\text{resp.}\quad
  KG u & = u,
        & \quad
  GK u & = u,
\end{alignat*}
 for a large class of spacetime functions $x\mapsto u(x)$.
%------------------------------------------------------------------------------%
\subsection{Stationary case}
\label{Stationary case}
The theory of propagators for the Klein--Gordon equation greatly simplifies in the stable stationary case.
More precisely, suppose for the moment that $M=\mathbb{R}\times\Sigma$ and that $g$, $A$ and $Y$, hence also $B(t) \nfed B$ and $H(t) \nfed H$, do not depend on time $t$.
Assume also that $B$ is stable, which means that $H$ is positive and has a zero nullspace.

First of all, the Cauchy data can then be organized in a Hilbert space.
Actually, using the Hamiltonian $H$ and the generator $B$, one can construct a whole scale of natural Hilbert spaces $\cW_\lambda, \lambda\in\mathbb{R}$, which can be used to describe the Cauchy data.
Among them three have a special importance.
The \emph{energy space}, $\cW_{\frac12}$, has the scalar product given by the Hamiltonian $H$.
There is also the \emph{dual energy space}, which we denote by $\cW_{-\frac12}$, with the scalar product given by $(QHQ)^{-1}$.
Finally, interpolating between $\cW_{\frac12}$ and $\cW_{-\frac12}$, we obtain the \emph{dynamical space} $\cW_0$, which in addition to the scalar product has a natural pseudo-unitary structure given by the charge $Q$, and which is then used for quantization.

The operator $B$ can be interpreted as self-adjoint on all members of the scale $\cW_\lambda$.
Therefore, the dynamics is simply defined as $\e^{-\im tB}$ and preserves the scale $\cW_\lambda$.

Then we can define the propagators on the level of the Cauchy data as follows:
\begin{align*}
  E^\PJ(t,s)    & \defn \e^{-\im(t-s)B}, \\
  E^\vee(t,s)   & \defn \theta(t-s) \e^{-\im(t-s)B}, \\
  E^\wedge(t,s) & \defn -\theta(s-t) \e^{-\im(t-s)B}, \\
  E^{(+)}(t,s)  & \defn \e^{-\im(t-s)B}\one_{[0,\infty[}(B), \\
  E^{(-)}(t,s)  & \defn  \e^{-\im(t-s)B}\one_{[-\infty,0[}(B), \\
  E^\Feyn(t,s)  & \defn \e^{-\im(t-s)B}\bigl(\theta(t-s)\one_{[0,\infty[}(B) -\theta(s-t)\one_{]-\infty,0]}(B)\bigr), \\
  E^\aFeyn(t,s) & \defn \e^{-\im(t-s)B}\bigl(\theta(t-s)\one_{]-\infty,0]}(B) -\theta(s-t)\one_{[0,\infty[}(B)\bigr).
\end{align*}
At least formally, $E^\vee, E^\wedge, E^\Feyn, E^{\aFeyn}$ are inverses and $E^\PJ, E^{(+)}, E^{(-)}$ are bisolutions of $M$.

Then we set
\bes\label{pkc3}
\begin{align}
  G^\bullet & \defn \im E_{12}^\bullet,\quad \bullet=\PJ,\vee,\wedge,\Feyn,\aFeyn, \\
  G^{(+)} & \defn E_{12}^{(+)},\quad G^{(-)} \defn - E_{12}^{(-)},
\end{align}
\ees
(hence we use \eqref{rightupper} or its minor modifications) obtaining the generalizations of the propagators from the Minkowski space to the general stationary case.

Note that in the stationary case all the identities~\eqref{iden} still hold.

%------------------------------------------------------------------------------%
\subsection{Evolution on Hilbertizable spaces}

In the generic situation the generator $B(t)$ depends on time.
This leads both to technical and conceptual problems.

First, in order to do functional analysis we need topology.
However the Hilbert spaces $\cW_\lambda$ are no longer uniquely defined.
It seems reasonable to assume that Cauchy data are described by elements of a certain nested pair of \emph{Hilbertizable spaces} $\cW_1\subset\cW_0$, which does not change throughout the time.
(A Hilbertizable space is a space with a topology of a Hilbert space, but without a fixed scalar product.)

We devote the whole Sect.~\ref{sec:Evolutions} to a construction of cousins of all propagators described in Subsect.~\ref{Stationary case} in the setting of an evolution on Hilbertizable spaces (without assuming the existence of a charge form preserved by the dynamics).

The construction of the dynamics in the stationary case was straightforward.
Constructing the evolution generated by a time-dependent generator $B(t)$ is much more technical.
To this end we use an old result of Kato~\cite{kato:hyperbolic}.
In this approach one assumes that the Cauchy data are described by a nested pair of Hilbertizable spaces, and the generators are self-adjoint with respect to certain time-dependent scalar products compatible with both Hilbertizable structures.
Besides, one needs to make some technical assumptions, which essentially say that the generator of the evolution does not vary too much in time, so that all the time it acts in the same nested pair of Hilbertizable spaces.
Using this evolution it is easy to   define  $ E^\vee, E^\wedge, E^\PJ$,
which are the analogs of classical
propagators on the level of the Cauchy data operator.

In order to define ``non-classical'' propagators we need to choose the incoming and outgoing ``particle/antiparticle projections'', which as we discussed in Subsect.~\ref{Non-classical propagators on curved spacetimes} are determined by specifying involutions $S_\pm$.
This leads to a straightforward definition of ``in/out particle
bisolutions'' $E_\pm^{(+)}$ and ``in/out antiparticle bisolutions''
$E_\pm^{(-)}$,  which are two distinguished analogs of particle and antiparticle bisolutions.
(The plus/minus in the parentheses correspond to particles/antiparticles; the plus/minus without parentheses correspond to the future/past.)

What is more interesting, we can also try to define a natural Feynman and anti-Feynman propagator, denoted $E^\Feyn, E^{\aFeyn}$.
In the general Hilbertizable setting the existence of these propagators is not guaranteed and requires an extra condition that we call the \emph{asymptotic complementarity}.

%------------------------------------------------------------------------------%
\subsection{Pseudo-unitary dynamics}

As discussed above, the evolution of Cauchy data for the Klein--Gordon equation preserves the \emph{charge form} -- a natural Hermitian indefinite scalar product.
On the technical level it is convenient to assume that the charge form is compatible with the Hilbertizable structure.
More precisely, we need the structure of a \emph{Krein space}.

Note, in parenthesis, that we prefer to work in the complex setting of a Krein space instead of the real setting of a symplectic space, perhaps more common in the literature.
If the dynamics commutes with the complex conjugation, e.g., if there are no electromagnetic potentials, then by restricting our dynamics to the real space we can go back to the real symplectic setting.

In Sect.~\ref{Hilbertizable pseudo-unitary spaces} we discuss propagators in the context of a pseudo-unitary evolution on a Krein space.
We note an important property of Krein spaces: every maximally positive subspace is complementary to every maximally negative subspace.
By this property, if the boundary conditions are given by admissible involutions (see \eqref{admis}), then the condition of asymptotic complementarity is automatically satisfied.
Therefore the in-out Feynman and anti-Feynman propagator always exist.
The existence of these two propagators under rather general conditions is probably the main result of our paper.

To sum up, in the context of an evolution on Krein spaces we are able to define the whole family of ``propagators'' on the level of the Cauchy data operator:
\begin{equation}
  \label{proa1}
  E^\vee, E^\wedge, E^\PJ, E_\pm^{(+)}, E_\pm^{(-)}, E^\Feyn, E^{\aFeyn}.
\end{equation}

%------------------------------------------------------------------------------%
\subsection{Abstract Klein--Gordon operator}
\label{sub-Abstract Klein-Gordon operator}

Let $L(t)$, $\alpha(t)$ and $W(t)$ be time-dependent operators on a Hilbert space $\cK$.
We assume that $L(t)$ is positive, $\alpha(t)$ is positive and invertible, plus some additional technical assumptions.
By an \emph{abstract Klein--Gordon operator} we mean an operator of the form
\begin{equation}
  K \defn \bigl(D_t+W^*(t)\bigr)\frac{1}{\alpha^2(t)} \bigl(D_t+W(t)\bigr)-L(t),\label{kleingordon-}
\end{equation}
acting on the Hilbert space $L^2(I,\cK)\simeq L^2(I)\otimes\cK$.

In our applications, $\cK$ is the Hilbert space $L^2(\Sigma)$ (the space of functions on a spacelike Cauchy surface).
Besides, $\alpha^2(t)$ is related to the metric tensor, $W(t)$ consists mostly of $A^0$, and $L(t)$ is a magnetic Schr\"odinger operator on $\Sigma$.
The usual Klein--Gordon operator \eqref{eq:klein-gordon-half-} on the Hilbert space $L^2(M)\simeq L^2(I)\otimes L^2(\Sigma)$ has the form of \eqref{kleingordon-}, as discussed in Subsect.~\ref{Reduction to a 1st order equation for the Cauchy data}, see \eqref{hilbert1a}.

The operator (\ref{kleingordon-}) is second order in $t$.
It can be viewed as a 1-dimensional magnetic Schr\"odinger operator with operator-valued potentials.
To describe its Cauchy data, under our assumptions it is natural to introduce the scale of Hilbertizable spaces
\begin{equation}
  \cW_\lambda \defn L(t)^{-\frac\lambda2-\frac14}\cK\oplus L(t)^{-\frac\lambda2+\frac14}\cK,
\end{equation}
where usually $|\lambda|\leq\frac12$.
Note that $\cW_0$ has a natural Krein structure.
Using the formalism of Sect.~\ref{Hilbertizable pseudo-unitary spaces} we construct various propagators \eqref{proa1}.
Then, using a slight extension of \eqref{pkc3}, we pass from the Cauchy data propagators to spacetime propagators:
\begin{align*}
  G^\bullet & \defn \im E_{12}^\bullet,\quad \bullet=\PJ,\vee,\wedge,\Feyn,\aFeyn; \\ G_\pm^{(+)} & \defn E_{\pm,12}^{(+)},\quad G_\pm^{(-)} \defn - E_{\pm,12}^{(-)}.
\end{align*}

Feynman and anti-Feynman inverses of the Klein Gordon operator can be viewed as some special inverses of its well-posed realizations.

In Case (1) the Klein--Gordon operator is Hermitian (symmetric) but not self-adjoint.
It possesses many well-posed realizations defined by boundary conditions.
In particular, each Feynman-type and anti-Feynman-type inverse defines a certain well-posed realization.
Note that these realizations are always non-self-adjoint.

In Case (2) the situation is more difficult and not fully understood.
Clearly, the Feynman and anti-Feynman inverses are not bounded operators.
It is natural to conjecture that under quite general conditions they are boundary values of a certain distinguished self-adjoint realization of the abstract Klein--Gordon operator.
This conjecture has been partly proven in \cite{vasy:selfadjoint} and \cite{nakamura,nakamura2,nakamura3}.

%------------------------------------------------------------------------------%
\subsection{Bosonic quantization}
\label{Intro Bosonic quantization}

Let us now describe applications of the above mathematical analysis of the Klein--Gordon equation to Quantum Field Theory.

Various authors use different formalisms when introducing quantum fields.
These formalisms are essentially equivalent, however it may often be difficult for the reader to translate the concepts from one formalism to another.
Therefore, we start our discussion with some remarks about various approaches to quantization.
We restrict ourselves to linear bosonic theories.

Bosonic quantization can be divided into two steps:

\begin{enumerate}
  \item First we choose an algebra of observables satisfying canonical commutation relations corresponding to a classical phase space.
  \item Then we select a representation of this algebra on a Hilbert space.
\end{enumerate}

The first step can be presented in several ways, which superficially look differently.
In particular, we can use the real or complex formalism:
\begin{itemize}
  \item The \emph{real} or \emph{neutral} formalism starts from a real space equipped with an antisymmetric form (it does not have to be symplectic, that is, non-degenerate) and leads to a self-adjoint field $\hat\Phi$.
  \item The \emph{complex} or \emph{charged} formalism starts from a complex space equipped with a Hermitian form (sometimes called the charge) and leads to a pair of non-self-adjoint fields $\hat\Psi,\hat\Psi^*$.
\end{itemize}
The neutral formalism is in a sense more general, since every charged particle can be understood as a pair of neutral particles in the presence of a $U(1)$ symmetry.

In both the real and the complex approach, we can use the one-component formalism or the two-component formalism.
In the two-component formalism we split the fields into ``positions'' and ``momenta''.
This splitting is typical for Quantum Field Theory.

Thus we can distinguish four formalisms of bosonic quantization, which can be summarized in the following table:

\medskip

\begin{tabular}{llll} & Real (or neutral) fields & Complex (or charged) fields \\\hline \\ 1-component \quad & $[\hat\Phi(w), \hat\Phi(w')]=\im w\omega w'$ & $[\hat\Psi(w), \hat\Psi^*(w')]= \cinner{w}{Qw'}$\\ formalism\quad & $\omega$ is an antisymmetric form & $Q$ is a Hermitian form\\  & on a real space & on complex space\\\hline\\ 2-component\quad & $[\hat\phi(u),\hat\pi(v)]=\im \rinner{u}{v}$ & $[\hat\psi(u), \hat\eta^*(v)]=\im \cinner{u}{v}$\\ formalism\quad &  & $[\hat\psi^*(u), \hat\eta(v)]=\im \cinner{v}{u}$\\  & $\rinner{\cdot}{\cdot}$ is a bilinear scalar & $\cinner{\cdot}{\cdot}$ is a sesquilinear scalar & \\  & product on a real space & product on a complex space
\end{tabular}

\medskip

If the number of degrees of freedom is finite, by the Stone-von Neumann Theorem all irreducible representations of the CCR over a symplectic space are equivalent.
If the number of degrees of freedom is infinite this is not true, and we have to select a representation.
Usually this is done by fixing a state on the algebra of observables and going to the GNS representation.
In most applications to QFT one chooses a \emph{pure quasi-free state}, and then this representation naturally acts on a \emph{bosonic Fock space}.

There are several ways to describe pure quasi-free states (called also \emph{Fock states}).
As we mentioned above, in our paper these states are described by admissible involutions on the phase space, see \eqref{admis}.
In the real formalism one needs to assume in addition that $S_\bullet$ is \emph{anti-real} (i.e., $\conj{S_\bullet} = -S_\bullet$).

%------------------------------------------------------------------------------%
\subsection{Classical Field Theory on curved spacetimes}

Let us now briefly describe linear Classical Field Theory on curved spacetimes.
In the introduction, for brevity we restrict ourselves to complex scalar fields.
In Sects.~\ref{Bosonic quantization} and~\ref{Quantum Field Theory on curved space-times} we will consider also real scalar fields.

The phase space of our system is $\cW_\sc$ introduced in Subsect.~\ref{Pseudounitary structure}.
Then the \emph{field} $\psi$ and its complex conjugate $\psi^*$ are interpreted as the linear, resp.\ antilinear functional on $\cW_\sc$ given by
\begin{equation*}
  \rinner{\psi(x)}{u} \defn u(x),
  \quad
  \rinner{\psi^*(x)}{u} \defn \bar{u(x)},
  \quad
  u\in\cW_\sc.
\end{equation*}
Clearly, the fields $\psi$ and $\psi^*$ satisfy the Klein--Gordon equation:
\begin{equation}
  \label{KG1}
  K\psi(x)=K\psi^*(x)=0.
\end{equation}

As described in Subsect.~\ref{The Klein-Gordon equation}, the space $\cW_\sc$ is naturally a symplectic vector space.
The Poisson bracket of the fields was first computed by Peierls and can be expressed by the Pauli--Jordan propagator:
\begin{align*}
  \{\psi(x),\psi(y)\}   & = \{\psi^*(x),\psi^*(y)\}=0, \\
  \{\psi(x),\psi^*(y)\} & = - G^\PJ(x,y).
\end{align*}

%------------------------------------------------------------------------------%
\subsection{Quantum Field Theory on curved spacetimes}
\label{Propagators on curved spacetimes--introductory remarks}

Quantization of the classical theory defined by \eqref{KG1} is performed in two steps.
First we replace the classical fields $\psi(x),\psi^*(x)$ by $\hat\psi(x)$, $\hat\psi^*(x)$ interpreted as distributions with values in a $*$-algebra satisfying the following commutation relations, a generalization of the identity~\eqref{two1}:
\begin{align*}
  [\hat\psi(x),\hat\psi(y)] & =[\hat\psi^*(x),\hat\psi^*(y)]=0, \\
  [\hat\psi(x),\hat\psi^*(y)] & = -\im G^\PJ(x,y)\one.
\end{align*}
They also satisfy the Klein--Gordon equation
\begin{equation*}
  K\hat\psi(x)=K\hat\psi^*(x)=0.
\end{equation*}

The second step consists in choosing a representation of the CCR.

If the Klein--Gordon equation is stationary and stable, then there exists a natural Fock representation given by the so-called \emph{positive energy quantization} \cite{derezinski-siemssen:static,derezinski-gerard}.
As we discussed above, in the stationary case all propagators described for the Minkowski space have a natural generalization and the relations~\eqref{two} still hold.

If the Klein--Gordon equation is not necessarily stationary and we selected the involutions $S_-$ and $S_+$ (see Subsect.~\ref{Non-classical propagators on curved spacetimes}), we consider the Fock representations corresponding to $S_-$ and $S_+$, the so-called \emph{in} and \emph{out representations}.
As described above, we obtain two pairs of bisolutions $G_\pm^{(+)}$ and $G_\pm^{(-)}$, which after quantization describe the vacuum expectation values of products of fields in the in and out Fock representation.
More precisely, the identities~\eqref{two2a} and \eqref{two2b} split into two identities:
\bes
\begin{align}
  \label{two2a.}
  \cinner{\Omega_\pm}{\hat\psi(x)\hat\psi^*(y)\Omega_\pm} & = G_\pm^{(+)}(x,y),
  \\
  \label{two2b.}
  \cinner{\Omega_\pm}{\hat\psi^*(x)\hat\psi(y)\Omega_\pm} & = G_\pm^{(-)}(x,y).
\end{align}
\ees

After quantization the Feynman and anti-Feynman propagators satisfy slight modifications of the identities~\eqref{two3} and \eqref{two4}:
\bes\label{tw3}
\begin{align}
  \label{two3.}
  \frac{\cinner[\big]{\Omega_+}{\torder{\hat\psi(x)\hat\psi^*(y)}\Omega_-}}{\cinner{\Omega_+}{\Omega_-}}  & = -\im G^{\Feyn}(x,y),
  \\
  \label{two4.}
  \frac{\cinner[\big]{\Omega_-}{\atorder{\hat\psi(x)\hat\psi^*(y)}\Omega_+}}{\cinner{\Omega_-}{\Omega_+}} & = \im G^{\aFeyn}(x,y).
\end{align}
\ees

Strictly speaking, (\ref{tw3}) are true if the \emph{Shale condition}
for the in and out states holds, so that the vectors $\Omega_-$ and
$\Omega_+$ belong to the same representation of CCR.
If the Shale condition is violated, the left hand sides do not make sense.
However, the right hand sides are well defined.
This is an example of a renormalization in QFT: we are able to compute a quantity, which at the first sight is ill-defined.

In a large part of the literature the so-called Hadamard property is considered to be the main criterion for a physically satisfactory state \cite{radzikowski,fewster-verch:necessary,wald:backreaction}.
Note that if one assumes enough smoothness, then in Case 2 the two-point functions $G_\pm^{(+)}(x,y),G_\pm^{(-)}(x,y)$ automatically satisfy the Hadamard property.
This is a non-trivial fact proven by Gérard--Wrochna \cite{gerard-wrochna:inout}, see also \cite{fulling-narcowich-wald}.

In the slab geometry case we can choose non-Hadamard states for $\Omega_-$ and $\Omega_+$ if we insist.
However, there are good arguments saying that Hadamard states are ``more physical'' than the others.
Actually, one could argue that the main argument for the Hadamard condition in Case (1) is the fact that by \cite{gerard-wrochna:inout} it is automatic in the Case (2) scenario.

%------------------------------------------------------------------------------%
\subsection{Hadamardists and Feynmanists}

Oversimplifying and exaggerating, one can distinguish two approaches to QFT on curved spacetimes: let us call them the \emph{Hadamardist} and \emph{Feynmanist} approach.

The Hadamardist approach is mostly represented by researchers with a mathematical or General Relativity background.
It stresses that QFT should be considered in a local fashion, often restricting attention to a small causally convex region of a spacetime.
In such a setting it is impossible to choose a distinguished state $\Omega$ on the algebra of observables in a locally covariant way, see e.g.~\cite{fewster-verch:spass,hollands}.
This approach stresses that one has a lot of freedom in choosing a state and argues that physical states should satisfy the so-called \emph{Hadamard property}.
In the modern mathematical literature this condition is usually described in an elegant but abstract way with the help of the \emph{wave front set} \cite{radzikowski} of their two-point function $\cinner[\big]{\Omega}{\hat\psi(x)\hat\psi^*(y) \Omega}$.
Using this two-point function one can define the (formal) $*$-algebra of observables and perturbatively renormalize local polynomials of fields.
A particularly clear explanation of this philosophy can be found in Apps.~B and~D1 of \cite{hollands}.

In the Feynmanist approach the main goal is usually to compute scattering amplitudes, cross sections, etc. see e.g.~\cite{dewitt:curved}.
Such computations are typically based on path integrals and Feynman diagrams.
Clearly, in this case it is indispensable to look at the spacetime globally, so that one can define an in and out state, and the assumption of asymptotic stability and stationarity in the past and future is rather natural.
There is no need to worry about the Hadamard property. As we mentioned
above, it is automatic under rather weak assumptions thanks to the results of Fulling--Narcowich--Wald \cite{fulling-narcowich-wald} and Gérard--Wrochna \cite{gerard-wrochna:inout}.
The (in-out) Feynman propagator is needed to compute perturbatively the scattering operator in terms of Feynman diagrams.

There is no contradiction between these two approaches and both have their philosophical merits.
Our paper clearly belongs to the latter approach (in spite of being rather abstract mathematically).

Actually, in Case (1) (the slab geometry case) it is natural to use the hybrid point of view, which reconciles the Hadamardist and Feynmanist philosophy.
If we want to compute the scattering operator between time $t_-$ and $t_+$ it is natural to choose $S_-$ and $S_+$ both satisfying the Hadamard condition, and then to use the corresponding (in-out) Feynman propagator.

%------------------------------------------------------------------------------%
\subsection{Comparison with literature}

The basic formalism of pseudo-unitary (or symplectic) evolution equations described in this paper is of course contained more or less explicitly in all works on  Quantum Field Theory in curved spacetime, including the standard textbooks \cite{birrell-davies,fulling,wald,parker-toms}.
Surprisingly, however, its point of view is rarely fully exploited.

The (trivial) observation about the existence of two distinguished states on asymptotically stationary spacetimes can be found e.g.\ in~\cite{birrell-davies}, Sect.~3.3. It is rather obvious that they are the preferred states for many actual applications, such as the calculation of the scattering operator.

In the well-known paper \cite{duistermaat} Duistermat and H\"ormander prove the existence of the \emph{Feynman parametrix}, which is unique only up to a smoothing operator.
However, the canonical \emph{in-out Feynman inverse} defined rigorously in our paper is essentially absent from the mathematical literature, with a few recent exceptions~\cite{gerard-wrochna:feynman,vasy:selfadjoint,nakamura,nakamura2,nakamura3}.

Sometimes one considers another generalization of the Feynman propagator: given a Hadamard state with the two-point functions $G_\bullet^{(+/-)}$, one can introduce
\begin{equation}
  \label{idi5.}
  G_\bullet^\Feyn = \im G_\bullet^{(+)} + G^\wedge = -\im G_\bullet^{(-)} + G^\vee,
\end{equation}
which is an inverse of the Klein Gordon operator, and can be called the \emph{Feynman inverse associated to the pair of two-point functions $G_\bullet^{(+)}$, $G_\bullet^{(+)}$}, see e.g.\ \cite{derezinski-siemssen:propagators} and \cite{hollands} App.~D1.
Note, however, that $G_\bullet^\Feyn$ is non-unique and, more importantly, does not satisfy the relation~\eqref{two3.}, which is the basis for perturbative calculations of the scattering operator.

In the more physically oriented literature the in-out Feynman propagator, defined rigorously in our paper, is ubiquitous, even if implicitly.
It essentially appears each time when the functional integration method is applied to compute scattering amplitudes, at least on asymptotically stationary spacetimes.
More precisely, in order to compute perturbatively Feynman diagrams for the scattering operator one needs to associate the Feynman propagator to each line.

It seems that this point is not sufficiently appreciated in a part of mathematically oriented literature.
Let us quote from App.~B of \cite{hollands}, which as we mentioned above, expresses the Hadamardist philosophy:
``[the effective action] depends upon a choice of state [\textellipsis].
Here, the choice of state would enter the precise choice of the formal path-integral measure $[\mathcal{D}\phi]$.''
In reality, typically the path-integral formalism yields a unique prescription for Feynman diagrams (which then need to be renormalized).
This prescription naturally involves two states: the ``in state'' and the ``out state''.
It also determines uniquely the Feynman propagator, which should be associated to each line of the Feynman diagram.

The formula \eqref{two3.}, which gives a physical meaning to the in-out Feynman propagator, can be found in the physical literature in various places, see, for instance, Equation (4.7) of \cite{fulling} and the following equations.

Until quite recently, the question of the self-adjointness of the Klein--Gordon operator was almost absent in the mathematical literature.
There were probably two reasons for this.
First, the question seemed difficult.
The Klein--Gordon operator is unbounded from below and above, and the positivity is usually the major tool in functional analysis.
The second reason is that this question at the first sight appeared not interesting physically.
Indeed, the space $L^2(M)$ is not the space of states of any reasonable quantum system and $\e^{-\ri \tau K}$ seems to have no meaning as a quantum evolution.

However, in physical literature many researchers tacitly assume that the Klein--Gordon operator is self-adjoint, see e.g.\ \cite{rumpf1} or Sect.~9 of \cite{fulling}.
The Feynman and anti-Feynman propagator are important for applications and, formally, assuming (\ref{KG2}), they can be computed from
\begin{align}
  \label{KG2-}
   & (K\pm\im0)^{-1} =\pm \im\int_0^\infty \e^{\mp\im t K} \dif t,
\end{align}
Note that (\ref{KG2-}) presupposes that one can interpret $K$ as a self-adjoint operator, so that $(z-K)^{-1}$ can be defined for $z\nin\RR$.

%******************************************************************************%
\section{Preliminaries}
\label{sec:Preliminaries}

In this section we collect various basic mathematical definitions and facts that are useful in our paper.
Most readers will find them rather obvious -- nevertheless, they should be recorded.

If $A$ is an operator, then $\Ran(A)$, $\Ker(A)$, $\cD(A)$ and $\sigma(A)$ denote the range, the nullspace, the domain and the spectrum of $A$.
$B(\cW)$ denotes the space of bounded operators on a Banach space $\cW$.

%------------------------------------------------------------------------------%
\subsection{Scales of Hilbert spaces}
\label{sub:Scales of Hilbert spaces}

Suppose that $\cW$ is a Hilbert space and $A$ a positive invertible operator on $\cW$.
Then one defines $A^{-\alpha} \cW$ as the domain of $A^\alpha$ for $\alpha\geq0$ and as its anti-dual for $\alpha < 0$.
We thus obtain a \emph{scale of nested Hilbert spaces} $A^{-\alpha} \cW$, $\alpha \in \RR$, with $A^0 \cW = \cW$ and $A^{-\alpha} \cW$ continuously and densely embedded in $A^{-\beta}\cW$ for $\alpha \geq \beta$.
By restriction/extension, the operator $A^{\beta}$ can be interpreted as a unitary from $A^{-\alpha}\cW$ to $A^{-\alpha+\beta} \cW$.
Often we simplify notation by writing $\cW_\alpha$ for $A^{-\alpha} \cW$, so that $\cW_0 = \cW$ and $\cW_1 = \Dom(A)$.

In practice, the starting point of a construction of a scale of Hilbert spaces is often not an operator $A$ but a nested pair of Hilbert spaces.
More precisely, suppose that $(\cW,\cV)$ is a pair of Hilbert spaces, where $\cV$ is densely and continuously embedded in $\cW$.
Then there exists a unique invertible positive self-adjoint operator $A$ on $\cW$ with the domain $\cV$ such that
\begin{equation*}
  \cinner{v}{v}_{\cV} = \cinner{Av}{Av}_{\cW}, \quad v \in \cV.
\end{equation*}
We can then define the scale $\cW_\alpha = A^{-\alpha} \cW, \alpha \in \RR$.
Note that $\cW = \cW_0$ and $\cV = \cW_1$.

We will often use the following facts:
\begin{proposition}  Consider a scale of Hilbert spaces
    $\cW_\alpha$, $\alpha\in\mathbb{R}$.
  \mbox{}
  \begin{enumerate}
    \item $A\in B(\cW_{\alpha},\cW_{\beta})$ implies $A^*\in B(\cW_{-\beta},\cW_{-\alpha})$.
    \item Let $\alpha_0\leq\alpha_1$, $\beta_0\leq\beta_1$.
          If $A\in B(\cW_{\alpha_0},\cW_{\beta_0})$ can be restricted to an operator in $ B(\cW_{\alpha_1},\cW_{\beta_1})$, then for $\tau\in[0,1]$ it can be restricted to an operator in $B(\cW_{(1-\tau)\alpha_0+\tau\alpha_1},\cW_{(1-\tau)\beta_0+\tau\beta_1})$.
  \end{enumerate}
  \label{interpolation}
\end{proposition}

%------------------------------------------------------------------------------%
\subsection{One-parameter groups}
\label{sub:One-parameter groups}

Let $\cW$ be a Banach space.
Recall that a \emph{one-parameter group on $\cW$} is a homomorphism
\begin{equation*}
  \RR \ni t \mapsto R(t) \in B(\cW).
\end{equation*}
It is well-known that to every strongly continuous one-parameter group $R(t)$ one can uniquely associate a densely defined operator $B$ called the \emph{generator of $R(t)$}, so that $R(t) = \e^{-\im t B}$.
It can be shown that $\Dom(B)$ is preserved by $R(t)$ and the following equation is true:
\begin{equation}
  \label{evol}
  (\partial_t + \im B) R(t) w = 0,
  \quad
  w \in \Dom(B).
\end{equation}

Suppose now that $\cW$ is a Hilbert space.
A unitary group on $\cW$ is always of the form $\e^{-\im t B}$, where $B$ is a self-adjoint operator.
Let $\langle B \rangle \defn (B^2+1)^{\frac12}$.
Note that $\e^{-\im t B}$ preserves the scale $\cW_\alpha \defn \langle B \rangle^{-\alpha} \cW$ for $\alpha \geq 0$, and can be uniquely extended by continuity to~$\cW_\alpha$ for $\alpha \leq 0$.
For any~$\alpha$ we have
\begin{equation}
  \label{eq:evo-generator}
  (\partial_t + \im B) \e^{-\im t B} w = 0,
  \quad
  w \in \cW_{1+\alpha},
\end{equation}
where the left-hand side of~\eqref{eq:evo-generator} is understood as an element of~$\cW_{\alpha}$.

In practice, two choices of $\alpha$ are especially useful: $\alpha = 0$ corresponds precisely to \eqref{evol}, and $\alpha = -\frac12$ means that \eqref{eq:evo-generator} is considered on the form domain of $B$.
We will use both points of view when considering a natural setup for non-autonomous evolutions, see Thm.~\ref{thm:evolution}.

If in addition $B$ is invertible, then we can slightly modify the scale of Hilbert spaces by setting $\cW_\alpha \defn \abs{B}^{-\alpha} \cW$.
Note that $B$ is then unitary from $\cW_\alpha$ to $\cW_{\alpha-1}$.

%------------------------------------------------------------------------------%
\subsection{Hilbertizable spaces}
\label{sub:Hilbertizable spaces}

\begin{definition}
  \label{def:Hilbertizable}
  Let $\cW$ be a complex\footnote{Analogous definitions and results are valid for \emph{real} Hilbertizable spaces.
  } topological vector space.
  We say that it is \emph{Hilbertizable} if it has the topology of a Hilbert space for some scalar product $\cinner{\cdot}{\cdot}_\bullet$ on $\cW$.
  We will then say that $\cinner{\cdot}{\cdot}_\bullet$ is \emph{compatible with (the Hilbertizable structure of)} $\cW$.
  The subscript $\bullet$ serves as a placeholder for a name of a scalar product.
  The Hilbert space $\bigl( \cW, \cinner{\cdot}{\cdot}_\bullet \bigr)$ will be occasionally denoted $\cW_\bullet$.
  The corresponding norm will be denoted $\norm{\,\cdot\,}_\bullet$.
\end{definition}

In what follows $\cW$ is a Hilbertizable space.
Let $\cinner{\cdot}{\cdot}_1$, $\cinner{\cdot}{\cdot}_2$ be two scalar products compatible with $\cW$.
Then there exist constants $0 < c \leq C$ such that
\begin{equation*}
  c \cinner{w}{w}_1 \leq \cinner{w}{w}_2 \leq C \cinner{w}{w}_1.
\end{equation*}

Let $R$ be a linear operator on $\cW$.
We say that it is \emph{bounded} if for some (hence for all) compatible scalar products $\cinner{\cdot}{\cdot}_\bullet$ there exists a constant $C_\bullet$ such that
\begin{equation*}
  \norm{Rw}_\bullet \leq C_\bullet \norm{w}_\bullet.
\end{equation*}

Suppose that $A$ is a (densely defined) operator on $\cW$.
We say that it is \emph{similar to self-adjoint} if there exists a compatible scalar product $\cinner{\cdot}{\cdot}_\bullet$ such that $A$ is self-adjoint with respect to $\cinner{\cdot}{\cdot}_\bullet$.
Note that for such operators the spectral theorem  can be applied.
In particular, for any (complex-valued) Borel function $f$ on the spectrum of $A$ we can define $f(A)$.

Let $Q$ be a sesquilinear form on $\cW$.
We say that it is \emph{bounded} if for some (hence for all) compatible scalar products $\cinner{\cdot}{\cdot}_\bullet$ there exists $C_\bullet$ such that
\begin{equation*}
  \abs{\cinner{v}{Qw}} \leq C_\bullet \norm{v}_\bullet \norm{w}_\bullet, \quad v,w \in \cW.
\end{equation*}
Note that on Hilbertizable spaces we do not have a natural identification of sesquilinear forms with operators.

%------------------------------------------------------------------------------%
\subsection{Interpolation between Hilbertizable spaces}
\label{sub:Interpolation between Hibertizable spaces}

\begin{definition}
  \label{def:Hilbertizable pair}
  A pair of Hilbertizable spaces $(\cW, \cV)$, where $\cV$ is densely and continuously embedded in $\cW$, will be called a \emph{nested Hilbertizable pair}.
\end{definition}

After fixing scalar products $\cinner{\cdot}{\cdot}_{\cV,\bullet}$ and $\cinner{\cdot}{\cdot}_{\cW,\bullet}$ compatible with $\cV$, resp.\ $\cW$, we can interpolate between the Hilbert spaces $\cV_\bullet$ and $\cW_\bullet$ obtaining a scale of Hilbert spaces $\cW_{\alpha,\bullet}$, $\alpha \in \RR$, with $\cV_\bullet = \cW_{1,\bullet}$ and $\cW = \cW_{0,\bullet}$.
By  complex interpolation,
% the Heinz--Kato Theorem,
for $\alpha \in [0,1]$ they do not depend on the choice of scalar products $\cinner{\cdot}{\cdot}_{\cW,\bullet}$ and $\cinner{\cdot}{\cdot}_{\cV,\bullet}$ as Hilbertizable spaces.
Therefore, the family of Hilbertizable spaces $\cW_\alpha$, $\alpha \in [0,1]$, is uniquely defined.

If $R \in B(\cW)$ and its restriction to $\cV$ belongs to $B(\cV)$, then $R$ restricts to $B(\cW_\alpha)$ for $0 \leq \alpha \leq 1$.

%------------------------------------------------------------------------------%
\subsection{From complex to real spaces and back}
\label{sub:From complex to real spaces and back}

To pass from a complex space to a real one, it is useful to have the notion of a conjugation:

Let $\cW$ be a complex space.
An antilinear involution $v \mapsto \bar v$ on $\cW$ will be called a \emph{conjugation}.
In the context of Hilbertizable spaces we always assume that conjugations are bounded.
For an operator $R$ on $\cW$ we set
\begin{equation*}
  \bar R v \defn \bar{R \bar v}, \quad R^\T \defn \bar R^*.
\end{equation*}
If $R$ satisfies $\bar{R} = \pm R$, it will be called \emph{real} resp.\ \emph{anti-real}.
The \emph{real subspace of $\cW$} is defined as
\begin{equation}
  \label{real}
  \cW_\RR \defn \{w \in \cW \mid \bar w = w \}.
\end{equation}

Conversely, to pass from a real space to a complex one, suppose now that $\cY$ is a real space.
Then $\cY \otimes_\RR \CC = \CC \cY$ will denote the \emph{complexification} of~$\cY$ (i.e., for every $w \in \cW$ we can write $w = w_R + \im w_I$ with $w_R, w_I \in \cY$), and we have the natural conjugation $\bar{v_R + \im v_I} = v_R - \im v_I$.

%------------------------------------------------------------------------------%
\subsection{Complexification of (anti-)symmetric forms}
\label{sub:Complexification of bilinear forms}

Let $\cY$ be a real space.

Every symmetric form $q$ on $\cY$, and thus in particular every scalar product, extends to a Hermitian form on $\CC\cY$:
\begin{equation*}
  \begin{split}
    \cinner[\big]{v_R + \im v_I}{q (w_R + \im w_I)} &\defn \rinner{v_R}{q w_R} + \rinner{v_I}{q w_I} \\&\quad - \im\rinner{v_I}{q w_R} + \im\rinner{v_R}{q w_I}.
  \end{split}
\end{equation*}
Note the property $\bar{\cinner{v}{qw}} = \cinner{\bar v}{q\bar w}$.

Extending an antisymmetric form $\omega$ on $\cY$ to a Hermitian form on $\CC\cY$ is slightly different:
\begin{equation}
  \label{qiu}
  \begin{split}
    \cinner[\big]{v_R + \im v_I}{Q (w_R + \im w_I)} &\defn \rinner{v_I}{\omega w_R} - \rinner{v_R}{\omega w_I} \\&\quad + \im\rinner{v_R}{\omega w_R} + \im\rinner{v_I}{\omega w_I}.
  \end{split}
\end{equation}
Note the property $\bar{\cinner{v}{Qw}} = -\cinner{\bar v}{Q\bar w}$, which differs from the symmetric case above.

%------------------------------------------------------------------------------%
\subsection{Realification of Hermitian forms}
\label{sub:Realification of Hermitian forms}

Let $Q$ be a Hermitian form on $\cW$.

We say that a conjugation $\bar{\,\cdot\vphantom{1}\,}$ \emph{preserves} $Q$ if
\begin{equation*}
  \bar{\cinner{v}{Q w}} = \cinner{\bar v}{Q \bar w}.
\end{equation*}
In that case,
\begin{equation*}
  \Re{} \cinner{v}{Qw}, \quad v,w \in \cW_\RR,
\end{equation*}
is a symmetric form on $\cW_\RR$.
Note that $\Im{} \cinner{v}{Qw} = 0$ on $\cW_\RR$.

Similarly, we say that a conjugation $\bar{\,\cdot\vphantom{1}\,}$ \emph{anti-preserves} $Q$ if
\begin{equation}
  \label{antipreserves}
  \bar{\cinner{v}{Q w}} = -\cinner{\bar v}{Q \bar w}.
\end{equation}
In that case,
\begin{equation*}
  \Im{} \cinner{v}{Qw}, \quad v,w \in \cW_\RR,
\end{equation*}
is an antisymmetric form on $\cW_\RR$.
Note that $\Re{}\cinner{v}{Qw} = 0$ on $\cW_\RR$.

%------------------------------------------------------------------------------%
\subsection{Involutions}
\label{sub:Involutions}

\begin{definition}
  \label{def:complementary}
  We say that a pair $(\cZ_\bullet^{(+)},\cZ_\bullet^{(-)})$ of subspaces of a vector space $\cW$ is \emph{complementary} if
  \begin{equation*}
    \cZ_\bullet^{(+)} \cap \cZ_\bullet^{(-)} = \{0\},
    \quad
    \cZ_\bullet^{(+)}+\cZ_\bullet^{(-)} = \cW.
  \end{equation*}
\end{definition}

\begin{definition}
  An operator $S_\bullet$ on $\cW$ is called an \emph{involution}, if $S_\bullet^2 = \one$.
\end{definition}
We can associate various objects with $S_\bullet$:
\begin{equation}
  \label{projo-}
  \Pi^{(\pm)}_\bullet \defn \frac12 (\one \pm S_\bullet),
  \quad
  \cZ^{(\pm)}_\bullet \defn \Ran(\Pi^{(\pm)}_\bullet).
\end{equation}
$(\Pi_\bullet^{(+)}, \Pi_\bullet^{(-)})$ is a pair of complementary projections and $(\cZ_\bullet^{(+)},\cZ_\bullet^{(-)})$ is the corresponding pair of complementary subspaces.

A possible name for $\cZ_\bullet^{(+)}$ is the \emph{positive space}, and for $\cZ_\bullet^{(-)}$ is the \emph{negative space} (associated with~$S_\bullet$).
We will however prefer names suggested by QFT: $\cZ_\bullet^{(+)}$ will be called the \emph{particle space}, and $\cZ_\bullet^{(-)}$ the \emph{antiparticle space}.

If $\cW$ is Hilbertizable, we will usually assume that $S_\bullet$ is bounded.
Then so are $\Pi_\bullet^{(+)}$ and $\Pi_\bullet^{(-)}$, moreover, $\cZ_\bullet^{(+)}$ and $\cZ_\bullet^{(-)}$ are closed.

%------------------------------------------------------------------------------%
\subsection{Pairs of involutions}

Suppose that $S_1$ and $S_2$ are two bounded involutions on $\cW$.
Let
\begin{align*}
  \Pi_i^{(\pm)} \defn \frac12 (\one \pm S_i),\quad \cZ_i^{(\pm)} \defn \Ran(\Pi_i^{(\pm)}),\quad i=1,2,
\end{align*}
be the corresponding pairs of complementary projections and subspaces.
The following operator can be defined by many distinct expressions:
\begin{subequations}
  \label{eq:Upsilon_operator}
  \begin{align}
    \Upsilon
    & \defn \one - (\Pi_1^{(+)} - \Pi_2^{(+)})^2
    = \one - (\Pi_1^{(-)} - \Pi_2^{(-)})^2 \label{eq:Upsilon_operator1}
    \\
    & \mathrel{\phantom{\defn}\mathllap{=}} \Pi_1^{(+)} \Pi_2^{(+)} + \Pi_2^{(-)} \Pi_1^{(-)}
    = \Pi_2^{(+)} \Pi_1^{(+)} + \Pi_1^{(-)} \Pi_2^{(-)} \label{eq:Upsilon_operator2}
    \\
    & \mathrel{\phantom{\defn}\mathllap{=}} (\Pi_1^{(+)} + \Pi_2^{(-)}) (\Pi_2^{(+)} + \Pi_1^{(-)})
    = (\Pi_2^{(+)} + \Pi_1^{(-)}) (\Pi_1^{(+)} + \Pi_2^{(-)}) \label{eq:Upsilon_operator3}
    \\
    \begin{split}
      & \mathrel{\phantom{\defn}\mathllap{=}} (\Pi_1^{(+)} \Pi_2^{(+)} + \Pi_2^{(-)}) (\Pi_2^{(+)} \Pi_1^{(+)} + \Pi_1^{(-)})
      \\
      & \mathrel{\phantom{\defn}\mathllap{=}} (\Pi_2^{(+)} \Pi_1^{(+)} + \Pi_1^{(-)}) (\Pi_1^{(+)} \Pi_2^{(+)} + \Pi_2^{(-)})
    \end{split}
    \label{eq:Upsilon_operator4}
    \\
    & \mathrel{\phantom{\defn}\mathllap{=}} \frac14(2+S_1S_2+S_2S_1)=\frac14(S_1+S_2)^2.
  \end{align}
\end{subequations}
Observe that $\Upsilon$ commutes with $\Pi_1^{(+)}$, $\Pi_1^{(-)}$, $\Pi_2^{(+)}$ and $\Pi_2^{(-)}$.

\begin{proposition}
  \label{prop:pair_proj}
  The following conditions are equivalent:
  \begin{enumerate}
    \item[(i)] $\Upsilon$ is invertible.
    \item[(ii)] $\Pi_1^{(+)} + \Pi_2^{(-)}$ and $\Pi_2^{(+)} + \Pi_1^{(-)}$ are  invertible.
    \item[(iii)] $\Pi_1^{(+)} \Pi_2^{(+)} + \Pi_2^{(-)}$ and $\Pi_2^{(+)} \Pi_1^{(+)} + \Pi_1^{(-)}$ are invertible.
  \end{enumerate}
  Moreover, if one of the above holds, then the pairs $(\cZ_1^{(+)},\cZ_2^{(-)})$ as well as $(\cZ_2^{(+)},\cZ_1^{(-)})$ are complementary.
\end{proposition}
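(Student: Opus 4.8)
The plan is to read both the equivalences and the complementarity statement directly off the algebraic identities for $\Upsilon$ collected in~\eqref{eq:Upsilon_operator}, together with the observation (also recorded there) that $\Upsilon$ commutes with each of the four projections $\Pi_i^{(\pm)}$. No analysis beyond boundedness is needed.

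The engine is the following elementary remark: if $X,Y\in B(\cW)$ satisfy $XY=YX=\Upsilon$, then $\Upsilon$ commutes with $X$ and with $Y$ (e.g. $\Upsilon X=(XY)X=X(YX)=X\Upsilon$), and if moreover $\Upsilon$ is invertible then so are $X$ and $Y$, with $X^{-1}=Y\Upsilon^{-1}$ and $Y^{-1}=X\Upsilon^{-1}$; indeed $X(Y\Upsilon^{-1})=\Upsilon\Upsilon^{-1}=\one$ and $(Y\Upsilon^{-1})X=Y(X\Upsilon^{-1})=(YX)\Upsilon^{-1}=\one$, and symmetrically for $Y$. The converse is trivial, since a product of invertible operators is invertible. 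I would then apply this twice. Taking $X=\Pi_1^{(+)}+\Pi_2^{(-)}$ and $Y=\Pi_2^{(+)}+\Pi_1^{(-)}$, identity~\eqref{eq:Upsilon_operator3} says precisely $XY=YX=\Upsilon$, so the remark yields (i)$\Leftrightarrow$(ii). Taking instead $X=\Pi_1^{(+)}\Pi_2^{(+)}+\Pi_2^{(-)}$ and $Y=\Pi_2^{(+)}\Pi_1^{(+)}+\Pi_1^{(-)}$, identity~\eqref{eq:Upsilon_operator4} gives $XY=YX=\Upsilon$, hence (i)$\Leftrightarrow$(iii).

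For the ``moreover'' clause, suppose $\Upsilon$ is invertible. I would use that for a bounded projection $\Ran(\Pi_i^{(\pm)})=\Ker(\Pi_i^{(\mp)})$. If $w\in\cZ_1^{(+)}\cap\cZ_2^{(-)}$ then $\Pi_1^{(-)}w=0$ and $\Pi_2^{(+)}w=0$, so the expression $\Upsilon=\Pi_1^{(+)}\Pi_2^{(+)}+\Pi_2^{(-)}\Pi_1^{(-)}$ from~\eqref{eq:Upsilon_operator2} gives $\Upsilon w=0$, whence $w=0$; thus $\cZ_1^{(+)}\cap\cZ_2^{(-)}=\{0\}$. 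Moreover $\Pi_1^{(+)}+\Pi_2^{(-)}$ is invertible (by (ii)), hence surjective, and its range lies in $\Ran(\Pi_1^{(+)})+\Ran(\Pi_2^{(-)})=\cZ_1^{(+)}+\cZ_2^{(-)}$, so $\cZ_1^{(+)}+\cZ_2^{(-)}=\cW$. The pair $(\cZ_2^{(+)},\cZ_1^{(-)})$ is handled identically: a vector in $\cZ_2^{(+)}\cap\cZ_1^{(-)}$ is annihilated by $\Pi_2^{(-)}$ and by $\Pi_1^{(+)}$, hence by $\Upsilon$ in the form $\Pi_2^{(+)}\Pi_1^{(+)}+\Pi_1^{(-)}\Pi_2^{(-)}$ (the right-hand side of~\eqref{eq:Upsilon_operator2}), and $\Pi_2^{(+)}+\Pi_1^{(-)}$ is the invertible operator whose range sits in $\cZ_2^{(+)}+\cZ_1^{(-)}$.

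I do not anticipate any genuine obstacle: the whole argument is bookkeeping around~\eqref{eq:Upsilon_operator} and the commutation of $\Upsilon$ with the $\Pi_i^{(\pm)}$, both already in hand. The only places that call for a moment of care are choosing the right factorization of $\Upsilon$ for each of the two equivalences and correctly converting membership in $\cZ_i^{(\pm)}$ into vanishing of the complementary projection.
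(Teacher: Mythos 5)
Your proof is correct. For the equivalences (i)$\iff$(ii)$\iff$(iii) you use exactly the paper's device: the factorizations \eqref{eq:Upsilon_operator3} and \eqref{eq:Upsilon_operator4} exhibit $\Upsilon$ as a commuting product $XY=YX$, and the elementary fact that such a product is invertible iff both factors are; your explicit verification that $X^{-1}=Y\Upsilon^{-1}$ (using the commutation $\Upsilon X=X\Upsilon$ you derive first) is just a spelled-out version of the "easy fact" the paper invokes. Where you diverge is the complementarity claim: the paper disposes of it by pointing forward to Prop.~\ref{prop:compl-proj}, where the operators $\Pi_1^{(+)}\Upsilon^{-1}\Pi_2^{(+)}$ etc.\ are shown to be the complementary projections, so complementarity comes as a byproduct of an explicit construction. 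You instead argue directly: triviality of $\cZ_1^{(+)}\cap\cZ_2^{(-)}$ from $\Upsilon w=0$ via the form $\Upsilon=\Pi_1^{(+)}\Pi_2^{(+)}+\Pi_2^{(-)}\Pi_1^{(-)}$ of \eqref{eq:Upsilon_operator2}, and $\cZ_1^{(+)}+\cZ_2^{(-)}=\cW$ from surjectivity of the invertible operator $\Pi_1^{(+)}+\Pi_2^{(-)}$, whose range obviously lies in that sum (and symmetrically for the other pair, using the other ordering in \eqref{eq:Upsilon_operator2}). Your route is more self-contained and needs nothing beyond the statement being proved; the paper's route costs a forward reference but delivers the explicit projection formulas, which are needed later anyway. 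Both are sound; there is no gap in yours.
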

\begin{proof}
  (i)$\iff$(ii) and (i)$\iff$(iii) follow from~\eqref{eq:Upsilon_operator3} and~\eqref{eq:Upsilon_operator4} by the following  easy fact:
  If $R,S,T$ are maps such that $R = ST = TS$, then
  $R$ is bijective if and only if both $T$ and $S$ are bijective.

  The last implication follows from the next proposition.
\end{proof}

In the setting of the above proposition we can use $\Upsilon$ to construct two pairs of complementary projections:
\begin{proposition}
  \label{prop:compl-proj}
  Suppose that $\Upsilon$ is invertible.
  Then
  \begin{alignat*}{2} \Lambda_{12}^{(+)} & \defn \Pi_1^{(+)} \Upsilon^{-1} \Pi_2^{(+)} & \quad & \text{is the projection onto $\cZ_1^{(+)}$ along $\cZ_2^{(-)}$}, \\ \Lambda_{21}^{(-)} & \defn \Pi_2^{(-)} \Upsilon^{-1} \Pi_1^{(-)} & \quad & \text{is the projection onto $\cZ_2^{(-)}$ along $\cZ_1^{(+)}$}, \\ \Lambda_{21}^{(+)} & \defn \Pi_2^{(+)} \Upsilon^{-1} \Pi_1^{(+)} & \quad & \text{is the projection onto $\cZ_2^{(+)}$ along $\cZ_1^{(-)}$}, \\ \Lambda_{12}^{(-)} & \defn \Pi_1^{(-)} \Upsilon^{-1} \Pi_2^{(-)} & \quad & \text{is the projection onto $\cZ_1^{(-)}$ along $\cZ_2^{(+)}$}.
  \end{alignat*}

  In particular,
  \begin{equation*}
    \Lambda_{12}^{(+)} + \Lambda_{21}^{(-)} = \one,
    \quad
    \Lambda_{21}^{(+)} + \Lambda_{12}^{(-)} = \one.
  \end{equation*}
\end{proposition}
\begin{proof}
  First we check that $\Lambda_{12}^{(+)}$ is a projection:
  \begin{align*}
    \bigl( \Lambda_{12}^{(+)} \bigr)^2 & = \Pi_1^{(+)} \Upsilon^{-1} \Pi_2^{(+)} \Pi_1^{(+)} \Upsilon^{-1} \Pi_2^{(+)} \\ & = \Pi_1^{(+)} \Upsilon^{-1} (\Pi_2^{(+)} \Pi_1^{(+)} + \Pi_1^{(-)} \Pi_2^{(-)}) \Upsilon^{-1} \Pi_2^{(+)} = \Lambda_{12}^{(+)}.
  \end{align*}
  Moreover,
  \begin{align*}
    \Lambda_{12}^{(+)} & = \Pi_1^{(+)} (\Pi_2^{(+)}+\Pi_1^{(-)}) \Upsilon^{-1}=\Upsilon^{-1} (\Pi_1^{(+)}+\Pi_2^{(-)}) \Pi_2^{(+)}.
  \end{align*}
  But $ (\Pi_2^{(+)}+\Pi_1^{(-)}) \Upsilon^{-1}$ and $\Upsilon^{-1} (\Pi_1^{(+)}+\Pi_2^{(-)})$ are invertible.
  Hence $\Ran(\Lambda_{12}^{(+)})=\Ran (\Pi_1^{(+)})$ and $\Ker(\Lambda_{12}^{(+)}) =\Ker(\Pi_2^{(+)})=\Ran(\Pi_2^{(-)})$.
  This proves the statement of the proposition about $\Lambda_{12}^{(+)}$.
  The remaining statements are proven analogously.
\end{proof}

%******************************************************************************%
\section{Evolutions on Hilbertizable spaces}
\label{sec:Evolutions}

In this section we investigate the concept of an evolution (family) in the Hilbertizable setting and without the additional pseudo-unitary structure, which will be added in later sections.
Already in the present setting we can try to define abstract versions of ``forward/backward'', ``Pauli--Jordan'', ``particle/antiparticle'', ``Feynman/anti-Feynman'' propagators and derive their basic properties.
The existence of the abstract version of the Feynman propagator (inverse) will depend on a certain property called ``asymptotic complementarity'', which in general is not guaranteed to hold.
In the next section we will see that this property automatically holds under some natural assumptions typical of QFT.

Throughout the section, $-\infty \leq t_- < t_+ \leq +\infty$.
For brevity, we write $I \defn \mathopen{]}t_-,t_+\mathclose{[}$.
Without limiting the generality, we will always assume that $0\in I$.
Moreover, we will always assume that either $t_\pm$ are both finite or both infinite.
The case where only one of $t_\pm$ is infinite can be deduced from the other cases.

In typical situations to define an evolution one first introduces a time-dependent family of operators $t\mapsto-\im B(t)$ that generate $R(t,s)$.
The necessary and sufficient conditions for an operator $B$ to generate an \emph{autonomous} evolution or, what is equivalent, a strongly continuous one-parameter group are well-known and relatively simple.
In the \emph{non-autonomous} case, there exist various relatively complicated theorems describing sufficient conditions.
Unfortunately, it seems that a complete theory on this subject is not available.

In the first part of this section we will avoid discussing the topic of generators of a non-autonomous evolutions apart from heuristic remarks.
We will treat the evolution as given.
The \emph{Cauchy data operator} $M \defn \partial_t+\ri B(t)$ will be just a heuristic concept, without a rigorous meaning.
However, ``bisolutions'' and ``inverses of $M$'' will be rigorously defined.
They will be the main topic of this section.

In Subsect.~\ref{Almost unitary evolutions on Hilbertizable spaces} we describe a possible approach to the generation of non-autonomous evolutions based on a theorem of Kato.

%------------------------------------------------------------------------------%
\subsection{Concept of an evolution}

\begin{definition}
  \label{def:evolution}
  Let $\cW$ be a Banach space.
  We say that the two-parameter family
  \begin{equation}
    \label{eq:evolution}
    I \times I \ni (t,s) \mapsto R(t,s) \in B(\cW)
  \end{equation}
  is a \emph{strongly continuous evolution (family) on $\cW$} if for all $r,s,t \in I$, we have the identities
  \begin{equation}
    \label{eq:evo_idents}
    R(t,t) = \one, \quad R(t,s) R(s,r) = R(t,r),
  \end{equation}
  and the map~\eqref{eq:evolution} is strongly continuous.
\end{definition}

One can also consider evolutions parametrized by the closed interval $I^\cl \defn [t_-,t_+]$ instead of~$I$, with the obvious changes in the definition.

Note also that Def.~\ref{def:evolution} involves both forward and backward evolution, since we do not assume $t \geq s$ in $R(t,s)$.
In other words, this definition is a generalization of a one-parameter group instead of a one-parameter semigroup.

%------------------------------------------------------------------------------%
\subsection{Generators of evolution}
\label{sub:Generators of evolution}

Until the end of this section we consider a strongly continuous evolution $R(t,s)$, $t,s \in I$, on a Hilbertizable space $\cW$.

If $R(t,s) = R(t-s,0)$ for all $t,s,t-s,0 \in I$, we say that the evolution is \emph{autonomous}.
An autonomous evolution can always be extended to~$\RR\times\RR$ in the obvious way.
Setting $R(t) \defn R(t,0)$, we obtain a \emph{strongly continuous one-parameter group}.
As we have already mentioned, we can then write $R(t) = \e^{-\im t B}$, where~$B$ is a certain unique, densely defined, closed operator called the \emph{generator of~$R$}.

For non-autonomous evolutions, the concept of a generator is understood only under some special assumptions.
Heuristically, the operator-valued function $I \ni t \mapsto B(t)$ is called the \emph{(time-dependent) generator of $R(t,s)$} if
\begin{equation*}
  B(t) \defn \bigl(\im\partial_t R(t,s) \bigr) R(s,t).
\end{equation*}
Note that the evolution should satisfy in some sense
\begin{align*}
   \im\partial_t R(t,s) v & = B(t) R(t,s) v, \\
  -\im\partial_s R(t,s) v & = R(t,s) B(s) v.
\end{align*}

A possible rigorous meaning of the concept of a time-dependent generator will be discussed in Subsect.~\ref{Almost unitary evolutions on Hilbertizable spaces}.

%------------------------------------------------------------------------------%
\subsection{Bisolutions and inverses of the Cauchy data operator}
\label{sub:Bisolutions and inverses of the evolution equation}

Introducing the (heuristic) generator $B(t)$, we can consider the (still heuristic) \emph{Cauchy data operator}
\begin{equation*}
  M \defn \partial_t + \im B(t).
\end{equation*}
 Let $C_\mathrm{c}(I,\cW)$ denote
  continuous compactly supported functions from the open interval $I$ to $\cW$.
We will say that an operator $E$ is a \emph{bisolution} resp.\ an \emph{inverse or Green's operator of $M$} if it is maps  $C_\mathrm{c}(I,\cW)\to C(I,\cW)$  and satisfies (heuristically)
\begin{subequations}
  \label{eq:bisolution-inverse}
  \begin{alignat}{2}
    \bigl(\partial_t + \im B(t)\bigr) E w & = 0,
                                          & \quad
    E \bigl(\partial_t + \im B(t)\bigr) v & = 0, \label{eq:bisolution}
    \\\text{resp.}\quad
    \bigl(\partial_t + \im B(t)\bigr) E w & = w,
                                          & \quad
    E \bigl(\partial_t + \im B(t)\bigr) v & = v. \label{eq:inverse}
  \end{alignat}
\end{subequations}
 for all  $w\in C_\mathrm{c}(I,\cW)$ and for $v$ in an
  appropriate domain inside $C(I,\cW)$. (Note that $M$ is an example of an
  unbounded operator,  hence problems with its domain are not surprising.)
A possible rigorous version of (\ref{eq:bisolution}) and (\ref{eq:inverse}) will be given in Subsect.~\ref{Rigorous concept of a bisolution and inverse}.

Note that all the definitions of inverses and bisolution that we will give below will be rigorous.
Yet, for the time being, they will satisfy the conditions \eqref{eq:bisolution} or \eqref{eq:inverse} only on a heuristic level.

The following definition introduces the two most natural inverses and the most natural bisolution:
\begin{definition}
  \label{def:classical-propagators}
  Define the operators $E^\bullet : C_\mathrm{c}(I,\cW)\to C(I,\cW)$
  \begin{equation}
    \label{kerno}
    (E^\bullet w)(t) \defn \int_I E^\bullet(t,s) w(s) \dif s,
    \quad
    \bullet = \PJ, \vee, \wedge,
  \end{equation}
  by their \emph{temporal integral kernels}
  \bes
  \begin{align*}
    E^\PJ(t,s)    & \defn R(t,s), \\
    E^\vee(t,s)   & \defn \theta(t-s) R(t,s), \\
    E^\wedge(t,s) & \defn -\theta(s-t) R(t,s).
  \end{align*}
  \ees
  $E^\PJ$ is called the \emph{Pauli--Jordan bisolution} and~$E^\vee, E^\wedge$ are called the \emph{forward} resp.\ \emph{backward inverse}.
  Jointly, we call them \emph{classical propagators}.
\end{definition}

Clearly, we have
\begin{equation}
  \label{eq:PJ=vee-wedge}
  E^\PJ = E^\vee - E^\wedge,
\end{equation}
which is analogous to \eqref{idi1}.

If $I$ is finite, then the operators $E^\vee,E^\wedge,E^\PJ$ can be extended to bounded operators on the Hilbertizable space $L^2(I,\cW)=L^2(I)\otimes\cW$.

If $I=\mathbb{R}$, typically they are not bounded on $L^2(\mathbb{R},\cW)$.
However, if $R(t,s)$ is uniformly bounded (which we will typically assume), then $E^\vee,E^\wedge,E^\PJ$ are bounded as operators $\langle t\rangle^{-s}L^2(I,\cW)\to \langle t\rangle^{s}L^2(I,\cW)$ for $s>\frac12$.

%------------------------------------------------------------------------------%
\subsection{Bisolutions and inverses associated with involutions}
\label{sub:In/out positive and negative frequency bisolutions}

Let $S_+$ and $S_-$ be two bounded involutions,
\begin{equation*}
  (\Pi_-^{(-)},\Pi_-^{(+)}),
  \quad
  (\Pi_+^{(-)},\Pi_+^{(+)})
\end{equation*}
the corresponding two pairs of complementary projections and
\begin{equation*}
  (\cZ_-^{(-)},\cZ_-^{(+)}),
  \quad
  (\cZ_+^{(-)},\cZ_+^{(+)}).
\end{equation*}
the corresponding  two pairs of complementary subspaces (see
Subsect.~\ref{sub:Involutions}).

For finite $t_+,t_-$, we set
\begin{alignat*}{3}
  S_\pm(t)         & = R(t,t_\pm ) S_\pm R(t_\pm ,t),
  \\
  \Pi_\pm^{(+)}(t) & \defn R(t,t_\pm ) \Pi_\pm^{(+)} R(t_\pm ,t),
  & \qquad
  \Pi_\pm^{(-)}(t) & \defn R(t,t_\pm ) \Pi_\pm^{(-)}R(t_\pm ,t),
  \\
  \cZ_\pm^{(+)}(t) & \defn R(t,t_\pm) \cZ_\pm^{(+)},
  & \qquad
  \cZ_\pm^{(-)}(t) & \defn R(t,t_\pm) \cZ_\pm^{(-)}.
\end{alignat*}

If $t_\pm=\pm\infty$, we assume that there exists $T$ such that on $\mathopen{]}-\infty,-T\mathclose{[}$ and $\mathopen{]}T,\infty\mathclose{[}$ the evolution is autonomous, that is, there exist $B_\pm$ such that
\begin{equation*}
  {\pm t}, {\pm s}>T \;\Rightarrow\; R(t,s)=\e^{-\im (t-s)B_\pm}.
\end{equation*}
We also assume that
\begin{equation}
  \label{porro}
  \e^{-\im (t-s)B_\pm}
  S_\pm=S_\pm \e^{-\im (t-s)B_\pm}.
\end{equation}

We then set
\begin{alignat*}{3}
  S_\pm(t) & = R(t,s) S_\pm R(s,t),
  \\
  \Pi_\pm^{(+)}(t) & \defn R(t,s) \Pi_\pm^{(+)} R(s,t),
  & \qquad
  \Pi_\pm^{(-)}(t) &\defn R(t,s) \Pi_\pm^{(-)} R(s,t),
  \\
  \cZ_\pm^{(+)}(t) & \defn R(t,s) \cZ_\pm^{(+)},
  & \qquad
  \cZ_\pm^{(-)}(t) &\defn R(t,s) \cZ_\pm^{(-)},
\end{alignat*}
where $\pm s>T$ is arbitrary.

\begin{definition}
  \label{def:smo}
  For any $t,s\in I$, we define
  \begin{subequations}
    \label{smoo1}
    \begin{align}
      \label{smoo1+}
      E_\pm^{(+)}(t,s) & \defn R(t,s) \Pi_\pm^{(+)} (s)=\Pi_\pm^{(+)} (t) R(t,s) .
      \\
      \label{smoo1-}
      E_\pm^{(-)}(t,s) & \defn R(t,s) \Pi_\pm^{(-)}(s)= \Pi_\pm^{(-)}(t)R(t,s).
    \end{align}
  \end{subequations}
  Define the operators $E_\pm^{(+)}$ and $E_\pm^{(-)} : C_\mathrm{c}(I,\cW)\to C(I,\cW)$ by their temporal integral kernels \eqref{smoo1}.
  $E_\pm^{(+)}$ are called the \emph{$\Pi_\pm^{(+)}$-in/out bisolutions} and $E_\pm^{(-)}$ are called the \emph{$\Pi_\pm^{(-)}$-in/out bisolutions}.
\end{definition}

Clearly, we have
\begin{equation}
  \label{eq:PJ=(+)-(-)}
  E^\PJ = E_\pm^{(+)} + E_\pm^{(-)},
\end{equation}
which is analogous to \eqref{idi2}.

\begin{definition}
  \label{def-feyn}
  We say that \emph{asymptotic complementarity} holds for $(\cZ_+^{(+)},\cZ_-^{(-)})$ if for some (and thus for all) $t \in I$,
  \begin{equation*}
    \bigl( \cZ_+^{(+)}(t),\cZ_-^{(-)}(t)\bigr)
  \end{equation*}
  is a pair of complementary subspaces of $\cW$.
  Suppose that asymptotic complementarity holds for $(\cZ_+^{(+)},\cZ_-^{(-)})$.
  Then we define
  \begin{subequations}
    \label{pair1}
    \begin{align}
      \Lambda^{\Feyn(+)}(t), & \quad \text{the projection onto $\cZ_+^{(+)}(t)$ along $\cZ_-^{(-)}(t)$},
      \\
      \Lambda^{\Feyn(-)}(t), & \quad \text{the projection onto $\cZ_-^{(-)}(t)$ along $\cZ_+^{(+)}(t)$.}
    \end{align}
  \end{subequations}
   It is the pair of projections associated to the direct sum decomposition $\cW = \cZ_+^{(+)}(t) \oplus \cZ_-^{(-)}(t)$.
  We also define the operator $E^\Feyn : C_\mathrm{c}(I,\cW)\to C(I,\cW)$ as in~\eqref{kerno} by its temporal integral kernel
  \begin{equation*}
    E^\Feyn(t,s) \defn \theta(t-s) R(t,s) \Lambda^{\Feyn(+)}(s) - \theta(s-t) R(t,s) \Lambda^{\Feyn(-)}(s).
  \end{equation*}
  $E^\Feyn$ is called the \emph{$\cZ_+^{(+)}$-out $\cZ_-^{(-)}$-in inverse}.
\end{definition}

The following definition is fully analogous to the previous one:

\begin{definition}
  \label{def-afeyn}
  We say that \emph{asymptotic complementarity} holds for $(\cZ_+^{(-)},\cZ_-^{(+)})$ if for some (and thus for all) $t \in I$,
  \begin{equation*}
    \bigl( \cZ_+^{(-)}(t),\cZ_-^{(+)}(t)\bigr)
  \end{equation*}
  is a pair of complementary subspaces of $\cW$.
  Suppose that asymptotic complementarity holds for $(\cZ_+^{(-)},\cZ_-^{(+)})$.
  Then we define
  \begin{subequations}
    \label{pair1-anti}
    \begin{align}
      \Lambda^{\aFeyn(-)}(t), & \quad \text{the projection onto $\cZ_+^{(-)}(t)$ along $\cZ_-^{(+)}(t)$},
      \\
      \Lambda^{\aFeyn(+)}(t), & \quad \text{the projection onto $\cZ_-^{(+)}(t)$ along $\cZ_+^{(-)}(t)$}.
    \end{align}
  \end{subequations}
   It is the pair of projections associated to the direct sum decomposition $\cW = \cZ_+^{(-)}(t) \oplus \cZ_-^{(+)}(t)$.
  We also define the operator $E^\aFeyn : C_\mathrm{c}(I,\cW)\to C(I,\cW)$ as in~\eqref{kerno} by its temporal integral kernel
  \begin{equation*}
    E^\aFeyn(t,s) \defn \theta(t-s) R(t,s) \Lambda^{\aFeyn(-)}(s) - \theta(s-t) R(t,s) \Lambda^{\aFeyn(+)}(s).
  \end{equation*}
  $E^\aFeyn$ is called the \emph{$\cZ_+^{(-)}$-out $\cZ_-^{(+)}$-in inverse}.
\end{definition}

We clearly have
\begin{align*}
  R(t,s) \Lambda^{\Feyn(\pm)}(s) & = \Lambda^{\Feyn(\pm)}(t) R(t,s), \\ R(t,s) \Lambda^{\aFeyn(\pm)}(s) & = \Lambda^{\aFeyn(\pm)}(t) R(t,s),
\end{align*}

Heuristically, $E^\Feyn$ and $E^\aFeyn$ are inverses in the sense of \eqref{eq:inverse}.
If $I$ is finite, then they are bounded on $L^2(I,\cW)$.

Asymptotic complementarity is trivially satisfied for the pairs $(\cW,\{0\})$ and $(\{0\}, \cW)$.
The corresponding inverses are simply $E^\vee$ and $E^\wedge$, defined in Def.~\ref{def:classical-propagators}.

\begin{remark}
  Let us explain the notation that we are using: $+$ and $-$ without parenthesis are related to $t_+$ or ``out'', resp.\ $t_-$ or ``in''.
  $({+})$ and $({-})$ inside parenthesis denote the ``particle space'', resp.\ the ``antiparticle space''.
\end{remark}

%------------------------------------------------------------------------------%
\subsection{Identities involving bisolutions and inverses}

We make all assumptions of Subsect.~\ref{sub:In/out positive and negative frequency bisolutions}.
In addition we suppose that asymptotic complementarity holds for both pairs of subspaces.
Using Prop.~\ref{prop:compl-proj}, we have the following:
\begin{proposition}
  The projections defined in \eqref{pair1} and \eqref{pair1-anti} are given explicitly by
  \begin{alignat*}{2} \Lambda^{\Feyn(+)}(t) & = \Pi_+^{(+)}(t) \Upsilon(t)^{-1} \Pi_-^{(+)}(t), \quad & \Lambda^{\Feyn(-)}(t) & = \Pi_-^{(-)}(t) \Upsilon(t)^{-1} \Pi_+^{(-)}(t), \\\text{and}\quad \Lambda^{\aFeyn(+)}(t) & = \Pi_-^{(+)}(t) \Upsilon(t)^{-1} \Pi_+^{(+)}(t), \quad & \Lambda^{\aFeyn(-)}(t) & = \Pi_+^{(-)}(t) \Upsilon(t)^{-1} \Pi_-^{(-)}(t),
  \end{alignat*}
  where $\Upsilon(t)$ is the invertible operator defined by
  \begin{align*}
    \Upsilon(t) & \defn \frac14 \bigl( 2 + S_-(t) S_+(t) + S_+(t) S_-(t) \bigr).
  \end{align*}
\end{proposition}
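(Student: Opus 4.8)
The plan is to reduce everything to Proposition~\ref{prop:compl-proj} applied at each fixed time $t$. First I would fix $t\in I$ and identify the two bounded involutions on $\cW$ to which that proposition should be applied: take $S_1 \defn S_-(t)$ and $S_2 \defn S_+(t)$, with the associated complementary projections $\Pi_1^{(\pm)} = \Pi_-^{(\pm)}(t)$ and $\Pi_2^{(\pm)} = \Pi_+^{(\pm)}(t)$. With this dictionary, the operator $\Upsilon$ built from $S_1,S_2$ in~\eqref{eq:Upsilon_operator} is exactly $\Upsilon(t) = \frac14(2 + S_-(t)S_+(t) + S_+(t)S_-(t))$, by the last line~\eqref{eq:Upsilon_operator} of that display.

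Next I would argue that $\Upsilon(t)$ is invertible. By hypothesis we are in the setting of Subsect.~\ref{sub:In/out positive and negative frequency bisolutions} \emph{and} asymptotic complementarity holds for both pairs. Asymptotic complementarity for $(\cZ_+^{(+)},\cZ_-^{(-)})$ means precisely that $\bigl(\cZ_-^{(-)}(t),\cZ_+^{(+)}(t)\bigr)$ is complementary, i.e.\ in the notation of Prop.~\ref{prop:pair_proj} that $(\cZ_1^{(-)},\cZ_2^{(+)})$ is complementary; asymptotic complementarity for $(\cZ_+^{(-)},\cZ_-^{(+)})$ gives that $(\cZ_1^{(+)},\cZ_2^{(-)})$ is complementary. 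Either of these, together with the equivalences in Prop.~\ref{prop:pair_proj} — concretely, complementarity of $(\cZ_2^{(+)},\cZ_1^{(-)})$ forces $\Pi_2^{(+)}+\Pi_1^{(-)}$ to be a bijection, hence by (i)$\iff$(ii) of that proposition $\Upsilon$ is invertible — yields invertibility of $\Upsilon(t)$. (One should double-check the direction: a complementary pair of closed subspaces gives a bounded idempotent whose range and kernel are those subspaces, so the sum of the two ``crossed'' projections is invertible; this is the only slightly non-formal point, and it is standard for Hilbertizable spaces with bounded projections.)

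Having $\Upsilon(t)$ invertible, Prop.~\ref{prop:compl-proj} immediately identifies $\Lambda_{12}^{(+)} = \Pi_1^{(+)}\Upsilon^{-1}\Pi_2^{(+)}$ as the projection onto $\cZ_1^{(+)} = \cZ_-^{(+)}(t)$ along $\cZ_2^{(-)} = \cZ_+^{(-)}(t)$, and similarly for $\Lambda_{21}^{(+)}$, $\Lambda_{12}^{(-)}$, $\Lambda_{21}^{(-)}$. It then only remains to match these with the $\Lambda^{(\boxplus-)}(t)$ etc.\ of Def.~\ref{def:feyn}: $\Lambda^{(\boxplus-)}(t)$ is by definition the projection onto $\cZ_+^{(+)}(t)$ along $\cZ_-^{(-)}(t)$, which is $\Lambda_{21}^{(+)} = \Pi_2^{(+)}\Upsilon^{-1}\Pi_1^{(+)} = \Pi_+^{(+)}(t)\Upsilon(t)^{-1}\Pi_-^{(+)}(t)$, exactly as claimed; and the other three formulas follow by the same bookkeeping (reading off which pair is ``in''/``out'' and which is ``particle''/``antiparticle'' from the box-superscript convention explained in the Remark). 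I would write this out as a short table of identifications rather than repeating the computation four times.

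The only genuine obstacle, and it is a mild one, is verifying that the abstract hypothesis ``$\Upsilon$ invertible'' of Prop.~\ref{prop:compl-proj} is delivered by the two asymptotic-complementarity assumptions — i.e.\ getting the correspondence between ``$(\cZ^{(+)},\cZ^{(-)})$ complementary'' and ``the relevant sum of projections is bijective'' pointing the right way, and noting that boundedness of all involved projections is automatic because we assumed $S_\pm(t)$ (equivalently the $\Pi$'s) bounded, so ranges are closed and the direct-sum decompositions are topological. Everything after that is purely a matter of unwinding Prop.~\ref{prop:compl-proj} with the substitution $1\leftrightarrow{-}$, $2\leftrightarrow{+}$.
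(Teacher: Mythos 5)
Your proposal is correct and is essentially the paper's own argument: the paper proves this proposition simply by invoking Prop.~\ref{prop:compl-proj} with $S_1=S_-(t)$, $S_2=S_+(t)$, exactly the dictionary you set up. One small caveat in your invertibility step: a single complementary pair does not by itself make $\Pi_2^{(+)}+\Pi_1^{(-)}$ bijective (injectivity already requires $\cZ_2^{(-)}\cap\cZ_1^{(+)}=\{0\}$ in addition to $\cZ_2^{(+)}\cap\cZ_1^{(-)}=\{0\}$, and surjectivity uses both sum conditions), so you genuinely need \emph{both} asymptotic-complementarity hypotheses rather than ``either of these''; since both are assumed in this subsection, the argument goes through as written.
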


Observe that $\Upsilon(t)$ has the properties
\begin{align*}
  \Upsilon(t) & = R(t,s) \Upsilon(s) R(s,t), \\ \Upsilon(t) S_\pm(t) & = S_\pm(t) \Upsilon(t).
\end{align*}

Thus we have defined the inverses $E^\vee$, $E^\wedge$, $E^\Feyn$, $E^\aFeyn$ and the bisolutions $E^\PJ$, $E_-^{(+)}$, $E_-^{(-)}$, $E_+^{(+)}$, $E_+^{(-)}$.
They satisfy the relations \eqref{eq:PJ=vee-wedge} and~\eqref{eq:PJ=(+)-(-)}.
As described in the introduction, in the setting of the Minkowski space and, more generally, of a stationary spacetime, they satisfy several other identities.
These identities do not hold in general.
Instead, we have:

\begin{proposition}
  The following three identities hold:
  \begin{align}
    \label{rwq4z}
    (E_+^{(+)} - E_+^{(-)} - E_-^{(+)} + E_-^{(-)} )(t,s) & = R(t,s)(S_+-S_-)(s);
    \\
    \label{rwq4}
    ( E^\Feyn + E^\aFeyn - E^\vee - E^\wedge )(t,s) & = \frac14 R(t,s) \Upsilon(s)^{-1} [S_-(s), S_+(s)];
  \end{align}
  \begin{equation}\begin{split}
    \label{rwq4a}
    \MoveEqLeft ( E^\Feyn - E^\aFeyn )(t,s) - \frac12 ( E_+^{(+)} - E_+^{(-)} + E_-^{(+)} - E_-^{(-)} )(t,s) \\
    & = \frac{1}{8} R(t,s) \Upsilon(s)^{-1} \big[S_+(s)-S_-(s), [S_+(s),S_-(s)]\big].
  \end{split}\end{equation}
\end{proposition}
\begin{proof}
  (\ref{rwq4z}) is straightforward.

  Obviously, the difference of two inverses is a bisolution.
  The temporal kernels of the following bisolutions have very simple forms:
  \begin{align}
    (E^\Feyn-E^\wedge)(t,s)  & = R(t,s)\Lambda^{\Feyn(+)}(s),  \label{ui1} \\
    (E^\Feyn-E^\vee)(t,s)    & = -R(t,s)\Lambda^{\Feyn(-)}(s), \label{ui2} \\
    (E^\aFeyn-E^\wedge)(t,s) & = R(t,s)\Lambda^{\aFeyn(-)}(s), \label{ui3} \\
    (E^\aFeyn-E^\vee)(t,s)   & = -R(t,s)\Lambda^{\aFeyn(+)}(s).\label{ui4}
  \end{align}

  Taking the sum of (\ref{ui1}) and (\ref{ui4}) we obtain
  \begin{align*}
    ( E^\Feyn + E^\aFeyn- E^\vee - E^\wedge )(t,s)
    & = R(t,s)(\Lambda^{\Feyn(+)}-\Lambda^{\aFeyn(-)})(s) \\
    & = R(t,s)\Upsilon(s)^{-1}(\Pi_+^{(+)}\Pi_-^{(+)}- \Pi_-^{(+)}\Pi_+^{(+)})(s),
  \end{align*}
  which yields (\ref{rwq4}).
  Taking the difference of (\ref{ui1}) and (\ref{ui3}) we obtain the following identities:
  \begin{align*}
    E^\Feyn(t,s) - E^\aFeyn(t,s)
    & = R(t,s)(\Lambda^{\Feyn(+)}-\Lambda^{\aFeyn(-)})(s) \\
    & = \frac14R(t,s)\Upsilon(s)^{-1}\bigl((1+S_+)(1+S_-) -(1-S_+)(1-S_-)\bigr)(s) \\
    & = \frac12R(t,s)\Upsilon(s)^{-1}(S_++S_-)(s) \\
    & = R(t,s)\Big(\frac12(S_++S_-)(s) + \frac18\Upsilon(s)^{-1}(S_++S_--S_-S_+S_--S_+S_-S_+)(s)\Big),
  \end{align*}
  This yields (\ref{rwq4a}).
\end{proof}

Note that for the standard choice of propagators in a stationary QFT the right hand sides of (\ref{rwq4z}), (\ref{rwq4a}) and (\ref{rwq4}) vanish.
It is easy to see that they do not have to vanish in general.

The identities \eqref{rwq4} and~\eqref{rwq4a} simplify in some important situations:
\begin{proposition}
  Assume that asymptotic complementarity holds.
  Further, suppose that for any (and hence for all) $t \in I$
  \begin{equation}
    \label{eq:S-S+_commute}
    S_-(t) S_+(t) = S_+(t) S_-(t).
  \end{equation}
  Then
  \begin{subequations}
    \begin{align}
      \label{eq:Feyn+aFeyn=vee+wedge}
      E^\Feyn + E^\aFeyn & = E^\vee + E^\wedge,                                        \\ \label{eq:Feyn-aFeyn=(+)+(-)}
      E^\Feyn - E^\aFeyn & = \frac12 ( E_+^{(+)} - E_+^{(-)} + E_-^{(+)} - E_-^{(-)}).
    \end{align}
  \end{subequations}
\end{proposition}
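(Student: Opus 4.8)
The plan is to derive \eqref{eq:Feyn+aFeyn=vee+wedge} and \eqref{eq:Feyn-aFeyn=(+)+(-)} directly from the three general identities \eqref{rwq4z}, \eqref{rwq4a}, \eqref{rwq4} proved in the preceding proposition, by specializing to the commuting case \eqref{eq:S-S+_commute}. The first identity is immediate: once $[S_-(t),S_+(t)]=0$, the right-hand side of \eqref{rwq4} vanishes for every $t$ (and $s$), so the temporal integral kernels of $E^\Feyn+E^\aFeyn$ and $E^\vee+E^\wedge$ agree, hence the operators agree. This gives \eqref{eq:Feyn+aFeyn=vee+wedge}.

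For \eqref{eq:Feyn-aFeyn=(+)+(-)} I would work from \eqref{rwq4a}. Under \eqref{eq:S-S+_commute} the double commutator $\bigl[S_+(s)-S_-(s),[S_+(s),S_-(s)]\bigr]$ is zero, so \eqref{rwq4a} collapses to the statement that the temporal kernel of $E^\Feyn-E^\aFeyn$ equals that of $\tfrac12(E_+^{(+)}+E_+^{(-)}+E_-^{(+)}+E_-^{(-)})$; passing from kernels back to operators yields \eqref{eq:Feyn-aFeyn=(+)+(-)}. One should double-check that the identification of the kernel of the difference of inverses as a bisolution kernel, and the kernel formulas \eqref{smoo1} for the $E_\pm^{(\pm)}$, really do assemble into the claimed right-hand side; concretely, one uses that $\Upsilon(s)^{-1}$ commutes with all the $S_\pm(s)$ and $\Pi_\pm^{(\pm)}(s)$, and that when $S_-$ and $S_+$ commute one can simplify $\Upsilon(s)^{-1}S_\pm(s)$ in terms of the $\Pi$'s. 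In fact, when the involutions commute, $\Upsilon = \tfrac14(S_-+S_+)^2$ is a polynomial in the mutually commuting projections, and the chain of equalities in the displayed computation inside the proof of the previous proposition already shows $E^\Feyn(t,s)-E^\aFeyn(t,s)=\tfrac12 R(t,s)\Upsilon(s)^{-1}(S_++S_-)(s)$ with no correction term; it then remains only to recognize $\tfrac12\Upsilon(s)^{-1}(S_++S_-)(s)$ as $\tfrac12(\Pi_+^{(+)}-\Pi_+^{(-)}+\Pi_-^{(+)}-\Pi_-^{(-)})(s)$ after using commutativity, which matches the kernels \eqref{smoo1}.

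An alternative, slightly more self-contained route avoids \eqref{rwq4a} and instead argues directly: in the commuting case the two involutions $S_-(t),S_+(t)$ are simultaneously "diagonalizable" in the sense that $\cW$ decomposes into the four joint eigenspaces of $(S_-(t),S_+(t))$; on each piece the projections $\Lambda^{(\boxplus-)}(t),\Lambda^{(+\boxminus)}(t),\Lambda^{(\boxminus+)}(t),\Lambda^{(-\boxplus)}(t)$ and the $\Pi_\pm^{(\pm)}(t)$ act as scalars, and one verifies the two identities by a finite ($4$-case) bookkeeping check on the temporal kernels \eqref{eq:feynman-kernelsa-} and \eqref{smoo1}. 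I would, however, prefer the first route since all the algebra has effectively been done already in the general proposition.

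The main obstacle is not conceptual but bookkeeping: one must be careful that the passage between "operators $E^\bullet$" and "temporal integral kernels $E^\bullet(t,s)$" is faithful (which it is, since all these $E^\bullet$ are by construction the integral operators \eqref{kerno} with the stated kernels, so kernel equality is equivalent to operator equality), and that the signs in \eqref{smoo1-} and in the definitions \eqref{eq:feynman-kernelsa-} of $E^\Feyn,E^\aFeyn$ are tracked correctly when matching $\tfrac12\Upsilon(s)^{-1}(S_++S_-)(s)$ against $\tfrac12(E_+^{(+)}+E_+^{(-)}+E_-^{(+)}+E_-^{(-)})(t,s)/R(t,s)$. Everything else is routine once \eqref{eq:S-S+_commute} kills the commutator terms on the right-hand sides of \eqref{rwq4a} and \eqref{rwq4}.
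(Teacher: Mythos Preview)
Your proposal is correct and follows exactly the route the paper intends: the proposition is stated without proof immediately after the general identities \eqref{rwq4z}--\eqref{rwq4}, as an evident corollary obtained by letting the commutator terms on the right-hand sides of \eqref{rwq4} and \eqref{rwq4a} vanish. Your additional verification that $\tfrac12\Upsilon(s)^{-1}(S_++S_-)(s)$ reduces to $\tfrac12(S_++S_-)(s)=\tfrac12(\Pi_+^{(+)}-\Pi_+^{(-)}+\Pi_-^{(+)}-\Pi_-^{(-)})(s)$ in the commuting case is a harmless redundancy, since that simplification is already encoded in the vanishing of the correction term in \eqref{rwq4a}.
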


Eq.~\eqref{eq:S-S+_commute} is satisfied in a number of interesting situations.
In particular, if the evolution is autonomous, it is natural to assume that $S_+=S_- \nfed S_\bullet$, requiring that it commutes with the generator $B$.
Then $E_\pm^{(+)}$ and $E_\pm^{(-)}$ collapse to two bisolutions:
\begin{align*}
  E_+^{(+)} & = E_-^{(+)} \nfed E^{(+)}, \\ E_+^{(-)} & = E_-^{(-)} \nfed E^{(-)}.
\end{align*}
Thus, in the autonomous case, the identity \eqref{eq:Feyn+aFeyn=vee+wedge} holds and \eqref{eq:Feyn-aFeyn=(+)+(-)} can be rewritten as
\begin{align*}
  E^\Feyn - E^\aFeyn & = E^{(+)} - E^{(-)}.
\end{align*}

%------------------------------------------------------------------------------%
\subsection{Almost unitary evolutions on Hilbertizable spaces}
\label{Almost unitary evolutions on Hilbertizable spaces}

So far we discussed generators of an evolution only in a heuristic way.
In this subsection we will describe a setting that allows us to make this concept precise.
In view of our applications, we will introduce generators of evolutions on Hilbertizable spaces that one might call \emph{almost unitary evolutions}.

For the remainder of this section, we consider the scale of Hilbertizable spaces $\cW_\alpha$, $\alpha \in [0,1]$, as in Subsect.~\ref{sub:Interpolation between Hibertizable spaces}.

\begin{theorem}[cf.~Thm.~C.10 of~\cite{derezinski-siemssen:propagators}]\label{thm:evolution}
  Let $\{B(t)\}_{t\in I}$ be a family of densely defined, closed operators on~$\cW_0$.
  Suppose that the following conditions are satisfied:
  \begin{enumerate}[label=(\alph*)]
    \item
          $\cW_1 \subset \Dom \bigl(B(t)\bigr)$ so that $B(t) \in B(\cW_1, \cW_0)$ and $I \ni t \mapsto B(t)\in B(\cW_1, \cW_0)$ is norm continuous.
    \item
          For every $t \in I$, scalar products $\cinner{\cdot}{\cdot}_{0,t}$ and $\cinner{\cdot}{\cdot}_{1,t}$ compatible with~$\cW_0$ resp.~$\cW_1$ have been chosen.
          Denote the corresponding Hilbert spaces~$\cW_{0,t}$ and~$\cW_{1,t}$.
    \item
          $B(t)$ is self-adjoint in the sense of~$\cW_{0,t}$ and the part~$\tilde{B}(t)$ of~$B(t)$ in~$\cW_1$ is self-adjoint in the sense of~$\cW_{1,t}$.
    \item
          For a positive $C \in L_\loc^1(I)$ and all $s,t \in I$,
          \begin{align*}
            \norm{w}_{0,t} & \leq \norm{w}_{0,s} \exp\abs*{\int_s^t C(r) \dif r}, \\ \norm{v}_{ 1,t} & \leq \norm{v}_{1,s} \exp\abs*{\int_s^t C(r) \dif r}.
          \end{align*}
  \end{enumerate}
  Then there exists a unique family of bounded operators $\{R(t,s)\}_{s,t \in I}$, on~$\cW_0$ with the following properties:
  \begin{enumerate}[label=(\roman*)]
    \item
          For all $r,s,t \in I$, we have the identities~\eqref{eq:evo_idents}.
    \item
          $R(t,s)$ is $\cW_0$-strongly continuous.
          It preserves $\cW_1$ and is $\cW_1$-strongly continuous.
          Hence it preserves $\cW_\alpha$, $0\leq\alpha\leq1$, and is $\cW_\alpha$-continuous.
          Moreover,
          \begin{equation*}
            \norm{R(t,s)}_{\alpha,t} \leq \exp\abs*{\int_s^t 2 C(r) \dif r}, \quad s,t \in I.
          \end{equation*}
    \item
          For all $w \in \cW_1$ and $s,t \in I$,
          \begin{align*}
            \im\partial_t R(t,s) w  & = B(t) R(t,s) w, \\
            -\im\partial_s R(t,s) w & = R(t,s) B(s) w,
          \end{align*}
          where the derivatives are in the strong topology of~$\cW_0$.
  \end{enumerate}
  We call $\{R(t,s)\}_{s,t \in I}$ the \emph{evolution} generated by~$B(t)$.
\end{theorem}

Note that, if $t_\pm$ are finite, the above theorem remains true if we everywhere replace $I$ with $I^\cl=[t_-,t_+]$, and $L_\loc^1]t_-,t_+[$ with $L^1[t_-,t_+]$, provided that we consider only the right/left-sided derivatives at $t_-/t_+$.

Sometimes it is convenient to use an easy generalization of Thm.~\ref{thm:evolution}, where the generator is perturbed by a bounded operator.
It is also proven in~\cite{derezinski-siemssen:propagators}.

\begin{theorem}
  \label{thm:evolution-per}
  Suppose that $ \{B_0(t)\}_{t\in I}$ satisfies all the assumptions of Thm.~\ref{thm:evolution} and $I\ni t\mapsto V(t)\in B(\cW_0)$ is a norm continuous family of operators.
  Let $B(t) \defn B_0(t)+V(t)$.
  Then there exists a unique family of bounded operators $\{R(t,s)\}_{s,t \in I}$, on~$\cW_0$ satisfying all properties of Thm.~\ref{thm:evolution}.
\end{theorem}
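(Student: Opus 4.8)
The plan is to treat Thm.~\ref{thm:evolution-per} as a bounded-perturbation statement and to build the evolution for $B(t)=B_0(t)+V(t)$ out of the evolution $R_0(t,s)$ generated by $B_0(t)$, which is supplied by Thm.~\ref{thm:evolution}. Concretely, I would \emph{define} $R(t,s)$ to be the solution of the Duhamel (variation-of-constants) integral equation
\begin{equation*}
  R(t,s) = R_0(t,s) - \im \int_s^t R_0(t,u)\, V(u)\, R(u,s) \dif u ,
\end{equation*}
obtained as the norm-convergent Dyson series $R=\sum_{n\ge 0}R_n$, with $R_0$ as above and $R_n(t,s)=-\im\int_s^t R_0(t,u)V(u)R_{n-1}(u,s)\dif u$. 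Convergence in $B(\cW)$, locally uniformly in $(t,s)$, is immediate: on any compact subinterval of $I$ the free evolution is uniformly bounded by property~\ref{item:evolution:2} of Thm.~\ref{thm:evolution}, and $u\mapsto\norm{V(u)}$ is continuous hence locally integrable, so the $n$-th term is dominated by $\frac{1}{n!}\bigl(c\int_s^t\norm{V(u)}\dif u\bigr)^n$. The sum is thus a strongly continuous $B(\cW)$-valued map, so $R(t,s)$ is a candidate evolution.

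Next I would harvest the structural properties from the integral equation. The normalization $R(t,t)=\one$ is read off directly, and the cocycle identity $R(t,s)R(s,r)=R(t,r)$ follows because, as functions of $t$, both sides solve the same Volterra equation with lower limit $s$ and agree at $t=s$ (equivalently, it can be checked term by term from the cocycle property of $R_0$ by reorganizing the iterated integrals). Differentiating the integral equation and using $\im\partial_t R_0(t,u)w=B_0(t)R_0(t,u)w$ for $w\in\cV$ together with the boundedness of $V(t)$ gives $\im\partial_t R(t,s)w=\bigl(B_0(t)+V(t)\bigr)R(t,s)w=B(t)R(t,s)w$ in the strong topology of $\cW$, and similarly $-\im\partial_s R(t,s)w=R(t,s)B(s)w$; this is~\eqref{eq:evo_diffeq}. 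Uniqueness within the class described by Thm.~\ref{thm:evolution} follows from a Gronwall estimate applied to the difference of two candidates, which satisfies the homogeneous Volterra equation.

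The step I expect to be the real obstacle is the \emph{scale} part of property~\ref{item:evolution:2}: that $R(t,s)$ preserves $\cV$ (hence each $\cW_\alpha$, $0\le\alpha\le 1$), is $\cV$-strongly continuous, and obeys a bound of the form $\norm{R(t,s)}_{\alpha,t}\le\exp\abs*{\int_s^t 2C(r)\dif r}$. The difficulty is structural: since $V(t)$ is only assumed bounded on $\cW=\cW_0$, an intermediate factor $V(u)$ in the Dyson series sends $\cV$ into $\cW$ but not into $\cV$, and the subsequent $R_0$ does not smooth it back, so the $\cW$-level series carries no information about the scale. The resolution, deferred to~\cite{derezinski-siemssen:propagators}, rests on additional structure of $V$ with respect to the scale --- in the applications of the paper $V(t)$ in fact also defines a norm-continuous $B(\cV)$-valued family --- and one then runs the identical fixed-point scheme in $B(\cV)$; the two constructions coincide by the uniqueness obtained above, whence $R(t,s)$ preserves $\cV$, the intermediate spaces $\cW_\alpha$ being covered by Prop.~\ref{interpolation}. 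The sharp exponential bound is then read off by combining the $\cW_t$- and $\cV_t$-operator-norm estimates with the interpolation inequality, exactly as for Thm.~\ref{thm:evolution}. Finally, the case of finite $t_\pm$ is handled verbatim on $I^\cl=[t_-,t_+]$ with one-sided derivatives at the endpoints.
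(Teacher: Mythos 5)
The paper does not actually prove this theorem: it is dispatched in one line as an ``easy generalization'' and deferred to \cite{derezinski-siemssen:propagators}, so there is no in-text argument to compare yours against. Your Duhamel/Dyson construction is the standard and correct route for everything that lives at the level of $\cW$: the norm-convergent series, the cocycle identities, strong continuity, the differential equations \eqref{eq:evo_diffeq} for $w\in\cV$, and uniqueness via Gronwall are all fine --- modulo the small point that the exponential bound in property~\ref{item:evolution:2} cannot hold with the same $C$ as for $R_0$; one must enlarge $C(t)$ by $\norm{V(t)}$, as the autonomous example $B_0=0$ with $V$ bounded and non-self-adjoint already shows.

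The genuine gap is exactly where you locate it, but your resolution does not close it: you import the hypothesis that $t\mapsto V(t)$ is also norm continuous in $B(\cV)$, which is not part of the statement. Worse, without some hypothesis of this kind the conclusion is false as literally stated: take $\cW=L^2(\RR)$, $\cV=H^1(\RR)$, $B_0(t)=0$ (all hypotheses of Thm.~\ref{thm:evolution} hold with $C=0$ and $R_0=\one$), and let $V(t)=V$ be multiplication by a real bounded function $v$ with $v'\notin L^2_{\loc}$; then $R(t,s)=\e^{-\im(t-s)V}$ does not preserve $H^1$. So either the theorem must be read with the additional assumption you smuggle in (which does hold in the paper's application, where the perturbation $-bZ$ is bounded on every $\cW_\lambda$), or one must exploit extra structure such as $\cV=\Dom\bigl(B_0(t)\bigr)$, in which case $\Dom\bigl(B(t)\bigr)=\Dom\bigl(B_0(t)\bigr)$ and preservation of the domain by the evolution it generates is automatic. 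Your write-up should state explicitly which of these it relies on, and run the fixed-point scheme in $B(\cV)$ (or give the domain argument) rather than deferring the crux to the reference; as it stands you have proved a reasonable variant of the statement, not the statement itself.
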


If we want to pass to the real case, we use the following obvious fact:

\begin{proposition}
  If $\cW_0$ has a conjugation which preserves $\cW_1$, then $R(t,s)$ is real for $t,s \in I$ if and only if its generator $B(t)$ is anti-real for all $t\in I$.
\end{proposition}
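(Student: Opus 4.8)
The plan is to unwind the definitions of ``real'' for an evolution and ``anti-real'' for its generator, and then to use the uniqueness part of Theorem~\ref{thm:evolution} together with the differential equations~\eqref{eq:evo_diffeq}. Recall from~\eqref{star} that $\bar R v \defn \bar{R\bar v}$, so $R(t,s)$ is real means $\bar{R(t,s)\bar v} = R(t,s)v$ for all $v$, equivalently $R(t,s)$ commutes with the conjugation. Similarly $B(t)$ anti-real means $\bar{B(t)} = -B(t)$, i.e.\ $\bar{B(t)\bar v} = -B(t)v$. The hypothesis that the conjugation preserves $\cV$ guarantees that $\bar{B(t)}$ is again an operator with the same domain properties, so the conjugated objects live in the same scale of Hilbertizable spaces and Theorem~\ref{thm:evolution} can be applied to them.

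First I would prove the ``if'' direction. Suppose $B(t)$ is anti-real for all $t$. Define $\tilde R(t,s) \defn \bar{R(t,s)}$, that is $\tilde R(t,s) v = \bar{R(t,s)\bar v}$. Since conjugation is a bounded antilinear involution preserving $\cV$, the family $\tilde R(t,s)$ consists of bounded operators on $\cW$ preserving $\cV$, and it inherits the cocycle identities~\eqref{eq:evo_idents} and the strong continuity from $R(t,s)$ (antilinear conjugation is strongly continuous, being bounded). Now apply conjugation to the first equation in~\eqref{eq:evo_diffeq}: for $w \in \cV$, writing $v = \bar w \in \cV$, one gets $\im \partial_t \bar{R(t,s)\bar v}$ on the left after using that $\bar{\im x} = -\im \bar x$ and that conjugation is continuous (so it commutes with the strong $\partial_t$), hence $\overline{\im\partial_t R(t,s)\bar v}=-\im\partial_t\tilde R(t,s)v$; wait---I should be careful with the sign. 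Conjugating $\im\partial_t R(t,s)\bar v = B(t)R(t,s)\bar v$ gives $-\im \partial_t \tilde R(t,s) v = \bar{B(t)} \tilde R(t,s) v = -B(t)\tilde R(t,s)v$, i.e.\ $\im\partial_t \tilde R(t,s)v = B(t)\tilde R(t,s)v$, the same evolution equation as $R$ satisfies. The analogous computation handles the $\partial_s$ equation. Thus $\tilde R(t,s)$ satisfies all the properties~\ref{item:evolution:1}--\ref{item:evolution:3} characterizing the evolution generated by $B(t)$, so by the uniqueness clause of Theorem~\ref{thm:evolution}, $\tilde R(t,s) = R(t,s)$, which is exactly the statement that $R(t,s)$ is real.

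For the converse, suppose $R(t,s)$ is real for all $t,s$. Using the generator formula from~\eqref{eq:evo_diffeq}, $B(t)w = -\im \partial_s R(s,t)w\big|_{s=t}$ on $\cV$ --- more precisely $-\im\partial_s R(t,s)w\big|_{s=t}=R(t,t)B(t)w=B(t)w$, but it is cleaner to use $\im\partial_t R(t,s)w|_{s=t}=B(t)w$. Applying conjugation, using $\bar{R(t,s)}=R(t,s)$ and the continuity of conjugation with respect to the strong topology, one finds $\bar{B(t)\bar w} = \bar{\im \partial_t R(t,s)\bar w}\big|_{s=t} = -\im \partial_t \bar{R(t,s)\bar w}\big|_{s=t} = -\im\partial_t R(t,s)w\big|_{s=t} = -B(t)w$, so $B(t)$ is anti-real.

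The main obstacle, and the only place requiring a little care, is justifying that conjugation commutes with the strong derivative $\partial_t$ appearing in~\eqref{eq:evo_diffeq} and that all conjugated quantities remain in the correct spaces of the scale $\cW_\alpha$; this is where the hypothesis ``conjugation preserves $\cV$'' is used (it ensures boundedness of conjugation on $\cW$ and $\cV$, hence on each $\cW_\alpha$ by interpolation, so difference quotients converge after conjugation). Everything else is a formal manipulation of the defining identities, which is why the statement is labelled ``obvious''.
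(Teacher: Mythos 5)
The paper offers no proof of this proposition (it is labelled an ``obvious fact''), and your argument---conjugating the evolution equation~\eqref{eq:evo_diffeq} and invoking uniqueness for the ``if'' direction, and conjugating the generator formula for the converse---is exactly the intended one and is correct. The only point worth tightening is the appeal to the uniqueness clause of Thm.~\ref{thm:evolution}: since the conjugation need not be isometric for the time-dependent norms $\norm{\cdot}_{\alpha,t}$, $\tilde R$ satisfies the bound in~\ref{item:evolution:2} only up to a constant, so it is cleaner to note that uniqueness already follows from~\ref{item:evolution:1}--\ref{item:evolution:3} via the standard argument that $s\mapsto R(t,s)\tilde R(s,r)w$ is constant for $w\in\cV$.
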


%------------------------------------------------------------------------------%
\subsection{Rigorous concept of a bisolution and inverse}
\label{Rigorous concept of a bisolution and inverse}

Under the assumptions of Thm.~\ref{thm:evolution} it is possible to propose a rigorous version of a concept of a (left) bisolution and a (left) inverse, and check that they are satisfied by the $E^\bullet$ that we have constructed.

\begin{proposition}
  \mbox{}
  \begin{enumerate}
    \item Let $v \in C_\mathrm{c}(I,\cW_1)\cap C_\mathrm{c}^1(I,\cW_0)$.
          Then for $E^\bullet=E^\PJ,E_-^{(+)}, E_-^{(-)}$, $E_+^{(+)}, E_+^{(-)}$ we have
          \begin{alignat*}{2}
            E^\bullet \bigl(\partial_t + \im B(t)\bigr) v & = 0, \intertext{and for $E^\bullet= E^\wedge,E^\vee,E^\Feyn, E^\aFeyn$ we have} \quad
            E^\bullet \bigl(\partial_t + \im B(t)\bigr) v & = v.
          \end{alignat*}
    \item Let $w \in C_\mathrm{c}(I,\cW_1)$.
          Then for $E^\bullet=E^\PJ$ we have
          \begin{alignat*}{2}
            \bigl(\partial_t + \im B(t)\bigr) E^\bullet w & = 0,
            \intertext{and for $E^\bullet=E^\wedge,E^\vee$ we have}
            \bigl(\partial_t + \im B(t)\bigr) E^\bullet w & = w.
          \end{alignat*}
  \end{enumerate}
  \label{pos8}
\end{proposition}
\begin{proof}
  Let us prove (1) for $E^\Feyn$:
  \begin{align*}
    \Bigl(E^\Feyn \bigl(\partial_s+\ri B(s)\bigr)v\Bigr)(t)
    & = \int_{t_-}^t\Lambda^{\Feyn(+)}(t)R(t,s)\bigl(\partial_s+\ri B(s)\bigr)v(s)\dif s      \\
    & \quad -\int_t^{t_+}\Lambda^{\Feyn(-)}(t)R(t,s)\bigl(\partial_s+\ri B(s)\bigr)v(s)\dif s \\
    & = \int_{t_-}^t\Lambda^{\Feyn(+)}(t)\partial_s\bigl(R(t,s)v(s)\bigr)\dif s               \\
    & \quad -\int_t^{t_+}\Lambda^{\Feyn(-)}(t)\partial_s\bigl(R(t,s)v(s)\bigr)\dif s          \\
    & = \Lambda^{\Feyn(+)}(t)v(t)+\Lambda^{\Feyn(-)}(t)v(t)                                   \\
    & = v(t).
  \end{align*}

  Let us prove (2) for $E^\vee$:
  \begin{align*}
    \Bigl(\bigl(\partial_t+\ri B(t)\bigr)E^\vee w\Bigr)(t)
    & = \bigl(\partial_t+\ri B(t)\bigr)\int_{t_-}^tR(t,s)w(s)\dif s             \\
    & = R(t,t)w(t)+ \int_{t_-}^t\bigl(\partial_t+\ri B(t)\bigr)R(t,s)w(s)\dif s \\
    & = w(t).
    \tag*{$\square$}
  \end{align*}
\end{proof}

%******************************************************************************%
\section{Evolutions on pseudo-unitary spaces}
\label{Hilbertizable pseudo-unitary spaces}

\emph{Pre-pseudo-unitary} spaces are Hilbertizable spaces with a distinguished bounded Hermitian form.
They can be viewed as complexifications of \emph{pre-symplectic spaces} -- real spaces with a distinguished bounded antisymmetric form.

In practice, one usually assumes that the Hermitian or pre-symplectic form is non-degenerate.
Then these spaces are called \emph{pseudo-unitary}, resp.\ \emph{symplectic}.
A transformation preserving the structure of a pseudo-unitary, resp.\ symplectic space is called \emph{pseudo-unitary}, resp.\ \emph{symplectic}.

\emph{Krein spaces} constitute an especially well-behaved class of pseudo-unitary spaces.
The Krein structure adds interesting new features to bisolutions and inverses of $M \defn \partial_t+\ri B(t)$.
The most interesting new fact is the automatic validity of asymptotic complementarity if the ``in particle space'' is maximally positive and the ``out antiparticle space'' is maximally negative.
This implies the existence of the Feynman and anti-Feynman inverses.

We will also discuss generators of 1-parameter groups preserving the pseudo-unitary structure, called \emph{pseudo-unitary generators}.
In particular, we will introduce the so-called \emph{stable pseudo-unitary generators}, which possess positive \emph{Hamiltonians}.
They are distinguished both on physical and mathematical grounds.
Especially good properties have \emph{strongly stable pseudo-unitary generators}, whose positive Hamiltonians are bounded away from zero.

We discuss two constructions of a pseudo-unitary evolution $R(t,s)$.
starting from a time-depenent generator $I\ni t\mapsto B(t)$.
The first construction uses a nested pair of Hilbertizable spaces $\cW_1\subset \cW_0$, where $\cW_0$ is equipped with a Hermitian form $Q$, and $B(t):\cW_1\to\cW_0$.
The second construction uses a nested pair of Hilbertizable spaces $\cW_{\frac12}\subset \cW_{-\frac12}$ equipped with a pairing $Q$ and $B(t): \cW_{\frac12}\to\cW_{-\frac12}$.
The pseudo-unitary space $\cW_0$ is obtained by interpolation.
Later on we will use both constructions.

%------------------------------------------------------------------------------%
\subsection{Symplectic and pseudo-unitary spaces}

\begin{definition}
  A \emph{pre-symplectic space} is a real vector space $\cY$ equipped with an antisymmetric form $\omega$, called a \emph{pre-symplectic form}
  \begin{equation*}
    \cY \times \cY \ni (v,w) \mapsto \rinner{v}{\omega w} \in \RR.
  \end{equation*}
  If $\omega$ is non-degenerate, then  $\cY$ is called a
    \emph{symplectic} space.
  If the dimension of $\cY$ is infinite, we assume that $\cY$ is Hilbertizable and $\omega$ is bounded.
\end{definition}

\begin{definition}
  We will say that a bounded invertible operator $R$ on a pre-symplectic space $(\cY,\omega)$ \emph{preserves $\omega$} if
  \begin{equation*}
    \rinner{R v}{\omega R w} = \rinner{v}{\omega w}.
  \end{equation*}
  If in addition $\omega$ is non-degenerate, we will say that $R$ is \emph{symplectic}.
\end{definition}

\begin{definition}
  A \emph{pre-pseudo-unitary space} is a complex vector space $\cW$ equipped with a Hermitian form $Q$
  \begin{equation*}
    \cW \times \cW \ni (v,w) \mapsto \cinner{v}{Q w} \in \CC.
  \end{equation*}
  If $Q$ is non-degenerate, then  $\cW$ is called
   a \emph{pseudo-unitary space}.
  If the dimension of $\cW$ is infinite, we assume that $\cW$ is Hilbertizable and $Q$ is bounded.
\end{definition}

\begin{definition}\label{pseudo-uni}
  We will say that a bounded invertible operator $R$ on $(\cW,Q)$ \emph{preserves $Q$} if
  \begin{equation}\label{prepre}
    \cinner{R v}{Q R w} = \cinner{v}{Q w}.
  \end{equation}
  If in addition $Q$ is non-degenerate, we will say that $R$ is \emph{pseudo-unitary}.
\end{definition}

Note that even if one starts from a real (pre-)symplectic space, it is useful to consider its complexification.
In Subsect.~\ref{sub:Complexification of bilinear forms} we described how to pass from the real to complex formalism.
In this section we will treat the complex formalism as the standard one.

In the context of a pre-pseudo-unitary space treated as the complexification of a pre-symplectic space it is natural to consider conjugations that anti-preserve (and not preserve) $Q$, that is $ \bar{\cinner{v}{Q w}} = -\cinner{\bar v}{Q \bar w},$ see (\ref{antipreserves}).

\begin{definition}
  \label{def:conjugation.}
  An antilinear involution $v \mapsto \bar v$ on a pseudo-unitary space $(\cW,Q)$ which anti-preserves $Q$ and such that there exists a compatible scalar product $\cinner{\cdot}{\cdot}_\bullet$ satisfying
  \begin{equation*}
    \cinner{\bar v}{\bar w}_\bullet = \bar{\cinner{v}{w}}_\bullet,
  \end{equation*}
  will be called a \emph{conjugation on $(\cW,Q)$}.
\end{definition}
As in \eqref{real}, given a conjugation we can define the real subspace $\cW_\RR$ of $\cW$.
The restriction of $Q$ to $\cW_\RR$ is clearly a pre-symplectic space.

%------------------------------------------------------------------------------%
\subsection{Admissible involutions and Krein spaces}
\label{sub:Admissible involutions and Krein spaces}

Let $(\cW,Q)$ be a pre-pseudo-unitary space.

\begin{definition}
  \label{admissi}
  A (bounded) involution $S_\bullet$ on $\cW$ will be called \emph{admissible} if it preserves $Q$ and the scalar product
  \begin{equation}
    \label{eq:bull_space}
    {\cinner{v}{w}}_\bullet \defn \cinner{v}{Q S_\bullet w} = \cinner{S_\bullet v}{Q w}
  \end{equation}
  is compatible with the Hilbertizable structure of $\cW$.
  Sometimes we will write $\cW_\bullet$ to denote the space $\cW$ equipped with the scalar product~\eqref{eq:bull_space}.
\end{definition}

\begin{definition}
  A pre-pseudo-unitary space is called a \emph{Krein space} if it possesses an admissible involution.
\end{definition}

Clearly, every Krein space is pseudo-unitary.

\begin{proposition}
  If $S_\bullet$ is an admissible involution on $(\cW,Q)$, then $S_\bullet$ is self-adjoint and unitary on $\cW_\bullet$.
\end{proposition}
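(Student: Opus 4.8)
The plan is to verify directly that $S_\bullet$ is symmetric on the Hilbert space $\cW_\bullet$, then upgrade symmetry to self-adjointness using boundedness, and finally read off unitarity almost for free.

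First I would record the elementary identity underlying everything: since $S_\bullet$ preserves $Q$ and $S_\bullet^2 = \one$, substituting $S_\bullet w$ for $w$ in $\cinner{S_\bullet v}{Q S_\bullet w} = \cinner{v}{Q w}$ yields $\cinner{S_\bullet v}{Q w} = \cinner{v}{Q S_\bullet w}$. This is exactly the equality of the two defining expressions for $\cinner{\cdot}{\cdot}_\bullet$ asserted in~\eqref{eq:bull_space}, so both forms are available to use.

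Next, for symmetry, I would compute $\cinner{S_\bullet v}{w}_\bullet = \cinner{S_\bullet v}{Q S_\bullet w} = \cinner{v}{Q w}$, the last step by $Q$-preservation, and on the other hand $\cinner{v}{S_\bullet w}_\bullet = \cinner{v}{Q S_\bullet S_\bullet w} = \cinner{v}{Q w}$ using $S_\bullet^2 = \one$. Hence $\cinner{S_\bullet v}{w}_\bullet = \cinner{v}{S_\bullet w}_\bullet$ for all $v,w \in \cW$, so $S_\bullet$ is symmetric on $\cW_\bullet$. Since an admissible involution is bounded by definition and is defined on all of $\cW$, a bounded everywhere-defined symmetric operator coincides with its adjoint, so $S_\bullet$ is self-adjoint on $\cW_\bullet$.

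Finally, unitarity costs nothing more: from $S_\bullet^2 = \one$ we have $S_\bullet^{-1} = S_\bullet$, and combined with $S_\bullet^* = S_\bullet$ this gives $S_\bullet^* = S_\bullet^{-1}$, i.e.\ $S_\bullet$ is unitary on $\cW_\bullet$. (Equivalently, one checks the isometry $\cinner{S_\bullet v}{S_\bullet w}_\bullet = \cinner{S_\bullet v}{Q S_\bullet^2 w} = \cinner{S_\bullet v}{Q w} = \cinner{v}{Q S_\bullet w} = \cinner{v}{w}_\bullet$ and notes that $S_\bullet$ is onto because $S_\bullet^2 = \one$.) There is no substantial obstacle here; the only points needing a little care are tracking which of the two equivalent forms of $\cinner{\cdot}{\cdot}_\bullet$ is used at each step, and invoking the boundedness built into the definition of admissibility to pass from symmetric to self-adjoint.
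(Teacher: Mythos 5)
Your proof is correct and is exactly the routine verification that the paper leaves implicit (the proposition is stated there without proof, being regarded as immediate from Definition~\ref{admissi}). All steps check out: the identity $\cinner{S_\bullet v}{Qw}=\cinner{v}{QS_\bullet w}$ follows from $Q$-preservation plus $S_\bullet^2=\one$, symmetry plus boundedness gives self-adjointness, and $S_\bullet^*=S_\bullet=S_\bullet^{-1}$ gives unitarity.
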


For any admissible involution $S_\bullet$, we define the corresponding \emph{particle projection} $\Pi_\bullet^{(+)}$ and \emph{particle space} $\cZ_\bullet^{(+)}$, as well as the \emph{antiparticle projection} $\Pi_\bullet^{(-)}$ and \emph{antiparticle space} $\cZ_\bullet^{(-)}$, as in~\eqref{projo-}.
Note the following relations:
\begin{align*}
  {\cinner{v}{w}}_\bullet & = {\cinner{\Pi_\bullet^{(+)} v}{\Pi_\bullet^{(+)} w}}_\bullet + {\cinner{\Pi_\bullet^{(-)} v}{\Pi_\bullet^{(-)} w}}_\bullet, \\
  \cinner{v}{Q w} & = {\cinner{\Pi_\bullet^{(+)} v}{\Pi_\bullet^{(+)} w}}_\bullet - {\cinner{\Pi_\bullet^{(-)} v}{\Pi_\bullet^{(-)} w}}_\bullet.
\end{align*}

Let us make an additional comment on Krein spaces with conjugations.

\begin{proposition}
  \label{prop:S-real}
  Suppose that $(\cW,Q)$ is a Krein space with  conjugation.
  If $S_\bullet$ is an admissible anti-real involution, then $\im S_\bullet$ is real and we have
  \begin{equation*}
    \bar{\Pi_\bullet^{(+)}} = \Pi_\bullet^{(-)}, \quad \bar{\cZ_\bullet^{(+)}} = \cZ_\bullet^{(-)},
  \end{equation*}
  so that $\cW = \cZ_\bullet^{(+)} \oplus \bar{\cZ_\bullet^{(+)}}$.
\end{proposition}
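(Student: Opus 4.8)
The plan is to unwind the definitions and use the algebraic relation between a conjugation, an admissible involution, and the Hermitian form~$Q$. First I would recall that $S_\bullet$ admissible means $S_\bullet$ preserves $Q$ and $\cinner{v}{w}_\bullet \defn \cinner{v}{Q S_\bullet w}$ is a compatible scalar product; anti-real means $\bar{S_\bullet} = -S_\bullet$, i.e.\ $\bar{S_\bullet v} = -S_\bullet \bar v$ for all $v$. The first claim, that $\im S_\bullet$ is real, is then immediate: $\bar{\im S_\bullet v} = -\im\, \bar{S_\bullet v} = -\im(-S_\bullet \bar v) = \im S_\bullet \bar v$, which says $\overline{\im S_\bullet} = \im S_\bullet$. (Here I use that the conjugation is antilinear, so it turns the scalar $\im$ into $-\im$.)

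Next I would show $\bar{\Pi_\bullet^{(+)}} = \Pi_\bullet^{(-)}$. By definition $\Pi_\bullet^{(\pm)} = \frac12(\one \pm S_\bullet)$, and $\bar{\one} = \one$ since the conjugation is an involution fixing the identity operator (indeed $\bar{\one\, \bar v} = \bar{\bar v} = v$). Hence
\begin{equation*}
  \bar{\Pi_\bullet^{(+)}} = \overline{\tfrac12(\one + S_\bullet)} = \tfrac12(\bar{\one} + \bar{S_\bullet}) = \tfrac12(\one - S_\bullet) = \Pi_\bullet^{(-)},
\end{equation*}
using that $R \mapsto \bar R$ is real-linear (it is defined by $\bar R v = \bar{R \bar v}$, which is additive and commutes with real scalars). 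Applying $\bar{\,\cdot\,}$ to $\cZ_\bullet^{(+)} = \Ran(\Pi_\bullet^{(+)})$ and using that the conjugation is a bijection of $\cW$ onto itself gives $\bar{\cZ_\bullet^{(+)}} = \bar{\Ran(\Pi_\bullet^{(+)})} = \Ran(\bar{\Pi_\bullet^{(+)}}) = \Ran(\Pi_\bullet^{(-)}) = \cZ_\bullet^{(-)}$.

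Finally, the direct-sum statement $\cW = \cZ_\bullet^{(+)} \oplus \bar{\cZ_\bullet^{(+)}}$ is just a rephrasing: $(\Pi_\bullet^{(+)},\Pi_\bullet^{(-)})$ is by construction a pair of complementary projections (Subsect.~\ref{sub:Involutions}), so $\cW = \cZ_\bullet^{(+)} \oplus \cZ_\bullet^{(-)}$, and we have just identified $\cZ_\bullet^{(-)} = \bar{\cZ_\bullet^{(+)}}$. I do not anticipate a genuine obstacle here — the proposition is essentially bookkeeping with antilinearity; the only point requiring a moment's care is keeping track that the conjugation is antilinear (so it flips the sign of $\im$) while the map $R \mapsto \bar R$ on operators is real-linear (so it does not flip signs of real coefficients like $\tfrac12$), which is exactly what makes $\im S_\bullet$ real while $S_\bullet$ itself is anti-real.
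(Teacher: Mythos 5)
Your proof is correct, and since the paper states this proposition without proof (treating it as routine bookkeeping), your argument is precisely the verification the authors leave to the reader: antilinearity of the conjugation makes $\im S_\bullet$ real while $S_\bullet$ is anti-real, real-linearity of $R\mapsto\bar R$ gives $\bar{\Pi_\bullet^{(+)}}=\Pi_\bullet^{(-)}$, and the direct sum follows from complementarity of the projections. No gaps.
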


%------------------------------------------------------------------------------%
\subsection{Basic constructions in Krein spaces}

Let $(\cW,Q)$ be a Krein space.

\begin{definition}
  Let $\cZ\subset \cW$.
  We define its \emph{$Q$-orthogonal complement} as follows: \[\cZ^{\perp Q} \defn \{w\in\cW\mid\cinner{w}{Qv}=0,\; v\in\cZ\}.
  \]
  If $\cinner{\cdot}{\cdot}_\bullet$ is a scalar product, we also have the \emph{$\bullet$-orthogonal complement}
  \[\cZ^{\perp \bullet} \defn \{w\in\cW\mid\cinner{w}{v}_\bullet=0,\; v\in\cZ\}.\]
\end{definition}
\begin{proposition}
  \label{basic0}
  \begin{enumerate}
    \item If $\cZ$ is a closed subspace, then so is $\cZ^{\perp Q}$, and
          $(\cZ^{\perp Q})^{\perp Q}=\cZ$.
    \item If $\cZ_1,\cZ_2$ are complementary in $\cW$, then so are
          $\cZ_1^{\perp Q},\cZ_2^{\perp Q}$.
    \item If $\cinner{v}{Qw}=\cinner{v}{S_\bullet w}_\bullet$ (equivalently, if $S_\bullet$ is admissible), then $\cZ^{\perp Q}=S_\bullet\cZ^{\perp\bullet}$.
  \end{enumerate}
\end{proposition}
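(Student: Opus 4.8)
The plan is to verify each of the three assertions in turn, in order, using the Hermitian (pseudo-unitary) structure of the Krein space together with the bounded admissible involution $S_\bullet$ that exists by definition of a Krein space. Throughout, I fix such an $S_\bullet$ and let $\cinner{\cdot}{\cdot}_\bullet$ be the associated compatible scalar product from~\eqref{eq:bull_space}, so that $\cZ$ is closed in the Hilbertizable sense precisely when it is closed in $\cW_\bullet$, and all the usual Hilbert-space orthogonality facts are available in $\cW_\bullet$.

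For part (1), I would first show $(\cZ^{\perp Q})^{\perp Q}=\cZ$ via part (3), deferring to it logically: since $\cZ^{\perp Q}=S_\bullet \cZ^{\perp\bullet}$ and $S_\bullet$ is a bounded involution, applying $\perp Q$ again and using $S_\bullet^2=\one$ together with the standard Hilbert-space identity $(\cZ^{\perp\bullet})^{\perp\bullet}=\cZ$ (valid for closed $\cZ$ in $\cW_\bullet$) gives the claim; closedness of $\cZ^{\perp Q}$ follows because $\cZ^{\perp\bullet}$ is closed in $\cW_\bullet$ and $S_\bullet$ is a homeomorphism. So the natural order is: prove (3) first as a purely algebraic computation — $(v|Qw)=(v|S_\bullet w)_\bullet$ means $w\in\cZ^{\perp Q}$ iff $(v|S_\bullet w)_\bullet=0$ for all $v\in\cZ$ iff $S_\bullet w\in\cZ^{\perp\bullet}$ iff $w\in S_\bullet\cZ^{\perp\bullet}$ — and then derive (1) from it.

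For part (2), suppose $\cZ_1,\cZ_2$ are complementary, i.e.\ $\cZ_1\cap\cZ_2=\{0\}$ and $\cZ_1+\cZ_2=\cW$. I would pass through $S_\bullet$ again: by (3), $\cZ_i^{\perp Q}=S_\bullet\cZ_i^{\perp\bullet}$, so it suffices to show $\cZ_1^{\perp\bullet}$ and $\cZ_2^{\perp\bullet}$ are complementary and then apply the bounded invertible $S_\bullet$. Taking $\bullet$-orthogonal complements reverses inclusions and converts sums to intersections in the Hilbert space $\cW_\bullet$: $(\cZ_1+\cZ_2)^{\perp\bullet}=\cZ_1^{\perp\bullet}\cap\cZ_2^{\perp\bullet}$ and, using closedness and the fact that for closed subspaces whose sum is closed one has $(\cZ_1\cap\cZ_2)^{\perp\bullet}=\overline{\cZ_1^{\perp\bullet}+\cZ_2^{\perp\bullet}}$. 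From $\cZ_1+\cZ_2=\cW$ we get $\cZ_1^{\perp\bullet}\cap\cZ_2^{\perp\bullet}=\{0\}$, and from $\cZ_1\cap\cZ_2=\{0\}$ we get $\cZ_1^{\perp\bullet}+\cZ_2^{\perp\bullet}$ dense; the remaining point is that this sum is actually all of $\cW_\bullet$ (not merely dense), which is where I expect the only real subtlety to lie.

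The main obstacle is therefore this closedness/surjectivity issue in part (2): a priori $\cZ_1^{\perp\bullet}+\cZ_2^{\perp\bullet}$ is only dense. The clean way around it is to observe that complementarity of $\cZ_1,\cZ_2$ means the projection $P$ onto $\cZ_1$ along $\cZ_2$ is a bounded operator on $\cW$ (by the closed graph theorem, since both subspaces are closed); then $\one-P^{*_\bullet}$ is a bounded projection in $\cW_\bullet$ whose range is $\cZ_2^{\perp\bullet}$ and whose kernel is $\cZ_1^{\perp\bullet}$ — indeed $w\in\Ran(P)^{\perp\bullet}=\cZ_1^{\perp\bullet}$ iff $P^{*_\bullet}w=0$, and similarly $\cZ_2^{\perp\bullet}=\Ker(P)^{\perp\bullet}=\Ran(P^{*_\bullet})$. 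A bounded idempotent immediately yields the direct-sum decomposition $\cW_\bullet=\cZ_2^{\perp\bullet}\oplus\cZ_1^{\perp\bullet}$, hence after applying $S_\bullet$ the pair $(\cZ_1^{\perp Q},\cZ_2^{\perp Q})$ is complementary. I would present (2) in this form — reduce to boundedness of the algebraic projection, dualize, then transport by $S_\bullet$ — rather than fighting with density arguments directly.
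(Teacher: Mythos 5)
The paper states Proposition \ref{basic0} without proof, so there is no argument of the authors to compare yours against; judged on its own, your proof is correct and complete. Proving (3) first as a purely algebraic equivalence (using Hermiticity of $Q$ to pass from $(w|Qv)=0$ to $(v|S_\bullet w)_\bullet=0$) and then transporting (1) and (2) to the Hilbert space $\cW_\bullet$ through the admissible involution is the natural route, and your treatment of the only real subtlety in (2) — that $\cZ_1^{\perp\bullet}+\cZ_2^{\perp\bullet}$ is a priori only dense — by dualizing the bounded idempotent $P$ (bounded by the closed graph theorem, so that $\cW_\bullet=\Ran(P^{*\bullet})\oplus\Ker(P^{*\bullet})=\cZ_2^{\perp\bullet}\oplus\cZ_1^{\perp\bullet}$) is the standard and correct fix. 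Two minor points worth tightening. First, in (1) the step $(S_\bullet\cY)^{\perp\bullet}=S_\bullet\,\cY^{\perp\bullet}$ is not a consequence of $S_\bullet$ being a bounded involution alone: it uses that $S_\bullet$ is self-adjoint (equivalently, unitary) on $\cW_\bullet$, which holds for admissible involutions and is recorded in the paper, but should be cited explicitly. Second, in (2) you have interchanged the range and kernel of $\one-P^{*\bullet}$: in fact $\Ran(\one-P^{*\bullet})=\Ker(P^{*\bullet})=\cZ_1^{\perp\bullet}$ and $\Ker(\one-P^{*\bullet})=\Ran(P^{*\bullet})=\cZ_2^{\perp\bullet}$; this labeling slip does not affect the conclusion, since either way the bounded idempotent yields the direct sum $\cW_\bullet=\cZ_1^{\perp\bullet}\oplus\cZ_2^{\perp\bullet}$, which $S_\bullet$ then carries to the complementarity of $(\cZ_1^{\perp Q},\cZ_2^{\perp Q})$.
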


\begin{definition}
  Let $A\in B(\cW)$.
  We define its $Q$-adjoint as follows: \[\cinner{A^{*Q}v}{Qw}=\cinner{v}{QAw},\quad v,w\in\cW.
  \]
  If $\cinner{\cdot}{\cdot}_\bullet$ is a scalar product, we also have the \emph{$\bullet$-adjoint of $A$}
  \[\cinner{A^{*\bullet}v}{w}_\bullet=\cinner{v}{Aw}_\bullet,\quad v,w\in\cW.\]
\end{definition}
\begin{proposition}
  \mbox{}
  \begin{enumerate}
    \item If $\cinner{v}{Qw}=\cinner{v}{S_\bullet w}_\bullet$ (equivalently, if $S_\bullet$ is admissible), then $A^{* Q}=S_\bullet A^{*\bullet}
            S_\bullet$.
    \item Let $
            (\Pi^{(+)},\Pi^{(-)})$ be a pair of complementary projections.
          Then $(\Pi^{(+)*Q}, \Pi^{(-)*Q})$ is also a pair of complementary projections and
          \begin{align*}
            \cR(\Pi^{(\pm)*Q})= \cN(\Pi^{(\mp)*Q})= \cR(\Pi^{(\mp)})^{\perp Q}= \cN(\Pi^{(\pm)})^{\perp Q}.
          \end{align*}
  \end{enumerate}
\end{proposition}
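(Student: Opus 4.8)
The plan is to prove part (1) by a short computation comparing the defining relations of the three adjoints, and then to obtain part (2) as an essentially formal consequence of the fact that $A\mapsto A^{*Q}$ is an additive, unit-preserving, order-reversing involution on $B(\cW)$, supplemented by a nondegeneracy argument identifying nullspaces with $Q$-orthogonal complements.

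For part (1) I would first observe that the hypothesis $(v|Qw)=(v|S_\bullet w)_\bullet$ is exactly the admissibility relation of Def.~\ref{admissi}, so $S_\bullet$ is the admissible involution whose associated scalar product is $(\cdot|\cdot)_\bullet$; in particular $S_\bullet^2=\one$ and $S_\bullet$ is $\bullet$-self-adjoint and $\bullet$-unitary by the proposition above. Starting from the definition $(A^{*Q}v|Qw)=(v|QAw)$ and substituting $(v|Q\,\cdot\,)=(v|S_\bullet\,\cdot\,)_\bullet$ on both sides, one gets $(S_\bullet A^{*Q}v|w)_\bullet=(A^{*Q}v|S_\bullet w)_\bullet=(v|QAw)=(v|S_\bullet Aw)_\bullet=(A^{*\bullet}S_\bullet v|w)_\bullet$ for all $v,w$; since $(\cdot|\cdot)_\bullet$ is nondegenerate this gives $S_\bullet A^{*Q}=A^{*\bullet}S_\bullet$, and multiplying by $S_\bullet$ and using $S_\bullet^2=\one$ yields $A^{*Q}=S_\bullet A^{*\bullet}S_\bullet$. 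As a free byproduct (reading the computation in reverse) this also shows that $A^{*Q}$ exists and is bounded for every $A\in B(\cW)$, since $A^{*\bullet}$ is the Hilbert-space adjoint on $\cW_\bullet$ and $S_\bullet$ is bounded; this is what I will tacitly use in part (2).

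For part (2) I would first record the elementary properties of $A\mapsto A^{*Q}$, all immediate from its defining relation and the nondegeneracy of $Q$ on the Krein space: it is additive, sends $\one$ to $\one$, and satisfies $(AB)^{*Q}=B^{*Q}A^{*Q}$. Applying these to a complementary pair $(\Pi^{(+)},\Pi^{(-)})$ — i.e.\ $\Pi^{(+)}+\Pi^{(-)}=\one$, $(\Pi^{(\pm)})^2=\Pi^{(\pm)}$, $\Pi^{(+)}\Pi^{(-)}=\Pi^{(-)}\Pi^{(+)}=0$ — gives at once $\Pi^{(+)*Q}+\Pi^{(-)*Q}=\one$, idempotency of each $\Pi^{(\pm)*Q}$, and $\Pi^{(+)*Q}\Pi^{(-)*Q}=\Pi^{(-)*Q}\Pi^{(+)*Q}=0$, so $(\Pi^{(+)*Q},\Pi^{(-)*Q})$ is again a pair of complementary projections, bounded and hence with closed ranges. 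For any complementary pair of projections the range of one is the nullspace of the other, which yields $\cR(\Pi^{(\pm)*Q})=\cN(\Pi^{(\mp)*Q})$. For the middle equality I would unwind definitions: $v\in\cN(\Pi^{(\mp)*Q})$ iff $(\Pi^{(\mp)*Q}v|Qw)=0$ for all $w$ (nondegeneracy of $Q$) iff $(v|Q\Pi^{(\mp)}w)=0$ for all $w$ (definition of $*Q$), i.e.\ exactly that $v$ is $Q$-orthogonal to $\cR(\Pi^{(\mp)})$, so $v\in\cR(\Pi^{(\mp)})^{\perp Q}$. Finally $\cR(\Pi^{(\mp)})=\cN(\Pi^{(\pm)})$ — again because $(\Pi^{(+)},\Pi^{(-)})$ is complementary — gives the last equality.

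There is no genuine obstacle here; the only points requiring care rather than effort are the order reversal in $(AB)^{*Q}=B^{*Q}A^{*Q}$ when handling products of projections, and the left/right orientation in the definitions of $\cZ^{\perp Q}$ and of the $Q$-adjoint, so that the nullspace–orthogonal-complement identification comes out with precisely the index pattern stated. One could alternatively phrase all of part (2) invariantly via the linear isomorphism from $\cW$ onto its anti-dual induced by the nondegenerate bounded form $Q$, under which $A^{*Q}$ becomes an honest Banach-space adjoint and the claimed identities reduce to the standard facts about adjoints of complementary projections.
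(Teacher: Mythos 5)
Your proof is correct. The paper states this proposition without proof, and your argument is exactly the intended one: part (1) by substituting the admissibility relation $(v|Qw)=(v|S_\bullet w)_\bullet$ into the defining identities of the two adjoints and using that $S_\bullet$ is $\bullet$-self-adjoint with $S_\bullet^2=\one$, and part (2) by noting that $A\mapsto A^{*Q}$ is additive, unital and product-reversing (so it carries a complementary pair of projections to a complementary pair), combined with the nondegeneracy of $Q$ to identify $\cN(\Pi^{(\mp)*Q})$ with $\cR(\Pi^{(\mp)})^{\perp Q}$; your explicit attention to the orientation conventions in the definitions of $\cZ^{\perp Q}$ and $A^{*Q}$ is exactly the care the statement requires.
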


\begin{proposition}
  Let $\cZ_\bullet^{(+)}$ be a closed subspace of $\cW$.
  Set $\cZ_\bullet^{(-)} \defn \cZ_\bullet^{(+)\perp Q}$.
  The following conditions are equivalent:
  \begin{enumerate}
    \item $\cZ_\bullet^{(+)}$ satisfies
          \begin{align}
            \label{pos1}
            v\in\cZ_\bullet^{(+)}                       & \Rightarrow\cinner{v}{Qv}\geq0, \\
            \label{pos2}
            \text{and }\quad      v\in\cZ_\bullet^{(-)} & \Rightarrow\cinner{v}{Qv}\leq0.
          \end{align}
    \item
          $\cZ_\bullet^{(+)}$ is a maximal closed subspace of $\cW$ with the property (\ref{pos1}).
    \item The spaces $\cZ_\bullet^{(+)}$ and $\cZ_\bullet^{(-)}$ are complementary, and if $(\Pi_\bullet^{(+)},\Pi_\bullet^{(-)})$ is the corresponding pair of projections, then
          $    S_\bullet \defn  \Pi_\bullet^{(+)}-\Pi_\bullet^{(-)}$
          is an admissible involution.
  \end{enumerate}
  \label{pos3}
\end{proposition}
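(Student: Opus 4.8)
The plan is to prove the three implications (3)~$\Rightarrow$~(1), (1)~$\Leftrightarrow$~(2), and (2)~$\Rightarrow$~(3), using throughout a \emph{reference} admissible involution $S$ — available because $(\cW,Q)$ is a Krein space — with its fundamental decomposition $\cW=\cW^{(+)}\oplus\cW^{(-)}$ and its compatible scalar product $(\cdot\,|\,\cdot)_S$, in which $(v|Qw)=(v|Sw)_S$ and $S=P^{(+)}-P^{(-)}$ is a self-adjoint unitary. Then all topological statements (closedness, boundedness, ``compatibility'') can be checked inside the Hilbert space $\cW_S$.

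\emph{(3)~$\Rightarrow$~(1)} is essentially a one-line computation: if $S_\bullet$ is admissible, $(v|v)_\bullet:=(v|QS_\bullet v)$ is a compatible scalar product, $(v|Qv)=(v|v)_\bullet\ge0$ for $v\in\cZ_\bullet^{(+)}$ (where $S_\bullet v=v$), and $(v|Qv)=-(v|v)_\bullet\le0$ for $v\in\cZ_\bullet^{(-)}$ (where $S_\bullet v=-v$). Here one first checks, via a short argument with $(\cdot\,|\,\cdot)_\bullet$ (equivalently: $\Pi_\bullet^{(+)}$ is $Q$-self-adjoint, so invoke the recorded identities for $Q$-adjoints of complementary projections together with Prop.~\ref{basic0}), that the negative space attached to $S_\bullet$ as in~\eqref{projo-} is indeed $\cZ_\bullet^{(+)\perp Q}$, matching the definition of $\cZ_\bullet^{(-)}$.

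\emph{(1)~$\Leftrightarrow$~(2):} pass to the graph picture. Property~\eqref{pos1} says $(v|Sv)_S\ge0$ on $\cZ_\bullet^{(+)}$, which forces $\norm{P^{(-)}v}_S\le\norm{P^{(+)}v}_S$; hence $P^{(+)}$ is bounded below on $\cZ_\bullet^{(+)}$ and $\cZ_\bullet^{(+)}$ is the graph of a contraction $K\colon\cD\to\cW^{(-)}$ over the closed subspace $\cD:=P^{(+)}\cZ_\bullet^{(+)}\subseteq\cW^{(+)}$. A direct computation identifies $\cZ_\bullet^{(-)}=\cZ_\bullet^{(+)\perp Q}$ with the sum of $\cD^{\perp}\cap\cW^{(+)}$ and the graph of $K^{*}\colon\cW^{(-)}\to\cD$; evaluating $(\cdot\,|\,S\,\cdot)_S$ on a nonzero vector of the first summand then shows that~\eqref{pos2} holds if and only if $\cD=\cW^{(+)}$, i.e.\ if and only if $\cZ_\bullet^{(+)}$ cannot be properly enlarged inside the closed subspaces with property~\eqref{pos1}. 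Reading the same computation backwards gives that maximality forces both~\eqref{pos1} and~\eqref{pos2}.

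\emph{(2)~$\Rightarrow$~(3):} with $\cD=\cW^{(+)}$, $\cZ_\bullet^{(+)}$ is the graph of a contraction $K\colon\cW^{(+)}\to\cW^{(-)}$ and $\cZ_\bullet^{(-)}$ the graph of $K^{*}$, so $\cZ_\bullet^{(+)}\cap\cZ_\bullet^{(-)}\cong\Ker(\one-K^{*}K)$ and $\cZ_\bullet^{(+)}+\cZ_\bullet^{(-)}\cong\Ran(\one-K^{*}K)$; thus complementarity amounts to the invertibility of $\one-K^{*}K$. Granted that, the oblique projections $(\Pi_\bullet^{(+)},\Pi_\bullet^{(-)})$ are bounded, $S_\bullet:=\Pi_\bullet^{(+)}-\Pi_\bullet^{(-)}$ has $S_\bullet^{2}=\one$, it preserves $Q$ because $\cZ_\bullet^{(+)}$ and $\cZ_\bullet^{(-)}$ are $Q$-orthogonal (immediate from $\cZ_\bullet^{(-)}=\cZ_\bullet^{(+)\perp Q}$ and Hermiticity of $Q$), and $(v|QS_\bullet v)=(\Pi_\bullet^{(+)}v|Q\Pi_\bullet^{(+)}v)-(\Pi_\bullet^{(-)}v|Q\Pi_\bullet^{(-)}v)$ is a symmetric, nonnegative form that is two-sidedly bounded (boundedness of $\Pi_\bullet^{(\pm)}$ plus~\eqref{pos1}--\eqref{pos2}), hence a compatible scalar product; so $S_\bullet$ is admissible. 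The step I expect to be the genuine obstacle is exactly this complementarity — that the $Q$-orthogonal complement of a maximal positive subspace is an honest topological complement and not merely a dense one — since the analogous claim is false for a general closed positive subspace and hinges on the transversality encoded in the invertibility of $\one-K^{*}K$.
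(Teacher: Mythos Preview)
The paper states this proposition without proof, so there is no argument to compare against. Your treatment of (3)$\Rightarrow$(1) and of (1)$\Leftrightarrow$(2) via the angular-operator representation is standard and correct. The gap you flag in (2)$\Rightarrow$(3), however, is not merely an unfinished step: the implication is \emph{false} as written. Maximality of a closed nonnegative subspace gives only $\|K\|\le1$ for the angular operator, whereas both the complementarity of $\cZ_\bullet^{(+)}$ and $\cZ_\bullet^{(-)}$ (i.e.\ invertibility of $\one-K^*K$) and the lower bound needed to make $(v|QS_\bullet v)$ a \emph{compatible} scalar product require the strict inequality $\|K\|<1$. (Your claim that this form is ``two-sidedly bounded'' only establishes the upper bound.) A concrete counterexample: on $\cW=\ell^2\oplus\ell^2$ with $Q=S=\diag(\one,-\one)$, take $\cZ_\bullet^{(+)}$ to be the graph of the diagonal contraction $Ke_n=(1-\tfrac1n)e_n$. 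Then (1) and (2) hold, yet $\one-K^*K$ has spectrum accumulating at $0$, so $\cZ_\bullet^{(+)}+\cZ_\bullet^{(-)}$ is dense in $\cW$ but not equal to it, and (3) fails.

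What is true, and what the paper actually uses downstream (it only ever passes from an ``m-positive'' subspace to an admissible involution), is the equivalence of (3) with the \emph{uniformly} positive analogues of (1) and (2): strengthen \eqref{pos1} to $(v|Qv)\ge c\,\|v\|_S^2$ for some $c>0$, equivalently $\|K\|<1$. With that hypothesis your outline for (2)$\Rightarrow$(3) goes through, since $\one-K^*K\ge(1-\|K\|^2)\one$ is invertible and the induced form is bounded below by a multiple of $\|\cdot\|_S^2$. For the paper's purposes one should simply read ``m-positive'' as condition (3); the asserted equivalence with (1) and (2) as literally stated is an over-statement.
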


\begin{definition}
  If $\cZ_\bullet^{(+)}$ satisfies the conditions of Prop.~\ref{pos3}, then it is called a \emph{maximally positive subspace}.
  Analogously we define \emph{maximally negative subspaces}.
\end{definition}

%------------------------------------------------------------------------------%
\subsection{Pairs of admissible involutions}
\label{sub:Pairs of admissible involutions}

Let $S_1$, $S_2$ be a pair of admissible involutions on a Krein space $(\cW,Q)$.
We will describe some structural properties of such a pair.

Let $\Pi_i^{(+)},\Pi_i^{(-)},\cZ_i^{(+)},\cZ_i^{(-)}$, $i=1,2$, be defined as in (\ref{projo-}).
Set
\begin{equation*}
  K \defn S_2 S_1, \quad c \defn \Pi_1^{(+)} \frac{\one-K}{\one+K} \Pi_1^{(-)},
\end{equation*}
where $c$ is interpreted as an operator from $\cZ_1^{(-)}$ to $\cZ_1^{(+)}$.
%For brevity, we will usually write $K$, $c$ for $K(2,1)$, $c(2,1)$.

\begin{proposition}
  $K$ is pseudo-unitary and invertible.
  $K$ is positive and  $\norm{c}<1$  with respect to ${\cinner{\cdot}{\cdot}}_1$ and ${\cinner{\cdot}{\cdot}}_2$.
  We have
  \begin{align}
    \label{chazar0}
    S_1KS_1=S_2 K S_2                                          & = K^{-1},                  \\ \label{chazar}
    S_1\frac{\one-K}{\one+K} S_1= S_2\frac{\one-K}{\one+K} S_2 & = - \frac{\one-K}{\one+K}.
  \end{align}
\end{proposition}
\begin{proof}
  $K$ is pseudo-unitary as the product of two pseudo-unitary transformations.
  The inequality
  \begin{align*}
    {\cinner{v}{K v}}_1 & = \cinner{S_1 v}{Q S_2 S_1 v} = {\cinner{S_1 v}{S_1 v}}_2 \geq a {\cinner{S_1 v}{S_1 v}}_1 = a {\cinner{v}{v}}_1
  \end{align*}
  with $a>0$ shows the positivity of $K$ with respect to $\cinner{\cdot}{\cdot}_1$ and its invertibility.
  This implies $\|\frac{\one-K}{\one+K}\|<1$.
  Hence $\|c\|<1$.

  The identities \eqref{chazar0} and \eqref{chazar} are direct consequences of the definition of $K$ and $S_1^2=S_2^2=\one$.
\end{proof}

\begin{proposition}
  Using the decomposition $\cW = \cZ_1^{(+)} \oplus \cZ_1^{(-)}$ we have
  \begin{subequations}
    \begin{alignat}{2}\label{chazar1}
      \frac{\one-K}{\one+K} & =
      \begin{bmatrix}
        0   & c \\
        c^* & 0
      \end{bmatrix}
      ,                                    \\
      K                     & = \mathrlap{
        \begin{bmatrix}
          (\one+cc^*)(\one-cc^*)^{-1} & -2c(\one-c^*c)^{-1}         \\
          -2c^*(\one-cc^*)^{-1}       & (\one+c^*c)(\one-c^*c)^{-1}
        \end{bmatrix}
        ,}
      \label{chazar2}
      \\
      \Pi_1^{(+)}           & =
      \begin{bmatrix}
        \one & 0 \\
        0    & 0
      \end{bmatrix}
      ,
                            & \quad
      \Pi_2^{(+)}           & =
      \begin{bmatrix}
        (\one-cc^*)^{-1}     & c(\one-c^*c)^{-1}     \\
        -c^*(\one-cc^*)^{-1} & -c^*c(\one-c^*c)^{-1}
      \end{bmatrix}
      ,
      \label{chazar22}
      \\
      \Pi_1^{(-)}           & =
      \begin{bmatrix}
        0 & 0    \\
        0 & \one
      \end{bmatrix}
      ,
                            & \quad
      \Pi_2^{(-)}           & =
      \begin{bmatrix}
        -cc^*(\one-cc^*)^{-1} & -c(\one-c^*c)^{-1} \\
        c^*(\one-cc^*)^{-1}   & (\one-c^*c)^{-1}
      \end{bmatrix}
      ,
      \label{chazar23}
      \\
      S_1                   & =
      \begin{bmatrix}
        \one & 0     \\
        0    & -\one
      \end{bmatrix}
      ,
                            & \quad
      S_2                  = &
      \begin{bmatrix}
        (\one+cc^*)(\one-cc^*)^{-1} & 2c(\one-c^*c)^{-1}           \\
        -2c^*(\one-cc^*)^{-1}       & -(\one+c^*c)(\one-c^*c)^{-1}
      \end{bmatrix}
      .
      \label{chazar25}
    \end{alignat}
  \end{subequations}
  Moreover, if $S_1$ is an admissible involution and $\norm{c} < 1$, then $S_2$ given as in \eqref{chazar25} is an admissible involution.
\end{proposition}
\begin{proof}
  \eqref{chazar} implies \eqref{chazar1}.
  From the definition of $c$ (or \eqref{chazar1}) we obtain
  \begin{align*}
    K & =
          \begin{bmatrix}
            \one & -c \\ -c^* & \one
          \end{bmatrix}
    \begin{bmatrix}
      \one & c \\ c^* & \one
    \end{bmatrix}
    ^{-1}                          \\  & =
    \begin{bmatrix}
      \one & -c \\ -c^* & \one
    \end{bmatrix}
       \begin{bmatrix}
      (\one-cc^*)^{-1} & -c(\one-c^*c)^{-1} \\ -c^*(\one-cc^*)^{-1} & (\one-c^*c)^{-1}
    \end{bmatrix}
       .
  \end{align*}
  This yields \eqref{chazar2}.

  From $S_2 = K S_1$ we obtain~\eqref{chazar22}, \eqref{chazar23} and \eqref{chazar25}.
\end{proof}

The involutions $S_1$ and $S_2$ correspond to the pairs of complementary subspaces $(\cZ_1^{(+)},\cZ_1^{(-)})$, resp.\ $(\cZ_2^{(+)},\cZ_2^{(-)})$.
The following proposition implies the existence of two other direct sum decompositions.
This fact plays an important role in the construction of the (in-out) Feynman inverse.

\begin{proposition}
  \label{prop:asymp_compl_proof}
  The pairs of subspaces $(\cZ_1^{(+)}, \cZ_2^{(-)})$ and $(\cZ_2^{(+)}, \cZ_1^{(-)})$ are complementary.
  Here are the corresponding projections:
  \begin{alignat*}{3}
    \Lambda_{12}^{(+)} & =
    \begin{bmatrix}
      \one & c \\ 0 & 0
    \end{bmatrix}
    & = \Pi_1^{(+)} \Upsilon^{-1} \Pi_2^{(+)} & \quad & \text{ projects onto $\cZ_1^{(+)}$ along $\cZ_2^{(-)}$},
    %\label{lamb1}
    \\
    \Lambda_{21}^{(-)}     & =
    \begin{bmatrix}
      0 & -c   \\
      0 & \one
    \end{bmatrix}
    & = \Pi_2^{(-)} \Upsilon^{-1} \Pi_1^{(-)} & \quad & \text{ projects onto $\cZ_2^{(-)}$ along $\cZ_1^{(+)}$},
    % \label{lamb2}
    \\
    \Lambda_{21}^{(+)}     & =
    \begin{bmatrix}
      \one & 0 \\
      -c^* & 0
    \end{bmatrix}
    & = \Pi_2^{(+)} \Upsilon^{-1} \Pi_1^{(+)} & \quad & \text{ projects onto $\cZ_2^{(+)}$ along $\cZ_1^{(-)}$},
    % \label{lamb3}
    \\
    \Lambda_{12}^{(-)}     & =
    \begin{bmatrix}
      0   & 0    \\
      c^* & \one
    \end{bmatrix}
    & = \Pi_1^{(-)} \Upsilon^{-1} \Pi_2^{(-)} & \quad & \text{ projects onto $\cZ_1^{(-)}$ along $\cZ_2^{(+)}$},
    % \label{lamb4}
  \end{alignat*}
  where
  \begin{equation*}
    \Upsilon^{-1} =
    \begin{bmatrix}
      \one-cc^* & 0         \\
      0         & \one-c^*c
    \end{bmatrix}
    = \frac{4}{(2 +  S_2 S_1 + S_1 S_2 )}=\frac{4}{(\one+K)(\one+K^{-1})}.
  \end{equation*}
\end{proposition}
\begin{proof}
  We apply Prop.~\ref{prop:pair_proj}.
\end{proof}

We can reformulate Prop.~\ref{prop:asymp_compl_proof} as follows.
\begin{proposition}
  Let $\cZ_1$ be an maximally positive subspace and $\cZ_2$ an maximally negative space.
  Then they are complementary.
\end{proposition}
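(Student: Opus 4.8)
The plan is to deduce the statement directly from Proposition~\ref{prop:asymp_compl_proof} by translating the hypotheses into statements about admissible involutions via Proposition~\ref{pos3}. First I would invoke Proposition~\ref{pos3}, condition~(3): since $\cZ_1$ is m-positive there is an admissible involution $S_1$ whose particle space is exactly $\cZ_1$, i.e.\ $\cZ_1 = \cZ_1^{(+)} = \Ran(\Pi_1^{(+)})$ with $\cZ_1^{(-)} = \cZ_1^{\perp Q}$. By the m-negative analog of Proposition~\ref{pos3}, the hypothesis that $\cZ_2$ is m-negative likewise furnishes an admissible involution $S_2$ whose \emph{antiparticle} space is exactly $\cZ_2$, i.e.\ $\cZ_2 = \cZ_2^{(-)} = \Ran(\Pi_2^{(-)})$ with $\cZ_2^{(+)} = \cZ_2^{\perp Q}$.

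With this dictionary in hand the conclusion drops out: I would apply Proposition~\ref{prop:asymp_compl_proof} to the pair of admissible involutions $(S_1,S_2)$, which asserts precisely that $(\cZ_1^{(+)},\cZ_2^{(-)})$ is complementary, and this is the pair $(\cZ_1,\cZ_2)$. As a sanity check on the easy half, $\cZ_1 \cap \cZ_2 = \{0\}$ can also be seen directly: if $v \in \cZ_1 \cap \cZ_2$ then $\cinner{v}{Qv} \geq 0$ (as $v$ lies in the m-positive space $\cZ_1$) and $\cinner{v}{Qv} \leq 0$ (as $v$ lies in the m-negative space $\cZ_2$), so $\cinner{v}{Qv} = 0$; but on $\cZ_1 = \cZ_1^{(+)}$ the involution $S_1$ acts as the identity, so $\cinner{v}{Qv} = \cinner{v}{v}_1$ is the square of a genuine Hilbert norm and hence $v = 0$. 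The content-bearing half is $\cZ_1 + \cZ_2 = \cW$.

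I expect no real obstacle at this level, because the work has already been done upstream: surjectivity of $\cZ_1 + \cZ_2$ is equivalent to invertibility of $\Upsilon = \frac14(2 + S_1 S_2 + S_2 S_1)$ via Proposition~\ref{prop:pair_proj}, and that invertibility is supplied by the estimate $\norm{c} < 1$ for the relative operator $c = c(2,1)$ proved in the proposition preceding Proposition~\ref{prop:asymp_compl_proof}. So the only thing to be careful about when writing this up is the bookkeeping of orientations --- that ``m-positive'' and ``m-negative'' subspaces are, respectively, the $(+)$- and $(-)$-spaces of suitable admissible involutions; a self-contained argument, if one preferred, would instead reprove Proposition~\ref{prop:asymp_compl_proof} by the Neumann-series estimate on $c$ that underlies it.
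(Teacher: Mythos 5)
Your proof is correct and follows exactly the paper's own route: use Proposition~\ref{pos3} to realize $\cZ_1$ and $\cZ_2$ as the particle space of an admissible involution $S_1$ and the antiparticle space of an admissible involution $S_2$, respectively, and then invoke Proposition~\ref{prop:asymp_compl_proof} for the pair $(S_1,S_2)$. The extra sanity checks (trivial intersection via positivity/negativity, and the remark that surjectivity rests on $\norm{c}<1$) are accurate but not needed beyond the two-step reduction.
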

\begin{proof}
  By Prop.~\ref{pos3} there exist admissible involutions $S_1$ and $S_2$ such that $\cZ_1=\cZ_1^{(+)}$ and  $\cZ_2=\cZ_2^{(-)}$.
  Hence, it suffices to apply Prop.~\ref{prop:asymp_compl_proof}.
\end{proof}

%------------------------------------------------------------------------------%
\subsection{Pseudo-unitary generators}

Let $(\cW,Q)$ be a pre-pseudo-unitary space.

\begin{definition}
  \label{def:symgen}
  We say that a densely defined operator $B$ on $\cW$ \emph{infinitesimally preserves $Q$} if $B$ is the generator of a one-parameter group $\e^{-\im t B}$ on $\cW$ and
  \begin{equation}
    \label{eq:energy_space}
    \cinner{v}{Q B w} = \cinner{B v}{Q w},
    \quad
    v,w \in \Dom( B).
  \end{equation}
  If in addition $Q$ is non-degenerate, then we will say that $B$ is a \emph{pseudo-unitary generator}.
  The quadratic form defined by \eqref{eq:energy_space} will be called the \emph{energy} or \emph{Hamiltonian quadratic form of $B$ on $\Dom(B)$}.
\end{definition}

\begin{proposition}
  \label{symgen}
  Let $B$ be a generator of a one-parameter group on $\cW$.
  Then $\e^{-\im t B}$, $t \in \RR$, preserves~$Q$ if and only if $B$ infinitesimally preserves $Q$.
\end{proposition}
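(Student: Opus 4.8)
The plan is to reduce the equivalence to differentiating, along the group, the scalar function
\[
  f(t) \defn \cinner{\e^{-\im t B} v}{Q\, \e^{-\im t B} w}, \qquad v, w \in \Dom(B).
\]
First I would recall from Subsect.~\ref{sub:One-parameter groups} that $\Dom(B)$ is invariant under $\e^{-\im t B}$ and that $t \mapsto \e^{-\im t B} v$ is norm-differentiable on $\Dom(B)$ with derivative $-\im B\, \e^{-\im t B} v = -\im\, \e^{-\im t B} B v$; this is precisely~\eqref{evol}. Since $Q$ is a bounded sesquilinear form, a routine difference-quotient argument — expand $f(t+h)-f(t)$ as a sum of two terms by inserting $\pm\cinner{\e^{-\im(t+h)B}v}{Q\,\e^{-\im tB}w}$, divide by $h$, and pass to the limit using the norm-differentiability above together with the boundedness of $Q$ and the strong continuity of the group — shows that $f$ is of class $C^1$ with
\[
  f'(t) = \im\Bigl( \cinner{B\,\e^{-\im tB} v}{Q\, \e^{-\im tB} w} - \cinner{\e^{-\im tB} v}{Q\, B\, \e^{-\im tB} w} \Bigr).
\]

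For the implication ``$B$ infinitesimally preserves $Q$'' $\Rightarrow$ ``$\e^{-\im tB}$ preserves $Q$'', I would note that $\e^{-\im tB}v$ and $\e^{-\im tB}w$ again lie in $\Dom(B)$, so applying~\eqref{eq:energy_space} to this pair makes the bracket above vanish identically; hence $f' \equiv 0$ and $f(t) = f(0) = \cinner{v}{Qw}$ for all $t \in \RR$. Since $\Dom(B)$ is dense in $\cW$ and both $\e^{-\im tB}$ and $Q$ are bounded, this identity extends by continuity from $\Dom(B)$ to all of $\cW$, which is exactly the statement that $\e^{-\im tB}$ preserves $Q$. Conversely, if $\e^{-\im tB}$ preserves $Q$, then for every $v,w \in \Dom(B)$ the function $f$ is constant on $\RR$, so $f'(0)=0$; reading off the displayed formula for $f'$ at $t=0$ yields exactly $\cinner{Bv}{Qw} = \cinner{v}{QBw}$, i.e.~\eqref{eq:energy_space}.

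The only point genuinely requiring care is the $C^1$ claim for $f$; everything else is then a one-line consequence, and I do not anticipate a real obstacle, since this is the standard ``differentiate the conserved quantity'' computation. It is nonetheless worth being explicit that no domain issues arise: because the group leaves $\Dom(B)$ invariant, the terms in the formula for $f'(t)$ are meaningful for every $t$, and $B\,\e^{-\im tB}v = \e^{-\im tB}Bv$ may be used interchangeably wherever convenient.
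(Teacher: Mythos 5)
Your proof is correct and follows essentially the same route as the paper: differentiate $t\mapsto\cinner{\e^{-\im tB}v}{Q\,\e^{-\im tB}w}$ on $\Dom(B)$, use the energy identity to make the derivative vanish, extend to $\cW$ by density and boundedness of $Q$ and the group, and run the argument in reverse for the converse. The only addition is your explicit care with the $C^1$ claim and domain invariance, which the paper leaves implicit.
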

\begin{proof}
  Let us show $\Leftarrow$.
  Assume first that $v,w\in\Dom(B)$.
  Then
  \[
    \frac{\dif}{\dif t}\cinner{\e^{-\ri tB}v}{Q\e^{-\ri tB}w} = \ri\cinner{B\e^{-\ri tB}v}{Q\e^{-\ri tB}w}- \ri\cinner{\e^{-\ri tB}v}{QB\e^{-\ri tB}w} = 0.
  \]
  Hence
  \begin{equation}
    \label{bound}
    \cinner{\e^{-\ri tB}v}{Q\e^{-\ri tB}w}=\cinner{v}{Qw}.
  \end{equation}
  By the density of $\Dom(B)$ and the boundedness of $Q$ and $\e^{-\ri tB}$, (\ref{bound}) extends to the whole $\cW$.

  In the proof of $\Rightarrow$ we use the above arguments in the reverse order (with the exception of the density argument, which is not needed).
\end{proof}

The following proposition describes a large class of pseudo-unitary transformations and pseudo-unitary generators on Krein spaces.

\begin{proposition}
  \label{def:compatible_S-B}
  Suppose that $(\cW,Q)$ is a Krein space and $S_\bullet$ is an admissible involution with the corresponding scalar product $\cinner{\cdot}{\cdot}_\bullet$.
  If $B$ is a densely defined operator on $\cW$, self-adjoint in the sense of $\cW_\bullet$ and commuting with $S_\bullet$, then $B$ is a pseudo-unitary generator on $(\cW,Q)$ in the sense of Def.~\ref{def:symgen}.
\end{proposition}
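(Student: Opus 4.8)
The plan is to check the two conditions in Definition~\ref{def:symgen}: that $B$ generates a one-parameter group $\e^{-\im t B}$ on $\cW$, and that it infinitesimally preserves $Q$. For the first, I would invoke Stone's theorem: $B$ is self-adjoint on the Hilbert space $\cW_\bullet$, so it generates a strongly continuous unitary group $\e^{-\im t B}$ on $\cW_\bullet$. Since $\cinner{\cdot}{\cdot}_\bullet$ is compatible with the Hilbertizable structure of $\cW$, this same family is a strongly continuous one-parameter group of bounded operators on $\cW$, so $B$ is a generator in the sense of Subsect.~\ref{sub:One-parameter groups}.

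For the second condition, I would first record the identity $\cinner{v}{Qw} = \cinner{v}{S_\bullet w}_\bullet$, which follows from the defining relation $\cinner{v}{w}_\bullet = \cinner{v}{Q S_\bullet w}$ of an admissible involution (Def.~\ref{admissi}) together with $S_\bullet^2 = \one$. Next, since $S_\bullet$ is a bounded operator with bounded inverse that commutes with $B$, conjugation by $S_\bullet$ fixes $B$; in particular $S_\bullet$ preserves $\Dom(B)$ and $S_\bullet B w = B S_\bullet w$ for all $w \in \Dom(B)$. Then for $v,w \in \Dom(B)$,
\begin{equation*}
  \cinner{v}{Q B w} = \cinner{v}{S_\bullet B w}_\bullet = \cinner{v}{B S_\bullet w}_\bullet = \cinner{B v}{S_\bullet w}_\bullet = \cinner{B v}{Q w},
\end{equation*}
where the middle equality uses that $B$ commutes with $S_\bullet$ and the third uses self-adjointness (symmetry would already suffice) of $B$ on $\cW_\bullet$ applied to the pair $v, S_\bullet w \in \Dom(B)$. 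This is exactly \eqref{eq:energy_space}. Since $Q$ is non-degenerate in a Krein space, $B$ is a pseudo-unitary generator, as claimed.

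The argument is essentially bookkeeping, and I do not expect a genuine obstacle. The only point requiring a moment's care is the interpretation of ``$B$ commutes with $S_\bullet$'' for the unbounded operator $B$ --- specifically that it forces $S_\bullet\Dom(B) = \Dom(B)$ --- but this is immediate from $S_\bullet$ being bounded and boundedly invertible, so that $S_\bullet B S_\bullet^{-1}$ is again a closed operator agreeing with $B$ on a core and hence equal to $B$. Everything else is a direct translation between the $Q$-form and the $\bullet$-scalar product via the admissible involution.
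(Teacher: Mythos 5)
Your proof is correct, but it takes the dual route to the paper's. The paper works at the group level: it observes that $\e^{-\im tB}$ is unitary on $\cW_\bullet$ and commutes with $S_\bullet$, hence preserves $Q$ (via $\cinner{v}{Qw}=\cinner{v}{S_\bullet w}_\bullet$), and then descends to the generator through the equivalence of Prop.~\ref{symgen}. You instead verify the infinitesimal identity \eqref{eq:energy_space} directly on $\Dom(B)$ from the symmetry of $B$ in $\cW_\bullet$ and the commutation $S_\bullet B=BS_\bullet$, never invoking Prop.~\ref{symgen}. Both arguments hinge on the same two facts (the translation $\cinner{v}{Qw}=\cinner{v}{S_\bullet w}_\bullet$ and compatibility of $B$ with the $\bullet$-structure); yours is marginally more self-contained since it proves exactly the defining condition of Def.~\ref{def:symgen} without routing through the group--generator equivalence, while the paper's is shorter because it delegates that step. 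Your side remark that commutation of the bounded involution with the unbounded $B$ forces $S_\bullet\Dom(B)=\Dom(B)$ is the right point to flag, and your justification is adequate.
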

\begin{proof}
  Clearly, $\e^{-\ri tB}$  is a unitary operator on $\cW_\bullet$ commuting with $S_\bullet$.
  Therefore, it is a pseudo-unitary transformation (see Def. \ref{pseudo-uni}).
\end{proof}

\begin{definition}
  \label{def:1}
  A densely defined operator $B$ on a Krein space $(\cW,Q)$ is called a \emph{stable pseudo-unitary generator} if it is similar to self-adjoint, $\Ker (B) = \{0\}$, and $\sgn(B)$ is an admissible involution.
  $B$ is called a \emph{strongly stable pseudo-unitary generator} if in addition it is invertible.
\end{definition}

In other words, a stable pseudo-unitary generator has a positive Hamiltonian and a strongly stable generator has a positive Hamiltonian bounded away from zero.

%------------------------------------------------------------------------------%
\subsection{Bisolutions and inverses}
\label{sub:Classical bisolutions and inverses}

Let $(\cW,Q)$ be a Krein space.
Then naturally $L^2(I,\cW)$ is also a Krein space with the Hermitian form
\begin{equation*}
  \cinner{v}{Qw} \defn \int_I \cinner[\big]{v(t)}{Q w(t)} \dif t,
\end{equation*}
and compatible scalar products
\begin{equation*}
  \cinner{v}{w}_\bullet \defn \int_I \cinner[\big]{v(t)}{w(t)}_\bullet \dif t.
\end{equation*}

Let $\{R(t,s)\}_{t,s \in I}$ be a strongly continuous pseudo-unitary evolution on $(\cW,Q)$.
We denote by $B(t)$ the (heuristic) generator of $R(t,s)$.
Recall that in Sect.~\ref{sub:Bisolutions and inverses of the evolution equation} we considered the (heuristic) Cauchy data operator $M=\partial_t + \im B(t)$.
Note that $M=\partial_t + \im B(t)$ is (heuristically) anti-$Q$-Hermitian on $L^2(I,W)$.
We will give a rigorous version of this statement a little later, in \eqref{anti-herm}.

In Sect.~\ref{sub:Bisolutions and inverses of the evolution equation} we introduced various inverses and bisolutions of $M$, considered as operators $C_\mathrm{c}(I,\cW) \to C(I,\cW)$.
In this subsection we add the Krein structure to the picture.

First, as in Subsect.~\ref{sub:Bisolutions and inverses of the evolution equation}, we define the \emph{Pauli--Jordan bisolution} $E^\PJ$, and the \emph{forward and backward inverses} $E^\vee$, resp.\ $E^\wedge$.

\begin{proposition}
  $E^\PJ$ is $Q$-Hermitian and the $Q$-adjoint of $E^\vee$ is contained in $-E^\wedge$.
  More precisely, for $v,w \in C_\mathrm{c}\bigl(I,\cW\bigr)$ we have
  \begin{align*}
    \cinner{v}{Q E^\PJ w} & = \cinner{E^\PJ v}{Q w}, \\ \cinner{v}{Q E^\wedge w} & = -\cinner{E^\vee v}{Q w}.
  \end{align*}
\end{proposition}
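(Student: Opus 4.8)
The plan is to verify both identities by direct computation with the temporal integral kernels, exploiting the fact that the evolution $R(t,s)$ is pseudo-unitary, i.e.\ $\cinner{R(t,s)v}{Q R(t,s)w} = \cinner{v}{Qw}$ by Thm.~\ref{thm:evolution1}, which combined with the group law $R(t,s)R(s,r) = R(t,r)$ and $R(t,t) = \one$ gives $\cinner{R(t,s)v}{Qw} = \cinner{v}{QR(s,t)w}$ for all $v,w$.

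First I would unwind the left-hand side of the Pauli--Jordan identity. For $v,w \in C_\mathrm{c}(I,\cW)$, using \eqref{formo1} and \eqref{kerno},
\begin{equation*}
  \cinner{v}{Q E^\PJ w} = \int_I \int_I \cinner[\big]{v(t)}{Q R(t,s) w(s)} \dif s \dif t.
\end{equation*}
Applying the pseudo-unitarity relation $\cinner{v(t)}{Q R(t,s)w(s)} = \cinner{R(s,t)v(t)}{Q w(s)}$ and then relabelling/swapping the roles of $s$ and $t$ (Fubini is justified since $v,w$ are compactly supported and continuous, $R$ is strongly continuous hence locally bounded), this becomes $\int_I \int_I \cinner{R(t,s)v(s)}{Qw(t)} \dif s \dif t = \cinner{E^\PJ v}{Q w}$, which is the claim.

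Next I would do the same for $E^\wedge$. Here
\begin{equation*}
  \cinner{v}{Q E^\wedge w} = -\int_I \int_I \theta(s-t)\cinner[\big]{v(t)}{Q R(t,s) w(s)} \dif s \dif t,
\end{equation*}
and the same two moves --- pseudo-unitarity to turn $\cinner{v(t)}{QR(t,s)w(s)}$ into $\cinner{R(s,t)v(t)}{Qw(s)}$, then swapping the names of the integration variables --- turn this into $-\int_I\int_I \theta(t-s)\cinner{R(t,s)v(s)}{Qw(t)}\dif s\dif t = -\cinner{E^\vee v}{Qw}$, using that $E^\vee(t,s) = \theta(t-s)R(t,s)$. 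I expect no real obstacle here: the only point requiring a line of care is the bookkeeping of which variable the $\theta$-function constrains after the swap, so that $\theta(s-t)$ correctly turns into the $\theta(t-s)$ appearing in the kernel of $E^\vee$, and the verification that all integrals converge absolutely so Fubini applies. One may remark that these computations are the rigorous counterpart of the heuristic statement that $M = \partial_t + \im B(t)$ is anti-$Q$-Hermitian on $L^2(I,\cW)$: $E^\PJ$ being a $Q$-Hermitian bisolution and $(E^\vee)^{*Q} \supset -E^\wedge$ are exactly what one expects from $E^\vee, E^\wedge$ being one-sided inverses of an anti-$Q$-Hermitian operator.
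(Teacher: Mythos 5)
Your proof is correct, and it follows exactly the route the paper itself uses for the analogous statement about $E^\Feyn$ and $E^\aFeyn$ (Prop.~\ref{pos6}): the identity $R(t,s)^{*Q}=R(s,t)$ from pseudo-unitarity, applied to the temporal kernels, followed by a relabelling of the integration variables. The paper states the present proposition without proof, and your argument --- including the careful tracking of how $\theta(s-t)$ becomes $\theta(t-s)$ after the swap --- is precisely the intended one.
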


Consider now non-classical bisolutions and inverses.
If $t_\pm$ are finite, let us select two arbitrary admissible involutions $S_+,S_-$.
If $t_\pm=\pm\infty$, recall that we assumed that for large $\pm t,\pm s$ we have $R(t,s)=\e^{-\im(t-s)B_\pm}$.
We assume that $B_\pm$ are stable pseudo-unitary generators and we set
\begin{equation}
  S_\pm \defn \sgn\bigl(B_\pm\bigr).
  \label{porro1}
\end{equation}
Note that \eqref{porro1} implies that \eqref{porro} is satisfied and $S_\pm$ are admissible.

Define $\Pi_\pm^{(+)}, \Pi_\pm^{(-)}, \cZ_\pm^{(+)}, \cZ_\pm^{(-)},$ as well as $E_\pm^{(+)}$, $E_\pm^{(-)}$, as in Def.~\ref{def:smo}.
Note that, for any $t\in I$, $\cZ_+^{(+)}(t)$ and $\cZ_-^{(+)}(t)$ are maximally positive and $\cZ_+^{(-)}$ and $\cZ_-^{(-)}$ are maximally negative.
This immediately implies
\begin{proposition}
  $E_\pm^{(+)}$ and $E_\pm^{(-)}$ are $Q$-Hermitian.
  $E_\pm^{(+)}$ is $Q$-positive and $E_\pm^{(-)}$ is
  $Q$-negative.
  More precisely, for $v,w\in C_\mathrm{c}(I,\cW)$ we have
  \begin{alignat*}{2} \cinner{E_\pm^{(+)} w}{Q v} & = \cinner{w}{Q E_\pm^{(+)} v}, & \quad \cinner{v}{Q E_\pm^{(+)} v} & \geq 0, \\ \cinner{E_\pm^{(-)} w}{Q v} & = \cinner{w}{Q E_\pm^{(-)}v}, & \quad \cinner{v}{Q E_\pm^{(-)}v} & \leq 0.
  \end{alignat*}
\end{proposition}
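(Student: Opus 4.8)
The plan is to read off both properties directly from the temporal integral kernels $E_\pm^{(+)}(t,s)=R(t,s)\Pi_\pm^{(+)}(s)$ and $E_\pm^{(-)}(t,s)=-R(t,s)\Pi_\pm^{(-)}(s)$ of~\eqref{smoo1}, using three facts already available: pseudo-unitarity of the evolution in the form $R(t,s)^{*Q}=R(s,t)$ (as noted in the proof of Prop.~\ref{pos6}); the $Q$-self-adjointness $\Pi_\pm^{(\pm)}(t)^{*Q}=\Pi_\pm^{(\pm)}(t)$, valid because $S_\pm(t)$ is admissible, hence preserves $Q$ and so satisfies $S_\pm(t)^{*Q}=S_\pm(t)^{-1}=S_\pm(t)$; and the intertwining relations~\eqref{pio1}, which give both $R(t,s)\Pi_\pm^{(\pm)}(s)=\Pi_\pm^{(\pm)}(t)R(t,s)$ and $\Pi_\pm^{(\pm)}(t)=R(t,0)\Pi_\pm^{(\pm)}(0)R(0,t)$. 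I will also use the Hermitian symmetry of $Q$ in the form $\cinner{Ax}{Qy}=\cinner{x}{QA^{*Q}y}$, together with the fact that an operator given by a temporal kernel $F(t,s)$ as in~\eqref{kerno} has, on $L^2(I,\cW)$, a $Q$-adjoint again of that form with kernel $F(s,t)^{*Q}$ --- the same Fubini bookkeeping as in the proof of Prop.~\ref{pos6}, legitimate here since all integrands below are continuous with compact support.

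For $Q$-Hermiticity I would write, for $v,w\in C_\mathrm{c}(I,\cW)$, $\cinner{E_\pm^{(+)}w}{Qv}=\int_I\!\int_I\cinner{R(t,s)\Pi_\pm^{(+)}(s)w(s)}{Q\,v(t)}\dif s\dif t$, move the operator $R(t,s)\Pi_\pm^{(+)}(s)$ into the second slot using $(R(t,s)\Pi_\pm^{(+)}(s))^{*Q}=\Pi_\pm^{(+)}(s)R(s,t)$, and then commute the projection past $R(s,t)$ via~\eqref{pio1} to recognize $R(s,t)\Pi_\pm^{(+)}(t)=E_\pm^{(+)}(s,t)$; Fubini then yields $\cinner{E_\pm^{(+)}w}{Qv}=\cinner{w}{QE_\pm^{(+)}v}$. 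The minus sign in~\eqref{smoo1-} is carried along unchanged, so the identical computation gives $Q$-Hermiticity of $E_\pm^{(-)}$.

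For $Q$-positivity I would put $v=w$ and factor the kernel through the reference point $0\in I$: from $R(t,s)=R(t,0)R(0,s)$ and~\eqref{pio1}, $R(t,s)\Pi_\pm^{(+)}(s)=R(t,0)\Pi_\pm^{(+)}(0)R(0,s)$. Setting $a(t)\defn R(0,t)v(t)\in C_\mathrm{c}(I,\cW)$ and moving $R(t,0)$ across $Q$ turns the integrand into $\cinner{a(t)}{Q\,\Pi_\pm^{(+)}(0)\,a(s)}$, and, the pairing being sesquilinear and continuous and $a$ Bochner integrable, the double integral collapses to $\cinner{A}{Q\,\Pi_\pm^{(+)}(0)\,A}$ with $A\defn\int_I a(t)\dif t$. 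Writing $\Pi\defn\Pi_\pm^{(+)}(0)$ and using $\Pi=\Pi^2=\Pi^{*Q}$ gives $\cinner{A}{Q\Pi A}=\cinner{\Pi A}{Q\Pi A}\geq0$, since $\Pi A\in\Ran(\Pi)=\cZ_\pm^{(+)}(0)$ is m-positive and~\eqref{pos1} applies. The same computation for $E_\pm^{(-)}$, carrying the minus sign of~\eqref{smoo1-}, gives $\cinner{v}{QE_\pm^{(-)}v}=-\cinner{\Pi_\pm^{(-)}(0)A}{Q\,\Pi_\pm^{(-)}(0)A}\geq0$ because $\cZ_\pm^{(-)}(0)$ is m-negative and~\eqref{pos2} applies.

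The computation is essentially routine; the only delicate points are the bookkeeping for the $Q$-adjoint of a temporal-kernel operator on $L^2(I,\cW)$ --- identical to the one already carried out in the proof of Prop.~\ref{pos6}, and unproblematic since $v$ is continuous with compact support, so the Fubini and Bochner manipulations are valid --- and the uniform-in-$t$ validity of~\eqref{pio1} and of the m-(co)positivity of $\cZ_\pm^{(\pm)}(t)$, which is part of the standing hypotheses whether $t_\pm$ is finite or $\pm\infty$. I do not expect a genuine obstacle.
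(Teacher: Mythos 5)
Your proof is correct. The paper states this proposition without proof, and your argument supplies exactly the verification one would expect: the $Q$-Hermiticity computation is the same kernel/Fubini manipulation the paper carries out explicitly for Prop.~\ref{pos6} (using $R(s,t)^{*Q}=R(t,s)$, $\Pi_\pm^{(\pm)}(s)^{*Q}=\Pi_\pm^{(\pm)}(s)$ from admissibility, and the intertwining relations~\eqref{pio1}), and the positivity part correctly reduces the double integral to $\cinner{\Pi_\pm^{(\pm)}(0)A}{Q\,\Pi_\pm^{(\pm)}(0)A}$ with $A=\int_I R(0,t)v(t)\dif t$, where the sign conventions of~\eqref{smoo1} and the m-positivity/m-negativity of $\cZ_\pm^{(+)}(0)$, $\cZ_\pm^{(-)}(0)$ give the stated inequalities. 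No gaps.
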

\begin{proof}
  Consider for definiteness the case of finite $t_\pm$.
  It holds
  \begin{align*}
    \cinner{v}{Q E_\pm^{(+)} v} & = \cinner[\bigl]{\Pi_\pm^{(+)}w}{Q \Pi_\pm^{(+)}w}, \\
    \cinner{v}{Q E_\pm^{(-)} v} & = \cinner[\bigl]{\Pi_\pm^{(-)}w}{Q \Pi_\pm^{(-)}w},
  \end{align*}
  where $w=\int_I R(t_\pm,t)v(t)\dif t$.
\end{proof}

What is more remarkable, under the present assumptions by Prop.~\ref{basic0} (2)
asymptotic complementarity holds automatically
for both $\bigl(\cZ_+^{(+)},\cZ_-^{(-)}\bigr)$ and $\bigl(\cZ_+^{(-)},\cZ_-^{(+)}\bigr)$.
Therefore, we can define the inverses $E^\Feyn$ and $E^\aFeyn$, as in Def.~\ref{def-feyn} and~\ref{def-afeyn}.
\begin{proposition}
  The $Q$-adjoint of $E^\Feyn$ is contained in $-E^\aFeyn$.
  More precisely, for $v,w \in C_\mathrm{c}\bigl(I,\cW\bigr)$ we have
  \begin{equation*}
    \cinner{E^\Feyn w}{Q v} = -\cinner{w}{Q E^\aFeyn v}.
  \end{equation*}
  \label{pos6}
\end{proposition}
 \begin{proof}
   By \eqref{prepre} we have
    $R(s,t)^{*Q}=R(t,s)$. Clearly, $\cZ_\pm^{(+)}(t)^{\perp
      Q}=\cZ_\pm^{(-)}(t)$, and hence by Prop. \ref{basic0}
  \[
    \Lambda^{\Feyn(+)}(t)^{*Q}=\Lambda^{\aFeyn(+)}(t),\quad
    \Lambda^{\Feyn(-)}(t)^{*Q}=\Lambda^{\aFeyn(-)}(t).
  \]
  Now,
  \[
    \cinner{E^\Feyn w}{Q v} = \int_I\cinner[\big]{w(t)}{Q E^\Feyn(s,t)^{*Q} v(s)}\dif t\dif s
  \]
  and
  \begin{align*}
   E^\Feyn(s,t)^{*Q}
    & = \theta(s-t) \Lambda^{\Feyn(+)}(t)^{*Q} R(s,t)^{*Q} -\theta(t-s) \Lambda^{\Feyn(-)}(t)^{*Q} R(s,t)^{*Q} \\
    & = \theta(s-t) \Lambda^{\aFeyn(+)}(t) R(t,s)-\theta(t-s)\Lambda^{\aFeyn(-)}(t) R(t,s) \\
    & = \theta(s-t) R(t,s) \Lambda^{\aFeyn(+)}(s) -\theta(t-s)R(t,s)\Lambda^{\aFeyn(-)}(s) \\
    & = -E^\aFeyn(t,s).
    \tag*{$\square$}
  \end{align*}
\end{proof}

With the choice \eqref{porro1}, the bisolutions $E_\pm^{(+)}$ are called the \emph{in/out positive frequency bisolutions}, the bisolutions $E_\pm^{(-)}$ are called the \emph{in/out negative frequency bisolutions}, and the inverses $E^\Feyn$, resp.\ $E^\aFeyn$ are called the \emph{Feynman}, resp.\ the \emph{anti-Feynman inverse}.

%------------------------------------------------------------------------------%
\subsection{The Cauchy data operator in the Krein setting}
\label{Evolutions preserving a Hermitian form}

The Cauchy data operator $M$ is the sum of two unbounded operators: $\partial_t$ and $\im B(t)$.
Therefore, it is not easy to choose its domain.
We will discuss two possible approaches to this question.
In this subsection we will describe the ``operator approach''.
Subsects.~\ref{Nested pre-pseudo-unitary pairs} and~\ref{Evolutions on nested pre-pseudo-unitary pairs} will discuss the ``quadratic form approach''.

Suppose that $(\cW_0,Q)$ is a pre-pseudo-unitary space.

\begin{theorem}
  \label{thm:evolution1}
  Suppose that $I\ni t\mapsto B(t)$ is an operator on a Krein space $\cW_0$.
  Suppose that $\cW_1$ is a Hilbertizable space densely and continuously embedded in $\cW_0$ and all the assumptions of Thm.~\ref{thm:evolution} are satisfied.
  In addition, assume that for all $t\in I$ the operators $B(t)$ infinitesimally preserve $Q$.
  Then the evolution $R(t,s)$ on $\cW_0$ preserves $Q$.
\end{theorem}
\begin{proof}
  Recall that among the assumptions of
  Thm.~\ref{thm:evolution-per} there is
  %the existence of a Hilbertizable space $\cW_1$ densely and continuously embedded in $\cW_0$ such that
  $\cW_1 \subset \Dom( B(t))$.
  Besides, for any $w\in\cW_1$ this theorem implies that
  \begin{align*}
    \im\partial_t R(t,s) w & = B(t) R(t,s) w.
  \end{align*}
  Hence the above theorem follows by repeating the proof of Prop.~\ref{symgen}, where we use $\cW_1$ instead of $\Dom(B)$.
\end{proof}

Suppose that $(\cW_0,Q)$ is a Krein space and $I\ni t\mapsto B(t)$ is a family of pseudo-unitary generators satisfying the assumptions of Thm.~\ref{thm:evolution1}.
We can treat the Cauchy data operator
\begin{equation*}
  M=\partial_t+\im B(t)
\end{equation*}
as a densely defined operator on the Krein space $L^2(I,\cW_0)$ with the domain $ C_\mathrm{c}(I,\cW_1) \cap C_\mathrm{c}^1(I,\cW_0)$.
Now we can give a rigorous meaning to its anti-$Q$-Hermiticity: for $v,w \in C_\mathrm{c}(I,\cW_1) \cap C_\mathrm{c}^1(I,\cW_0)$,
\begin{equation}
  \label{anti-herm}
  \cinner[\Big]{w}{Q\bigl(\partial_t+\im B(t)\bigr) v} = -\cinner[\Big]{\bigl(\partial_t+\im B(t)\bigr) w}{Q v}.
\end{equation}

For the remaining part of this subsection we assume that $I$ is finite and two admissible involutions $S_+,S_-$ have been chosen.

\begin{proposition}
  $E^\Feyn$ and $E^\aFeyn$ extend to bounded operators on $L^2(I,\cW_0)$, their ranges are dense and their nullspaces are  $\{0\}$.
  \label{pos9}
\end{proposition}
\begin{proof}
  The boundedness is obvious.
  By Prop.~\ref{pos8} (1) for any $v\in C_\mathrm{c}(I,\cW_1)\cap C_\mathrm{c}^1(I,\cW_0)$ we have
  \begin{equation*}
    E^\Feyn \bigl(\partial_t + \im B(t)\bigr) v = v.
  \end{equation*}
  Hence $\cR(E^\Feyn)$ contains $ C_\mathrm{c}(I,\cW_1)\cap C_\mathrm{c}^1(I,\cW_0)$, which is dense in $L^2(I,\cW_0)$.
  The same argument shows that $\cR(E^\aFeyn)$ is dense in $L^2(I,\cW_0)$.
  Now
  \[
    \cN(E^\Feyn)=\cR(E^{\Feyn*Q})^{\perp Q}= \cR(E^{\aFeyn})^{\perp Q}=\{0\}.
    \tag*{$\square$}
  \]
\end{proof}

Thus by Prop.~\ref{pos9}, for finite $I$ we can define operators with dense domains
\[M^\Feyn \defn (E^\Feyn)^{-1},\quad M^\aFeyn \defn (E^\aFeyn)^{-1}.\]
They satisfy
\[(M^\Feyn)^{*Q}= M^\aFeyn\]
and $0$ belongs to their resolvent set.

%------------------------------------------------------------------------------%
\subsection{Nested pre-pseudo-unitary pairs}
\label{Nested pre-pseudo-unitary pairs}

In this and the following subsection we describe the ``quadratic form approach'' to dynamics on pre-pseudo-unitary spaces.
Such an approach usually requires weaker assumptions.

In this approach the starting point is a nested pair of Hilbertizable spaces equipped with a Hermitian pairing.
The pre-pseudo-unitary space is then obtained by interpolation.

Let us describe this simple construction in detail.
In the next subsection we will describe evolutions on such nested pairs.

\begin{definition}
  \label{nested pre-pseudo-unitary pair}
  Let $\lambda>0$.
  A \emph{nested pre-pseudo-unitary pair} $(\cW_{-\lambda},\cW_{\lambda},Q)$ consists of a pair of Hilbertizable spaces $\cW_{-\lambda}$, $\cW_{\lambda}$, where $\cW_{\lambda}$ is densely and continuously embedded in $\cW_{-\lambda}$ and a \emph{Hermitian pairing}, that is a sesquilinear form
  \begin{equation*}
    \cW_{\lambda} \times \cW_{-\lambda} \ni (v,w) \mapsto \cinner{v}{Qw} \in \CC,
  \end{equation*}
  which is Hermitian on $\cW_{\lambda}$, i.e.,
  \begin{equation}
    \label{invert.}
    \cinner{v}{Qw} = \bar{\cinner{w}{Qv}},
    \quad
    v,w \in \cW_{\lambda},
  \end{equation}
  and bounded, i.e., for some (hence all) compatible norms $\|\cdot\|_{\lambda,\bullet}$
  and  $\|\cdot\|_{-\lambda,\bullet}$ on $\cW_{\lambda}$, resp.\ $\cW_{-\lambda}$ there exists $C_\bullet$ such that
  \begin{equation}
    \label{bund}
    \abs{\cinner{v}{Qw}} \leq C_\bullet\norm{v}_{\lambda,\bullet} \norm{w}_{-\lambda,\bullet}.
  \end{equation}
\end{definition}

In what follows, let $(\cW_{-\frac12},\cW_{\frac12},Q)$ be a nested pre-pseudo-unitary pair.
(The parameter $\frac12$ can be changed to any positive number, it is chosen here in view of our future applications.)
Let $\cW_\lambda$, $\lambda\in[-\frac12,\frac12]$, be the Hilbertizable spaces obtained by interpolation from the nested pair $(\cW_\frac12,\cW_{-\frac12})$.

\begin{proposition}
  For $\lambda\in[-\frac12,\frac12]$, there exists a unique family of Hermitian pairings $Q_\lambda$
  \begin{equation*}
    \cW_{\lambda} \times \cW_{-\lambda} \ni (v,w) \mapsto \cinner{v}{Q_\lambda w}\in\CC.
  \end{equation*}
  such that $Q_\frac12=Q$ and
  if $-\frac12\leq\lambda_1\leq\lambda_2\leq\frac12$, $v\in\cW_{-\lambda_1}\subset\cW_{-\lambda_2}$, $w\in\cW_{\lambda_2}\subset\cW_{\lambda_1}$, then
  \begin{equation}
    \label{order00}
    \cinner{v}{Q_{\lambda_1} w} = \cinner{v}{Q_{\lambda_2} w}.
  \end{equation}
\end{proposition}
\begin{proof}
  By (\ref{bund}), $Q$ can be viewed as a bounded map from $\cW_{-\frac12}$ to the antidual of $\cW_{\frac12}$.
  The antidual of $\cW_{\frac12}$ coincides with $\cW_{-\frac12}$.
  Hence, $Q\in B(\cW_{-\frac12})$.

  The restriction of $Q$ to $\cW_{\frac12}$, by (\ref{bund}) and (\ref{invert.}) is a bounded map to the antidual of $\cW_{-\frac12}$, which is
  $\cW_{\frac12}$.
  Hence $Q\in B(\cW_{\frac12})$.

  By interpolation, that is, Prop.~\ref{interpolation} (2), for $\lambda\in[-\frac12,\frac12]$, the restriction of $Q$ to $\cW_\lambda$ is bounded.
\end{proof}

In what follows we drop the subscript $\lambda$ from $Q_\lambda$, which is allowed because of (\ref{order00}).
In particular, for $\lambda=0$, we obtain a bounded Hermitian form on $\cW_0$:
\begin{equation*}
  \cW_0 \times \cW_0 \ni (v,w) \mapsto \cinner{v}{Q w}\in\CC.
\end{equation*}

%------------------------------------------------------------------------------%
\subsection{Evolutions on nested pre-pseudo-unitary pairs}
\label{Evolutions on nested pre-pseudo-unitary pairs}

Recall that in Thm.~\ref{thm:evolution1} we constructed a pre-pseudo-unitary dynamics starting from a pre-pseudo-unitary space $\cW_0$ and its subspace $\cW_1$.
In this subsection we give a slightly different construction of such a dynamics which starts from a nested pre-pseudo-unitary pair $(\cW_{\frac12},\cW_{-\frac12},Q)$.

\begin{definition}
  \label{defini}
  Let $(\cW_{\frac12},\cW_{-\frac12},Q)$ be a nested pre-pseudo-unitary pair.
  If a bounded operator $R$ on $\cW_{-\frac12}$, restricts to a bounded operator on $\cW_{\frac12}$, and \[\cinner{Rv}{QRw}=\cinner{v}{Qw},\quad v\in\cW_{-\frac12},\quad w\in\cW_{\frac12},\] then we say that \emph{$R$ preserves $(\cW_{\frac12},\cW_{-\frac12},Q)$}.
\end{definition}

Applying complex interpolation we obtain

\begin{proposition}
  Suppose that $R$ preserves $(\cW_{\frac12},\cW_{-\frac12},Q)$.
  Then for $0\leq\lambda\leq\frac12$ it restricts to an operator preserving $(\cW_{\lambda},\cW_{-\lambda},Q)$.
  In particular, $R$ preserves $Q$ on $\cW_0$ in the usual sense.
  \label{preser}
\end{proposition}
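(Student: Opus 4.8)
The plan is to combine complex interpolation of the operator $R$ with a density argument for the identity $\cinner{Rv}{QRw}=\cinner{v}{Qw}$, all the analytic inputs being already available from Subsect.~\ref{sub:Interpolation between Hibertizable spaces} and from the preceding proposition.

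First I would settle that $R$ restricts to a bounded operator on each interpolation space. By hypothesis $R\in B(\cW_{-\frac12})$ and its restriction to $\cW_{\frac12}$ lies in $B(\cW_{\frac12})$; viewing $(\cW_{-\frac12},\cW_{\frac12})$ as a nested Hilbertizable pair and reparametrising the interpolation scale (the paper's $\cW_\beta$, $\beta\in[-\frac12,\frac12]$, is $\cW_\alpha$ with $\alpha=\beta+\tfrac12\in[0,1]$), the statement at the end of Subsect.~\ref{sub:Interpolation between Hibertizable spaces} gives $R\in B(\cW_\beta)$ for all $\beta\in[-\frac12,\frac12]$, hence in particular $R\in B(\cW_\beta)$ and $R\in B(\cW_{-\beta})$ for $0\le\beta\le\frac12$. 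This already yields the first part of the conclusion; it remains to verify the preservation identity for these restrictions.

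Next I would recall that, exactly as in the proof of the previous proposition, $Q$ defines a bounded operator on $\cW_{-\frac12}$ and on $\cW_{\frac12}$, hence by Prop.~\ref{interpolation}(2) on every $\cW_\gamma$, $\gamma\in[-\frac12,\frac12]$, and that the pairings $Q_\beta$ are mutually consistent, so the subscript may be dropped. Consequently, for fixed $\beta\in[0,\frac12]$, both
\[
  (v,w)\mapsto\cinner{Rv}{QRw}
  \qquad\text{and}\qquad
  (v,w)\mapsto\cinner{v}{Qw}
\]
are well-defined, jointly continuous sesquilinear forms on $\cW_{-\beta}\times\cW_\beta$: one has $Rv\in\cW_{-\beta}$, $Rw\in\cW_\beta$, $QRw\in\cW_\beta$, and $\cW_{-\beta}$ (the antidual of $\cW_\beta$) pairs boundedly with $\cW_\beta$, the bound for the first form being controlled by $\norm{R}_{B(\cW_{-\beta})}\norm{R}_{B(\cW_\beta)}\norm{Q}_{B(\cW_\beta)}$ times the norm of the pairing.

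Finally I would conclude by density. For $v,w\in\cW_{\frac12}$ we have $Rv,Rw\in\cW_{\frac12}$, so by the consistency \eqref{order00} the quantities $\cinner{Rv}{QRw}$ and $\cinner{v}{Qw}$ evaluated in the $\beta$-pairing coincide with those evaluated in the $\frac12$-pairing, where they are equal by the assumption that $R$ preserves $(\cW_{\frac12},\cW_{-\frac12},Q)$ (using $\cW_{\frac12}\subset\cW_{-\frac12}$). Since $\cW_{\frac12}$ is dense in both $\cW_{-\beta}$ and $\cW_\beta$ and the two forms above are jointly continuous, they agree on all of $\cW_{-\beta}\times\cW_\beta$; this is precisely the statement that $R$ preserves $(\cW_\beta,\cW_{-\beta},Q)$, and the special case $\beta=0$ gives that $R$ preserves $Q$ on $\cW_0$ in the usual sense. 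The only place demanding attention is the bookkeeping of which interpolation space each of $R$, $Q$ and the pairing acts on, so that $\cinner{Rv}{QRw}$ is genuinely a bounded form on $\cW_{-\beta}\times\cW_\beta$; beyond that, the argument is routine.
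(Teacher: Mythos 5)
Your argument is correct and is exactly the intended one: the paper states this proposition with only the remark ``Applying complex interpolation we obtain,'' and your interpolation of $R$ on the scale $\cW_\beta$ combined with the density of $\cW_{\frac12}\times\cW_{\frac12}$ in $\cW_{-\beta}\times\cW_\beta$ and the joint boundedness of both sesquilinear forms (supplied by the preceding proposition on the pairings $Q_\beta$) is the standard way to fill in that one-line justification. No gaps.
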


\begin{definition}
  \label{def:symgen1}
  Suppose that $B$ is an operator on $\cW_{-\frac12}$ with domain containing $\cW_\frac12$.
  We say that $B$ \emph{infinitesimally preserves} $(\cW_{\frac12},\cW_{-\frac12},Q)$ if $B$ is a generator of a group on $\cW_{-\frac12}$, its part $\tilde B$ in $\cW_{\frac12}$ is a generator of a group on $\cW_{\frac12}$, and
  \begin{equation}
    \label{eq:energy_space1}
    \cinner{Bv}{Qw} = \cinner{v}{QBw},
    \quad
    v,w \in \cW_{\frac12}.
  \end{equation}
  The quadratic form defined by \eqref{eq:energy_space1} is called the \emph{energy} or \emph{Hamiltonian quadratic form of $B$ on~$\cW_{\frac12}$}.
\end{definition}

\begin{proposition}
  \label{symgen1}
  Suppose that $B$ is an operator on $\cW_{-\frac12}$ that infinitesimally preserves $(\cW_{-\frac12}, \cW_{\frac12},Q)$.
  Then $\e^{-\im t B}$, $t \in \RR$, preserves $(\cW_{-\frac12}, \cW_{\frac12},Q)$.
\end{proposition}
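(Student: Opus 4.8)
The plan is to follow the proof of Proposition~\ref{symgen} almost verbatim, with the nested pair $(\cW_{\frac12},\cW_{-\frac12})$ playing the role that the single Hilbertizable space and its form domain played there. The first, and slightly technical, point I would settle is that $\e^{-\im tB}$ preserves $\cW_{\frac12}$ and restricts there to a bounded operator. By the hypothesis of Definition~\ref{def:symgen1}, the part $\tilde B$ of $B$ in $\cW_{\frac12}$ generates a group $\e^{-\im t\tilde B}$ on $\cW_{\frac12}$; for $v\in\cW_{\frac12}$ the curve $t\mapsto\e^{-\im t\tilde B}v$ is thus a $\cW_{\frac12}$-valued (hence also $\cW_{-\frac12}$-valued) solution of $\partial_t u=-\im\tilde Bu=-\im Bu$ with $u(0)=v$. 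Since $v\in\cW_{\frac12}\subset\Dom(B)$, the curve $t\mapsto\e^{-\im tB}v$ is another solution of the same Cauchy problem in $\cW_{-\frac12}$ (cf.~\eqref{evol}), so by uniqueness the two coincide. Therefore $\e^{-\im tB}|_{\cW_{\frac12}}=\e^{-\im t\tilde B}$, which is bounded on $\cW_{\frac12}$, and together with the boundedness of $\e^{-\im tB}$ on $\cW_{-\frac12}$ this establishes the first two clauses of Definition~\ref{defini}.

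Next I would prove invariance of $Q$ on the diagonal $\cW_{\frac12}\times\cW_{\frac12}$. Fix $v,w\in\cW_{\frac12}$ and set $f(t):=\cinner{\e^{-\im tB}v}{Q\e^{-\im tB}w}$. By the previous step $\e^{-\im tB}v$ and $\e^{-\im tB}w$ remain in $\cW_{\frac12}\subset\Dom(B)$, the curves $t\mapsto\e^{-\im tB}v$ and $t\mapsto\e^{-\im tB}w$ are $C^1$ in $\cW_{-\frac12}$ with derivative $-\im B\e^{-\im tB}(\cdot)$, and the pairing in~\eqref{order} is jointly continuous, so $f$ is differentiable with $f'(t)=\im\cinner{B\e^{-\im tB}v}{Q\e^{-\im tB}w}-\im\cinner{\e^{-\im tB}v}{QB\e^{-\im tB}w}$. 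Since both entries lie in $\cW_{\frac12}$, the relation~\eqref{eq:energy_space1} makes the two terms equal, whence $f'\equiv 0$ and $f(t)=f(0)=\cinner{v}{Qw}$.

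Finally I would extend the identity $\cinner{\e^{-\im tB}v}{Q\e^{-\im tB}w}=\cinner{v}{Qw}$ from $v,w\in\cW_{\frac12}$ to $v\in\cW_{-\frac12}$, $w\in\cW_{\frac12}$, exactly as in the density argument closing the proof of Proposition~\ref{symgen}. For fixed $w\in\cW_{\frac12}$ both sides are continuous in $v$ for the $\cW_{-\frac12}$-topology: the right-hand side by boundedness of the pairing, and the left-hand side because $\e^{-\im tB}$ is bounded on $\cW_{-\frac12}$, $\e^{-\im tB}w\in\cW_{\frac12}$, and the pairing read in the order $\cW_{-\frac12}\times\cW_{\frac12}$ is legitimate and bounded (the antidual of $\cW_{-\frac12}$ being $\cW_{\frac12}$, and $Q$ restricting to a bounded operator on $\cW_{\frac12}$). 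Since $\cW_{\frac12}$ is dense in $\cW_{-\frac12}$, the identity passes to the limit, yielding the last clause of Definition~\ref{defini} and hence that $\e^{-\im tB}$ preserves $(\cW_{-\frac12},\cW_{\frac12},Q)$.

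I expect the only genuinely delicate point to be the identification $\e^{-\im tB}|_{\cW_{\frac12}}=\e^{-\im t\tilde B}$, i.e.\ that the group on the large space restricts to the group on the small space; this rests on the uniqueness of solutions of the abstract Cauchy problem for a group generator together with $\tilde B$ being literally the part of $B$. Everything else is the same computation and density bookkeeping as in Proposition~\ref{symgen}.
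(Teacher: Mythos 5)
Your proof is correct and is exactly the argument the paper intends: the paper leaves Prop.~\ref{symgen1} without proof as the obvious adaptation of Prop.~\ref{symgen} to the nested pair, and your three steps (restriction of the group to $\cW_{\frac12}$, differentiation of $t\mapsto\cinner{\e^{-\im tB}v}{Q\e^{-\im tB}w}$ using \eqref{eq:energy_space1}, density extension) are precisely that adaptation. One small tightening in your first step: the uniqueness argument needs the candidate curve $t\mapsto\e^{-\im t\tilde B}v$ to be differentiable, which holds for $v\in\Dom(\tilde B)$ rather than for all $v\in\cW_{\frac12}$; so establish $\e^{-\im tB}v=\e^{-\im t\tilde B}v$ first on $\Dom(\tilde B)$ and then pass to all of $\cW_{\frac12}$ by density of $\Dom(\tilde B)$ in $\cW_{\frac12}$ and continuity of both sides as maps into $\cW_{-\frac12}$ --- the same density bookkeeping you already invoke in your last step.
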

\begin{proof}
  First we check that
  \begin{equation}\label{preo}
    \cinner{\e^{-\ri tB}v}{Q\e^{-\ri tB}w}=\cinner{v}{Qw},\quad
    v,w\in\cW_{\frac12}.\end{equation}
  Then, by continuity, we extend \eqref{preo} to $v\in\cW_{-\frac12}$.
  \end{proof}

The following theorem can be viewed as an alternative to Thm.~\ref{thm:evolution1}:
\begin{theorem}
  \label{thm:evolution1.}
  Suppose that all assumptions of Thm.~\ref{thm:evolution-per} are satisfied where $\cW_0$ is replaced with $\cW_{-\frac12}$ and $\cW_1$ is replaced with $\cW_{\frac12}$.
  In addition, assume that for all $t\in I$ the operators $B(t)$ infinitesimally preserve $(\cW_{-\frac12},\cW_{\frac12},Q)$.
  Then the evolution $R(t,s)$ preserves $(\cW_{-\frac12},\cW_{\frac12},Q)$.
  In particular, it is pre-pseudo-unitary on $\cW_0$.
\end{theorem}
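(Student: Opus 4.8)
The plan is to imitate the autonomous special case Prop.~\ref{symgen1} (and, at one level lower, Thm.~\ref{thm:evolution1}): differentiate $t\mapsto\cinner{R(t,s)v}{QR(t,s)w}$ along the evolution and use the infinitesimal preservation identity to see that the derivative vanishes. The twist here is that $Q$ is only known to be bounded as a pairing $\cW_{\beta}\times\cW_{-\beta}\to\CC$ (and, by Hermitian symmetry, as $\cW_{-\beta}\times\cW_{\beta}\to\CC$) for $\beta\in[0,\tfrac12]$, and in particular \emph{not} as a form on $\cW_{-\frac12}\times\cW_{-\frac12}$, so one must keep careful track of which member of the scale each factor of the Leibniz expansion lives in.

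First I would invoke Thm.~\ref{thm:evolution-per} in the stated form (with $\cW$ replaced by $\cW_{-\frac12}$ and $\cV$ by $\cW_{\frac12}$): it produces the evolution $R(t,s)$, which is $\cW_{-\frac12}$- and $\cW_{\frac12}$-strongly continuous, preserves $\cW_{\frac12}$, and satisfies $\im\partial_t R(t,s)w=B(t)R(t,s)w$ for $w\in\cW_{\frac12}$, the derivative taken in the strong topology of $\cW_{-\frac12}$. Since $B(\cdot)\in B(\cW_{\frac12},\cW_{-\frac12})$ is norm-continuous and $t\mapsto R(t,s)w$ is $\cW_{\frac12}$-continuous, the map $t\mapsto R(t,s)w$ is in fact $C^1$ as a map $I\to\cW_{-\frac12}$.

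Next, fix $s\in I$ and $v,w\in\cW_{\frac12}$, put $a(t):=R(t,s)v$, $b(t):=R(t,s)w$ (both valued in $\cW_{\frac12}$), and set $f(t):=\cinner{a(t)}{Qb(t)}$. Expanding the difference quotient $\tfrac1h(f(t+h)-f(t))$ by the Leibniz rule into the three terms $\cinner{\tfrac1h(a(t+h)-a(t))}{Qb(t)}$, $\cinner{a(t)}{Q\tfrac1h(b(t+h)-b(t))}$ and the remainder $\cinner{a(t+h)-a(t)}{Q\tfrac1h(b(t+h)-b(t))}$: as $h\to0$ the first term tends to $\cinner{a'(t)}{Qb(t)}$ by boundedness of the pairing $\cW_{-\frac12}\times\cW_{\frac12}$, the second to $\cinner{a(t)}{Qb'(t)}$ by boundedness of $\cW_{\frac12}\times\cW_{-\frac12}$, and the remainder to $0$ because $\norm{a(t+h)-a(t)}_{\cW_{\frac12}}\to0$ while $\tfrac1h(b(t+h)-b(t))$ stays bounded in $\cW_{-\frac12}$. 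Hence $f\in C^1(I)$ with $f'(t)=\im\cinner{B(t)a(t)}{Qb(t)}-\im\cinner{a(t)}{QB(t)b(t)}$, and since $a(t),b(t)\in\cW_{\frac12}$ the infinitesimal preservation identity~\eqref{eq:energy_space1} makes the two terms cancel, so $f'\equiv0$ and $f(t)=f(s)=\cinner{v}{Qw}$.

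Finally I would pass from $v,w\in\cW_{\frac12}$ to $v\in\cW_{-\frac12}$, $w\in\cW_{\frac12}$: with $w$ fixed, both sides of $\cinner{R(t,s)v}{QR(t,s)w}=\cinner{v}{Qw}$ are bounded (anti)linear functionals of $v\in\cW_{-\frac12}$ — on the left because $R(t,s)w\in\cW_{\frac12}$ and $R(t,s)\in B(\cW_{-\frac12})$ — so the identity extends by density of $\cW_{\frac12}$ in $\cW_{-\frac12}$. That is exactly the assertion that $R(t,s)$ preserves $(\cW_{-\frac12},\cW_{\frac12},Q)$ in the sense of Def.~\ref{defini}; restricting everything to the pivot $\cW_0$ then gives pre-pseudo-unitarity there. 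The only genuinely delicate point is the bookkeeping in the differentiation step: each summand must pair a $\cW_{-\frac12}$-vector with a $\cW_{\frac12}$-vector, and the quadratic remainder must be killed by $\cW_{\frac12}$-continuity of the orbit rather than by the unavailable boundedness of $Q$ on $\cW_{-\frac12}\times\cW_{-\frac12}$; everything else parallels Thm.~\ref{thm:evolution1} and is routine.
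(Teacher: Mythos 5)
Your proof is correct and follows exactly the route the paper intends: the paper leaves Thm.~\ref{thm:evolution1.} without an explicit proof, but the analogous Thm.~\ref{thm:evolution1} is proved by ``repeating the proof of Prop.~\ref{symgen} with $\cV$ in place of $\Dom(B)$,'' i.e.\ precisely your differentiate-along-the-orbit argument followed by a density extension. Your careful bookkeeping of which member of the scale each factor of the Leibniz expansion lives in (pairing $\cW_{-\frac12}$ against $\cW_{\frac12}$ in each term and killing the quadratic remainder by $\cW_{\frac12}$-continuity) is exactly the detail the paper's terse reference suppresses, and it is handled correctly.
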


Let us compare the constructions of Thm.~\ref{thm:evolution1} and of Theorem~\ref{thm:evolution1.}.
In both cases we obtain a (pre-)pseudo-unitary evolution on a (pre-)pseudo-unitary space.
However, in the former case we have a fixed space $\cW_1$ contained in $\Dom(B(t))$ for all $t$.
In the latter case, we do not have information about the domain of the generator of the evolution on $\cW_0$, that is, of the part of $B(t)$ in $\cW_0$.
On the other hand, in practice the assumptions of Theorem~\ref{thm:evolution1.} can be weaker.

%******************************************************************************%
\section{Abstract Klein--Gordon operator}
\label{Abstract Klein-Gordon operator}

The usual Klein--Gordon operator acts on, say, $C_\mathrm{c}^\infty(M)$, where $M$ is a Lorentzian manifold, and is given by the expression \eqref{hilbert1a}.
$K$ can be interpreted as a Hermitian operator in the sense of the
Hilbert space $L^2(M)$. (One of the main ideas of our paper is the
usefulness of this interpretation.)
After the identification of $M$ with $I\times \Sigma$, where $I$
corresponds to the time variable and $\Sigma$  describes the spatial
variables, we can identify $L^2(M)$ with $L^2(I,\cK)$, where $\cK=
  L^2(\Sigma)$, see \eqref{hilbert} for more details.
Then, formally, the operator $K$ is given by the expression
\begin{equation}
  K \defn \bigl(D_t+W^*(t)\bigr)\frac{1}{\alpha^2(t)} \bigl(D_t+W(t)\bigr)-L(t),\label{kleingordon}
\end{equation}
where $\alpha(t)$ involves the metric tensor, $W(t)$ consists  mostly of the $0$th component of the potential and $L(t)$ is a magnetic Schr\"odinger operator on $\Sigma$.
We will describe this identification in more detail in Subsect.~\ref{Foliating the spacetime}.

In this section we study (\ref{kleingordon}) in an abstract setting.
We are interested in various inverses and bisolutions of $K$.
We treat $\cK$ as an abstract Hilbert space and $L(t)$, $\alpha(t)$, $W(t)$ as given abstract operators.
The results of this section will be applied to the usual Klein--Gordon operator in Sect.~\ref{Quantum Field Theory on curved space-times}.

In order to study propagators associated with the abstract Klein--Gordon operator, we first introduce a certain scale of Hilbertizable spaces $\cW_\lambda$.
Each member of this scale is the direct sum of two Sobolev-type spaces based on $\cK$ describing the ``configurations'' and the ``momenta''.
The space $\cW_0$ has the structure of a Krein space and will play the central role in quantization.
The Cauchy data for $K$ on $\cW_0$  undergo a certain pseudo-unitary evolution whose generator $B(t)$ is made out of $L(t),W(t),\alpha^2(t)$.
After imposing boundary conditions we can define various propagators, first associated with the Cauchy data operator $M=\partial_t+\ri B(t)$, and then associated with the operator $K$ itself.

Note that the formula (\ref{kleingordon}) does not give a rigorous definition of a unique closed operator.
Actually, the analysis of possible closed realizations of $K$ and the corresponding inverses is quite subtle and depends strongly on whether $I$ is finite or not.

 In the case of a finite $I$ one first needs to impose
  appropriate  boundary
conditions at the initial time $t_-$ and the final time $t_+$.
These conditions  lead to a construction of  $E^\Feyn$ and
  $E^\aFeyn$, which are bounded inverses of $M$. They are then used to
  define $G^\Feyn$ and $G^\aFeyn$, which are bounded inverses of $K$.
Inverting them we obtain a well-posed realization of the Klein--Gordon operator $K$.

The situation is different and less understood if $I=\mathbb{R}$.
The Feynman and anti-Feynman inverse can be constructed, however they are not bounded on $L^2(\mathbb{R},\cK)$.
We conjecture that there exists a distinguished self-adjoint realization $K^{\mathrm{s.a.}}$ of $K$ such that these inverses are the boundary values of the resolvent of
$K^{\mathrm{s.a.}}$ from above and below at zero.
The conjecture can be easily shown in some special cases, e.g., if $K$ is stationary.
We describe some arguments in favor of the conjecture, notably, we sketch a possible construction of the resolvent.
There exist recent papers that show this conjecture if $K$ corresponds to the Klein--Gordon operator on asymptotically Minkowskian spaces satisfying a non-trapping condition.

Throughout the section we need to overcome a number of technical issues.
First, it is convenient to assume that the Hamiltonian used in the construction of the phase space is bounded away from zero, or in physical terms, that the mass is strictly positive.
However, physical systems may have a zero mass.
This is solved by assuming that the phase space is constructed not directly from the Hamiltonian $H(t)$, but from $H_0(t)$ which differs by a constant $b$.

Another problem is the regularity.
Assumption \ref{ass.I} allows us only to perform the basic construction.
We introduce the additional Assumption \ref{assII}$(\rho)$, which for $\rho=0$ coincides with Assumption \ref{ass.I} and for $\rho>0$ guarantees additional regularity.
With strengthened hypotheses we are able to show some desired properties of the propagators and of the Klein--Gordon operator.

Note that the above issue essentially disappears if we assume that everything is smooth, therefore it can be considered as purely academic, of interest only to specialists in operator theory.
Nevertheless, we try to give an honest (if not optimal) treatment of this question.

%------------------------------------------------------------------------------%
\subsection{Basic assumptions on the abstract Klein--Gordon quadratic form}

Throughout this section we assume that $\cK$ is a Hilbert space and for $t\in I^\cl$ we are given the following operators on $\cK$:
\begin{enumerate}
  \item self-adjoint $ L(t)$ for which there exists $b\in\mathbb{R}$ such that $L_0(t) \defn L(t)+b$ are positive invertible,
  \item bounded invertible self-adjoint $\alpha(t)$,
  \item operator $W(t)$.
\end{enumerate}

We will say that an  operator-valued function $I\ni t\mapsto A(t)\in B(\cK)$ is \emph{absolutely norm continuous} if there exists $c\in L^1(I)$ such that $c\geq0$ and
\begin{equation*}
  \|A(t)-A(s)\|\leq \int_t^s c(\tau)\dif\tau,\quad t\leq s,\quad t,s\in I.
\end{equation*}

Here is the basic assumption that we will use in this section.
\begin{assumption}
  \mbox{}
  \begin{enumerate}
    \item For any $t\in I$ there exist $0<c_1\leq c_2$ such that
          \begin{equation}
            \label{byby}
            c_1 L_0(0) \leq L_0(t)\leq c_2 L_0(0)
          \end{equation}
          and $I^\cl\ni t\mapsto L_0(0)^{-\frac12} L_0(t)L_0(0)^{-\frac12}\in B(\cK)$ is absolutely norm continuous.
    \item $I^\cl\ni t\mapsto \alpha^2(t)\in B(\cK)$ is absolutely norm continuous.
    \item $I^\cl\ni t\mapsto W(t)L_0(t)^{-\frac12}\in B(\cK)$ is absolutely norm continuous and there exists $a<1$ such that
          \begin{equation*}
            \|\alpha(t)^{-1} W(t)L_0(t)^{-\frac12}\|\leq a.
          \end{equation*}
  \end{enumerate}
  \label{ass.I}
\end{assumption}

For $\beta\in\RR$, $t\in I^\cl$, define the scales of Hilbert spaces
\begin{equation}
  \cK_{\beta,t} \defn L_0(t)^{-\frac\beta2}\cK.
  \label{cK}
\end{equation}
Then by (\ref{byby}) and Subsect.~\ref{sub:Interpolation between Hibertizable spaces}, for $\beta\in[-1,1]$, $\cK_{\beta,t}$ is compatible with $\cK_{\beta,0}$.
Thus we obtain the scale of Hilbertizable spaces
\begin{equation*}
  \cK_\beta,\quad \beta\in[-1,1].
\end{equation*}

Assumption \ref{ass.I} seems insufficient to define the Klein--Gordon operator.
However, we can define the \emph{Klein--Gordon quadratic form} by setting
\begin{align}
  \notag \cinner{f_1}{Kf_2} & =\int_I\cinner[\Big]{\bigl(D_t+W(t)\bigr)f_1(t)}{\frac{1}{\alpha^2(t)} \bigl(D_t+W(t)\bigr)f_2(t)}\dif t \\  & \quad-\int_I\cinner[\big]{L_0(t)^{\frac12}f_1(t)}{L_0(t)^{\frac12}f_2(t)}\dif t+b\int_I\cinner[\big]{f_1(t)}{f_2(t)}\dif t, \label{kleingordon1}
\end{align}
$f_1,f_2\in   C_\mathrm{c}(I,\cK_1)\cap C_\mathrm{c}^1(I,\cK_0)$.
Note that $K$ is a Hermitian form in the sense of the Hilbert space $L^2(I)\otimes\cK\simeq L^2(I,\cK)$.
Formally, it corresponds to the operator given by the expression \eqref{kleingordon}.

Unfortunately, $K$ is not a semibounded form, hence the usual theory of quadratic forms does not apply.
Therefore, it is not easy to interpret $K$ as a closed operator on $ L^2(I,\cK)$.
We will come back to this question in Subsect.~\ref{The abstract Klein-Gordon operator} under more restrictive assumptions.

%------------------------------------------------------------------------------%
\subsection{Pseudo-unitary evolution on the space of Cauchy data}

The following analysis is essentially an adaptation of \cite{derezinski-siemssen:static,derezinski-siemssen:propagators} to the abstract setting.

We consider the scale of Hilbertizable spaces
\begin{equation}
  \cW_\lambda=\cK_{\lambda+\frac12}\oplus\cK_{\lambda-\frac12},\quad \lambda\in\left[-\frac12,\frac12\right].
  \label{cW}
\end{equation}
Of special importance are
\begin{alignat*}{3}
  \text{the energy space}\quad          \cW_\en   & \defn \cW_{\frac12} = \cK_1\oplus\cK_0,          \\
  \text{the dynamical space}\quad       \cW_\dyn  & \defn \cW_0 = \cK_{\frac12}\oplus\cK_{-\frac12}, \\
  \text{and the dual energy space}\quad \cW_\en^* & \defn \cW_{-\frac12} = \cK_0\oplus\cK_{-1}.
\end{alignat*}
For $
  \begin{bmatrix}
    u_1 \\u_2
  \end{bmatrix}
  \in\cW_{-\lambda}$ and $
  \begin{bmatrix}
    v_1 \\v_2
  \end{bmatrix}
  \in\cW_{\lambda}$ with $|\lambda|\leq\frac12$ we introduce the pairing defined by the charge operator $Q=
  \begin{bmatrix}
    0 & \one \\\one & 0
  \end{bmatrix}
$.
In other words,
\begin{equation*}
  \cinner[\Big]{%
    \begin{bmatrix}
      u_1 \\u_2
    \end{bmatrix}%
  }{Q
    \begin{bmatrix}
      v_1 \\v_2
    \end{bmatrix}%
   } =\cinner{u_1}{v_2}+\cinner{u_2}{v_1}.
\end{equation*}

Note that $(\cW_{-\frac12},\cW_{\frac12},Q)$ is a nested pseudo-unitary pair, see Def.~\ref{nested pre-pseudo-unitary pair}.
Moreover, $ (\cW_0,Q)$ is a Krein space.
Indeed,
\begin{equation*}
  S_t \defn
  \begin{bmatrix}
    0 & L_0(t)^{-\frac12} \\ L_0(t)^{\frac12} & 0
  \end{bmatrix}
\end{equation*}
is a bounded self-adjoint involution on the Hilbert space
\begin{equation}
  \label{charge1}
  \cW_{0,t}=
  L_0(t)^{-\frac14}\cK\oplus L_0(t)^{\frac14}\cK.
\end{equation}
\eqref{charge1} is compatible with $\cW_0$.
Hence $S_t$ is an admissible involution.

Introduce the \emph{Hamiltonians}
\begin{align*}
  H(t) & =
           \begin{bmatrix}
             L(t) & W^*(t) \\W(t) & \alpha^2(t)
           \end{bmatrix}
  ,\quad H_0(t)=
                 \begin{bmatrix}
                   L_0(t) & W^*(t) \\W(t) & \alpha^2(t)
                 \end{bmatrix}
  .
\end{align*}

\begin{proposition}
  Suppose Assumption \ref{ass.I} holds.
  There exist $0<c_1\leq c_2$ such that on $\cK\oplus\cK$ we have
  \begin{align*}
    c_1
       \begin{bmatrix}
         L_0(t) & 0 \\0 & \one
       \end{bmatrix}
    \leq & H_0(t) \leq c_2
                          \begin{bmatrix}
                            L_0(t) & 0 \\0 & \one
                          \end{bmatrix}
    ,                                        \\ c_1
    \begin{bmatrix}
      \one & 0 \\0 & L_0(t)
    \end{bmatrix}
    \leq & QH_0(t)Q \leq c_2
                            \begin{bmatrix}
                              \one & 0 \\0 & L_0(t)
                            \end{bmatrix}
    .
  \end{align*}
  Therefore,
  \begin{align*}
    \cW_{\frac12,t} \defn H_0(t)^{-\frac12}(\cK\oplus\cK) & \quad\text{ is compatible with }\cW_{\frac12}, \\ \cW_{-\frac12,t} \defn (Q H_0(t)Q)^{\frac12}(\cK\oplus\cK) & \quad\text{ is compatible with }\cW_{-\frac12}.
  \end{align*}
  In particular, $H_0(t)$ is a positive operator on $\cK\oplus\cK$ with the form domain $\cW_{\frac12}$.
\end{proposition}

Let $\cW_{\lambda,t}$, $\lambda\in\mathbb{R}$, denote the scale of Hilbert spaces defined by interpolation from the nested pair $ \cW_{\frac12,t}$, $ \cW_{-\frac12,t}$.
Clearly, $\cW_{\lambda,t}$ are compatible with the Hilbertizable spaces $\cW_{\lambda}$ for $\lambda\in[-\frac12,\frac12]$.

Introduce the \emph{generators}
\begin{align*}
  B(t) & \defn QH(t)=
                      \begin{bmatrix}
                        W(t) & \alpha^2(t) \\L(t) & W^*(t)
                      \end{bmatrix}
  ,\quad B_0(t) \defn QH_0(t)=
                               \begin{bmatrix}
                                 W(t) & \alpha^2(t) \\L_0(t) & W^*(t)
                               \end{bmatrix}
  .
\end{align*}

\begin{proposition}
  \label{kps1}
  Suppose Assumption \ref{ass.I} holds.
  Then for any $\lambda\in\mathbb{R}$, $B_0(t)$ is a unitary operator from $ \cW_{\frac12+\lambda,t}$ to $ \cW_{-\frac12+\lambda,t}$.
  Besides, it is a self-adjoint operator in the sense of $\cW_{-\frac12+\lambda,t}$ with the domain $ \cW_{\frac12+\lambda,t}$.
  Therefore,
  \begin{equation}
    \cW_{\lambda,t}=|B_0(t)|^{-\lambda+\frac12}\cW_{\frac12,t}=|B_0(t)|^{-\lambda-\frac12}\cW_{-\frac12,t}.
    \label{propi}
  \end{equation}
\end{proposition}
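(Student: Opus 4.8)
The plan is to establish the statement for $\lambda=0$ and then propagate it through the whole scale by functional calculus, so that I never have to interpolate directly with $B_0(t)$.

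\textbf{The case $\lambda=0$, unitarity.} I would first use the preceding proposition, which identifies $\cW_{\frac12,t}=H_0(t)^{-\frac12}(\cK\oplus\cK)$ as the form domain of the positive invertible self-adjoint operator $H_0(t)$ on $\cK\oplus\cK$, carrying the norm $\norm{H_0(t)^{\frac12}\cdot}$, and $\cW_{-\frac12,t}=(QH_0(t)Q)^{\frac12}(\cK\oplus\cK)$ as the anti-dual of $\cW_{\frac12,t}$ taken with respect to the $Q$-pairing, consistently with $\cW_{-\frac12}$ being the $Q$-anti-dual of $\cW_{\frac12}$ in the nested pre-pseudo-unitary pair $(\cW_{-\frac12},\cW_{\frac12},Q)$. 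Let $\cY_t$ denote the anti-dual of $\cW_{\frac12,t}$ with respect to the \emph{bare} $\cK\oplus\cK$ pairing; by the general fact recalled in Subsect.~\ref{sub:Scales of Hilbert spaces} (applied to $A=H_0(t)$), $H_0(t)$ is unitary from $\cW_{\frac12,t}$ onto $\cY_t=H_0(t)^{\frac12}(\cK\oplus\cK)$. Since $Q$ is a self-adjoint unitary involution of $\cK\oplus\cK$ and $QH_0(t)^{\frac12}Q=(QH_0(t)Q)^{\frac12}$, the operator $Q$ carries $\cY_t$ unitarily onto $(QH_0(t)Q)^{\frac12}(\cK\oplus\cK)=\cW_{-\frac12,t}$. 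Composing, $B_0(t)=QH_0(t)$ is unitary from $\cW_{\frac12,t}$ onto $\cW_{-\frac12,t}$.

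\textbf{The case $\lambda=0$, self-adjointness.} I would work with the bounded operator $B_0(t)^{-1}=H_0(t)^{-1}Q$, which by the previous step maps $\cW_{-\frac12,t}$ boundedly into $\cW_{\frac12,t}$, hence into $\cW_{-\frac12,t}$. On the dense subspace $\cK\oplus\cK\subset\cW_{-\frac12,t}$ the inner product is $\cinner{a}{b}_{\cW_{-\frac12,t}}=\cinner{(QH_0(t)Q)^{-1}a}{b}_{\cK\oplus\cK}=\cinner{QH_0(t)^{-1}Qa}{b}_{\cK\oplus\cK}$, and a one-line computation shows that both $\cinner{B_0(t)^{-1}u}{v}_{\cW_{-\frac12,t}}$ and $\cinner{u}{B_0(t)^{-1}v}_{\cW_{-\frac12,t}}$ equal $\cinner{QH_0(t)^{-1}QH_0(t)^{-1}Qu}{v}_{\cK\oplus\cK}$; so $B_0(t)^{-1}$ is symmetric, hence (being bounded) self-adjoint, on $\cW_{-\frac12,t}$. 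Therefore $B_0(t)$ is self-adjoint on $\cW_{-\frac12,t}$ with $0$ in its resolvent set, its domain is $\Ran(B_0(t)^{-1})=\cW_{\frac12,t}$, and the graph norm coincides with the $\cW_{\frac12,t}$-norm because $B_0(t)\colon\cW_{\frac12,t}\to\cW_{-\frac12,t}$ is unitary and $\cW_{\frac12,t}\hookrightarrow\cW_{-\frac12,t}$ continuously.

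\textbf{Propagating through the scale.} Next I would set $N(t):=\abs{B_0(t)}=(B_0(t)^2)^{\frac12}$, a positive invertible self-adjoint operator on $\cW_{-\frac12,t}$ with domain $\cW_{\frac12,t}$; writing $B_0(t)=\sgn(B_0(t))\,N(t)$ with $\sgn(B_0(t))$ unitary on $\cW_{-\frac12,t}$ (as $B_0(t)$ is invertible and self-adjoint there), and combining with the unitarity of $B_0(t)\colon\cW_{\frac12,t}\to\cW_{-\frac12,t}$, I get that $N(t)\colon\cW_{\frac12,t}\to\cW_{-\frac12,t}$ is unitary. By the uniqueness in the scale construction of Subsect.~\ref{sub:Scales of Hilbert spaces} applied to the pair $(\cW_{-\frac12,t},\cW_{\frac12,t})$, $N(t)$ is precisely the operator generating the interpolation scale, so $\cW_{\lambda,t}=N(t)^{-\lambda-\frac12}\cW_{-\frac12,t}$ for all $\lambda\in\RR$; since $\cW_{-\frac12,t}=N(t)\,\cW_{\frac12,t}$ this also equals $N(t)^{-\lambda+\frac12}\cW_{\frac12,t}$, which is \eqref{propi}. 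For general $\lambda$, the operators $\sgn(B_0(t))$ and $N(t)^\mu$ commute with $B_0(t)$ by the functional calculus, and $N(t)^\mu$ maps $\cW_{\nu,t}$ unitarily onto $\cW_{\nu-\mu,t}$; hence $B_0(t)=\sgn(B_0(t))\,N(t)$ maps $\cW_{\frac12+\lambda,t}$ unitarily onto $\cW_{-\frac12+\lambda,t}$, and conjugation by the unitary $N(t)^{\lambda}\colon\cW_{-\frac12+\lambda,t}\to\cW_{-\frac12,t}$ leaves $B_0(t)$ unchanged, so $B_0(t)$ is self-adjoint on $\cW_{-\frac12+\lambda,t}$ with domain $\cW_{\frac12+\lambda,t}$. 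The remaining displays in \eqref{propi} for general $\lambda$ follow as in the $\lambda=0$ case from $\cW_{-\frac12+\lambda,t}=N(t)\,\cW_{\frac12+\lambda,t}$.

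\textbf{Main obstacle.} The only genuinely delicate point is the bookkeeping in the first paragraph: one must keep apart the two anti-duality identifications — the anti-dual $\cY_t$ of $\cW_{\frac12,t}$ with respect to the bare $\cK\oplus\cK$ inner product, which is where $H_0(t)$ naturally lands, and the space $\cW_{-\frac12,t}$ built from $(QH_0(t)Q)^{\frac12}$, which is the anti-dual with respect to the $Q$-twisted pairing — and verify that $Q$ intertwines them isometrically via $QH_0(t)^{\frac12}Q=(QH_0(t)Q)^{\frac12}$. Once the placement of $Q$ is correct, the rest is routine functional calculus together with the interpolation machinery of Subsects.~\ref{sub:Scales of Hilbert spaces} and \ref{sub:Interpolation between Hibertizable spaces}; Assumption~\ref{ass.I} enters only through the preceding proposition, which supplies the two-sided bounds making $H_0(t)$ positive invertible and the spaces $\cW_{\pm\frac12,t}$ well defined.
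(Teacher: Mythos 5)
Your proof is correct and rests on the same identities as the paper's: the unitarity comes from $\cinner{B_0u}{B_0v}_{-\frac12,t}=\cinner{u}{H_0v}=\cinner{u}{v}_{\frac12,t}$ (which you merely repackage as $H_0$ being unitary onto the bare anti-dual followed by the intertwining $QH_0^{1/2}Q=(QH_0Q)^{1/2}$), and the self-adjointness from the symmetry of $B_0$ with respect to the $\cW_{-\frac12,t}$ inner product. The only differences are presentational: the paper deduces self-adjointness directly from ``a unitary (hence surjective) Hermitian operator is self-adjoint'' where you pass through the bounded inverse $B_0^{-1}$, and it compresses the extension to general $\lambda$ into ``the remaining statements follow,'' which you carry out explicitly — and correctly — via the polar decomposition $B_0=\sgn(B_0)\,\abs{B_0}$ and the uniqueness of the operator generating the scale from the nested pair.
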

\begin{proof}
  We drop $0$ and $(t)$ from $H_0(t),B_0(t)$.
  First note that $H$ is bounded from $\cK_{\frac12}\oplus\cK_0$ to $\cK_{-\frac12}\oplus\cK_0$.
  Hence $B$ is bounded from $\cW_{\frac12}=\cK_{\frac12}\oplus\cK_0$ to $\cW_{-\frac12}=\cK_0\oplus\cK_{-\frac12}$.
  Now,
  \begin{equation*}
    \cinner{Bu}{Bv}_{-\frac12,t} = \cinner{QHu}{(QHQ)^{-1}QHv} = \cinner{u}{Hv}=\cinner{u}{v}_{\frac12,t}.
  \end{equation*}
  This proves the unitarity of $B$ from $ \cW_{\frac12,t}$ to $ \cW_{-\frac12,t}$.

  Let $u,v\in \cW_{\frac12}$.
  Then
  \begin{align*}
    \cinner{u}{Bv}_{-\frac12,t}=
    \cinner{u}{(QHQ)^{-1}QHv} = \cinner{QHu}{(QHQ)^{-1}v}= \cinner{Bu}{v}_{-\frac12,t}
  \end{align*}
  proves the Hermiticity in the sense of $\cW_{-\frac12,t}$ with the domain $ \cW_{\frac12,t}$.
  Clearly, an invertible Hermitian operator is self-adjoint.

  Now we obtain
  \begin{equation*}
    \cW_{-\frac12,t}=|B|\cW_{\frac12,t},
  \end{equation*}
  from which (\ref{propi}) and all the remaining statements of the proposition follow.
\end{proof}

\begin{proposition}
  \label{exid}
  Suppose Assumption \ref{ass.I} holds.
  Then $I\ni t\mapsto B(t):\cW_{\frac12}\to\cW_{-\frac12}$ satisfies the assumptions of Thm ~\ref{thm:evolution1.}.
  Therefore, it defines an evolution $R(t,s)$ on $\cW_\lambda$, $-\frac12\leq\lambda\leq\frac12$, pseudo-unitary in the sense of $(\cW_{-\lambda},\cW_{\lambda},Q)$.
  In particular, the evolution $R(t,s)$ is pseudo-unitary on $(\cW_0,Q)$.
\end{proposition}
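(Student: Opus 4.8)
The plan is to verify, one by one, the hypotheses of Theorem~\ref{thm:evolution1.}, which in turn invokes the assumptions of Theorem~\ref{thm:evolution-per} with the substitutions $\cW \rightsquigarrow \cW_{-\frac12}$ and $\cV \rightsquigarrow \cW_{\frac12}$. First I would split off the bounded part: write $B(t) = B_0(t) + V(t)$ where $V(t) = Q(H(t)-H_0(t)) = -b\,Q\bigl(\begin{smallmatrix}\one&0\\0&0\end{smallmatrix}\bigr) = -b\bigl(\begin{smallmatrix}0&0\\\one&0\end{smallmatrix}\bigr)$. This is a constant, hence trivially norm-continuous, operator on $\cW_{-\frac12}$ (it even maps into $\cK_0\oplus\cK_{-1}$ with room to spare), so the perturbation hypothesis of Theorem~\ref{thm:evolution-per} is free. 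It then remains to check that $B_0(t)$ satisfies the hypotheses \ref{item:evolution:a}--\ref{item:evolution:d} of Theorem~\ref{thm:evolution} in the stated scale.

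For \ref{item:evolution:a}: $\cW_{\frac12} \subset \Dom(B_0(t))$ and $B_0(t)\in B(\cW_{\frac12},\cW_{-\frac12})$ is exactly the first sentence of the proof of Prop.~\ref{kps1}; norm-continuity of $t\mapsto B_0(t)$ as a map $\cW_{\frac12}\to\cW_{-\frac12}$ follows from Assumption~\ref{ass.I}, since conjugating the matrix entries by the fixed reference powers $L_0(0)^{\pm\frac12}$ turns $L_0(t),\alpha^2(t),W(t)L_0(t)^{-\frac12}$ into the absolutely-norm-continuous $B(\cK)$-valued functions of \ref{ass.I}(1)--(3), and absolute norm continuity implies norm continuity. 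For \ref{item:evolution:b}: take the compatible scalar products on $\cW_{-\frac12}$ and $\cW_{\frac12}$ to be those of $\cW_{-\frac12,t}$ and $\cW_{\frac12,t}$ from the proposition preceding Prop.~\ref{kps1}, i.e. the ones induced by $(QH_0(t)Q)^{\frac12}$ and $H_0(t)^{\frac12}$ respectively. For \ref{item:evolution:c}: this is precisely Prop.~\ref{kps1} at $\lambda=0$ (self-adjointness of $B_0(t)$ in $\cW_{-\frac12,t}$) and at $\lambda=\frac12$ (self-adjointness of its part in $\cW_{\frac12}$, in the sense of $\cW_{\frac12,t}$ --- using again that $B_0(t)$ is unitary from $\cW_{1,t}$ to $\cW_{0,t}=\cW_{\frac12,t}$ in the notation of Prop.~\ref{kps1} and symmetric there, hence self-adjoint).

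The one genuinely quantitative step, and the place I expect to spend real effort, is \ref{item:evolution:d}: producing a single $C\in L^1_{\loc}(I)$, $C\ge 0$, with $\|w\|_{\cW_{-\frac12},t}\le \|w\|_{\cW_{-\frac12},s}\exp|\int_s^t C|$ and the analogous bound on $\cW_{\frac12}$. Here one compares the norms $\|w\|_{\cW_{\pm\frac12},t}^2 = (w\,|\,(QH_0(t)Q)^{\pm 1}w)$ (up to the obvious $Q$-conjugation in the $+$ case) at two times: differentiating formally, $\frac{\dif}{\dif t}\|w\|_{\cW_{-\frac12},t}^2 = (w\,|\,\partial_t(QH_0(t)Q)\,w)$, and one needs $\partial_t(QH_0(t)Q)$ to be dominated by $2C(t)\,QH_0(t)Q$ in the form sense, uniformly; integrating gives the Grönwall-type estimate. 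The input is Assumption~\ref{ass.I}: \ref{ass.I}(1) gives $c_1 L_0(0)\le L_0(t)\le c_2 L_0(0)$ together with absolute norm continuity of $L_0(0)^{-\frac12}L_0(t)L_0(0)^{-\frac12}$, so $\partial_t L_0(t)$ (as a form between $\cK_{\frac12}$ and $\cK_{-\frac12}$) is bounded by $c(t)\,L_0(0)$, hence by $(c(t)/c_1)L_0(t)$; \ref{ass.I}(2)--(3) do the same for the $\alpha^2$ and $W$ entries against $\alpha^2(t)$ and $L_0(t)^{1/2}$. Combining these with the two-sided bounds $c_1\bigl(\begin{smallmatrix}L_0(t)&0\\0&\one\end{smallmatrix}\bigr)\le H_0(t)\le c_2\bigl(\begin{smallmatrix}L_0(t)&0\\0&\one\end{smallmatrix}\bigr)$ from the proposition preceding Prop.~\ref{kps1} yields the desired $C(t)=$ (a fixed multiple of) $c(t)+c_\alpha(t)+c_W(t)\in L^1(I)$. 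Given \ref{item:evolution:d}, Theorem~\ref{thm:evolution1.} then immediately delivers an evolution $R(t,s)$ that preserves $(\cW_{-\frac12},\cW_{\frac12},Q)$, and Prop.~\ref{preser} (complex interpolation) upgrades this to the claimed pseudo-unitarity on every $(\cW_{-\lambda},\cW_{\lambda},Q)$ for $0\le\lambda\le\frac12$, in particular on $\cW_0$. The only subtlety to handle with care is that $B_0(t)$ and $B(t)$ are \emph{not} self-adjoint as operators on $\cW_0$ — self-adjointness is asserted only in the tilted inner products of the endpoint spaces $\cW_{\pm\frac12,t}$ — so one must be scrupulous about which Hilbert structure each adjoint refers to; but this is exactly the content of Prop.~\ref{kps1} and requires no new argument.
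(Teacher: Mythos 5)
Your proof follows the paper's own argument essentially verbatim: split off the bounded perturbation $B(t)-B_0(t)=-b\bigl(\begin{smallmatrix}0&0\\\one&0\end{smallmatrix}\bigr)$, verify hypotheses (a)--(d) of Thm.~\ref{thm:evolution} for the nested pair $(\cW_{-\frac12},\cW_{\frac12})$ using the time-dependent scalar products built from $H_0(t)$ and $QH_0(t)Q$ (your Gr\"onwall-style derivation of (d) is just the differential form of the paper's estimate $\|H_0(t)^{\frac12}H_0(s)^{-\frac12}\|\le\exp|\int_s^t c|$), and then conclude via Thm.~\ref{thm:evolution1.} and interpolation. The one hypothesis you leave unverified is precisely the extra input of Thm.~\ref{thm:evolution1.} over Thm.~\ref{thm:evolution-per}, namely that $B(t)$ infinitesimally preserves $(\cW_{-\frac12},\cW_{\frac12},Q)$; this is a one-line check (since $QB(t)=H(t)$ is Hermitian as a form on $\cW_{\frac12}$) which the paper also states without computation, but it should be said, as it is what turns a mere evolution into a pseudo-unitary one.
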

\begin{proof}
  First we check that $t\mapsto B_0(t)\in B\bigl(\cW_{\frac12},\cW_{-\frac12}\bigr)$ is norm continuous.
  Besides, $B_0(t)$ is self-adjoint in the sense of the the scalar products $\cinner{\cdot}{\cdot}_{-\frac12,t}$ and $\cinner{\cdot}{\cdot}_{\frac12,t}$.

  There exists $c\in L^1(I)$ such that for $t,s\in I^\cl$,
  \[
    \left\|H_0(s)^{-\frac12}\bigl(H_0(t)-H_0(s)\bigr) H_0(s)^{-\frac12} \right\|\leq 2\int_s^t c(u)\dif u.
  \]
  Therefore,
  \[
    \left\|H_0(s)^{-\frac12}H_0(t) H_0(s)^{-\frac12} \right\| \leq \exp\Big(2\int_s^t c(u)\dif u\Big),
  \]
  hence $\| v\|_{\frac12,t}= \left\| H_0(t)^{\frac12} H_0(s)^{-\frac12}\right\|\| v\|_{\frac12,s} \leq \exp\left(\int_t^s c(\tau)\dif\tau\right)\|v\|_{\frac12,s}$.

  Similarly, there exists $c\in L^1(I)$ such that for $t,s\in I^\cl$,
  \[
    \left\|(QH_0(s)Q)^{-\frac12}\bigl(QH_0(t)Q-QH_0(s)Q\bigr) (QH_0(s)Q)^{-\frac12} \right\|\leq 2\int_s^t c(u)\dif u.
  \]
  Therefore,
  \begin{align*}\MoveEqLeft
    \left\|(QH_0(t)Q)^{\frac12}(QH_0(s) Q)^{-1} (QH_0(t)Q)^{\frac12} \right\| \\
    & = \left\|(QH_0(s)Q)^{-\frac12}QH_0(t) Q (QH_0(s)Q)^{-\frac12} \right\|
    \leq \exp\Big(2\int_s^t c(u)\dif u\Big),
  \end{align*}
  hence $\| v\|_{-\frac12,s}= \left\| H_0(s)^{-\frac12} H_0(t)^{\frac12}\right\|\| v\|_{-\frac12,t} \leq \exp\left(\int_t^s c(\tau)\dif\tau\right)\|v\|_{-\frac12,t}$.

  Thus the assumptions of Thm.~\ref{thm:evolution} are satisfied and $B_0(t)$ defines a dynamics on $\cW_\lambda$ for $-\frac12\leq\lambda\leq\frac12$.

  The perturbation $B(t)-B_0(t)$ is bounded.
  Therefore, the assumptions of Thm~\ref{thm:evolution-per} are satisfied, and $B(t)$ also defines a dynamics.

  Finally, both $B_0(t)$ and $B(t)$ infinitesimally preserve the pseudo-unitary nested pair $(\cW_{\frac12},\cW_{-\frac12},Q)$.
  Hence the assumptions of Thm~\ref{thm:evolution1.} hold
  and $R(t,s)$ is pseudo-unitary in the sense of  $(\cW_{\frac12},\cW_{-\frac12},Q)$.
\end{proof}

%------------------------------------------------------------------------------%
\subsection{Propagators on a finite interval}
\label{Propagators for a finite interval}

Assume that $I=\mathopen{]}t_-,t_+\mathclose{[}$ is finite.
Suppose Assumption \ref{ass.I} holds.
Let $R(t,s)$ be the corresponding pseudo-unitary evolution on the Krein space $\cW_0$, whose existence is guaranteed by Prop.~\ref{exid}.
Thus we are now in the setting of Subsect.~\ref{sub:Classical bisolutions and inverses}.

First we define the propagators for the Cauchy data.
The classical propagators $E^\PJ$, $E^\vee$, $E^\wedge$ are introduced as in Def.~\ref{def:classical-propagators}.
Then we choose two admissible involutions $S_+$, $S_-$ on $\cW_0$.
We then define the non-classical propagators $E_\pm^{(+)}$, $E_\pm^{(+)}$, $E^\Feyn$, $E^\aFeyn$ as in Defs.~\ref{def:smo}, \ref{def-feyn} and~\ref{def-afeyn}.
(Recall that the asymptotic complementarity is automatically satisfied.)
All the propagators for the Cauchy data are bounded operators on $L^2(I,\cW_0)$.

Clearly, we can write
\begin{align}
  \label{uppp}
  E^\bullet =
  \begin{bmatrix}
    E^\bullet_{11} & E^\bullet_{12} \\
    E^\bullet_{21} & E^\bullet_{22}
  \end{bmatrix}.
\end{align}
We define the propagators for the abstract Klein--Gordon operator by selecting the upper right element of the matrix of \eqref{uppp} and possibly by multiplying it by a conventional factor:
\bes\label{pkc}
\begin{align}
  \label{pkc1.}
  G^\bullet & \defn \im E_{12}^\bullet,\quad \bullet=\PJ,\vee,\wedge,\Feyn,\aFeyn; \\
  G_\pm^{(+)} & \defn E_{\pm,12}^{(+)},\quad G_\pm^{(-)} \defn - E_{\pm,12}^{(-)}.
\end{align}
\ees

\begin{theorem}
  (\ref{pkc}) are bounded operators on $L^2(I,\cK)$.
  They satisfy
  \bes
  \begin{align*}
    G^{\PJ*} & =-G^\PJ; \\ G^{\vee*} & =G^\wedge;\\ G^{\Feyn*} & =G^\aFeyn;\\ G_\pm^{(+)*} & = G_\pm^{(+)}\geq0,\\ G_\pm^{(-)*} & = G_\pm^{(-)}\geq0.
  \end{align*}
  \ees
\end{theorem}

We expect that typically $G^\bullet$ with $\bullet=\vee,\wedge,\Feyn,\aFeyn$ have a zero nullspace and a dense range.
(This will be proven below under some additional assumptions.)
If this is the case, we can define
\begin{equation}
  \label{pkc1}
  K^\bullet \defn G^{\bullet-1},\quad \bullet=\vee,\wedge,\Feyn,\aFeyn.
\end{equation}
Note that $K^\bullet $ can be viewed as well-posed realizations of the Klein--Gordon operator on a slab with appropriate (non-self-adjoint) boundary conditions.
Clearly,
\bes
\begin{align*}
  K^{\vee*} & =K^\wedge; \\ K^{\Feyn*} & =K^\aFeyn.
\end{align*}
\ees

%------------------------------------------------------------------------------%
\subsection{Propagators on the real line}
\label{Propagators on the real line}

Let $I=\RR$.
We assume that for $\pm t>T$ the operators $L(t)$, $\alpha(t)$ and $W(t)$ do not depend on $t$.
Therefore, the generators $B(t)$ for $\pm t>T$ do not depend on $t$, so that they can be denoted $B_\pm$.
We also assume that $B_\pm$ are stable and set $S_\pm \defn \sgn(B_\pm)$.
We define the propagators $E^\bullet$ undestood as operators from $L_{\mathrm{c}}^2(\mathbb{R},\cW_0)$ to $L_\loc^2(\mathbb{R},\cW_0)$.
Then we define the operators $G^\bullet$ just as in (\ref{pkc}), interpreted as operators $L_{\mathrm{c}}^2(\mathbb{R},\cK)$ to $L_\loc^2(\mathbb{R},\cK)$.
Obvious analogs of Prop.~\ref{pro1.} and~\ref{analog} hold for $I=\mathbb{R}$.

The above propagators play a central role in Quantum Field Theory on the spacetime $M$.
In fact, they correspond to the positive energy Fock representations of incoming and outgoing quantum fields, as will be sketched in Sect.~\ref{Quantum Field Theory on curved space-times}.

We actually believe that this choice is also distinguished for very different reasons by purely mathematical arguments.
It probably corresponds to a distinguished (maybe unique) self-adjoint realization of the Klein--Gordon operator.
We formulate our expectation in the following conjecture.
\begin{conjecture}
  For a large class of asymptotically stationary and stable abstract Klein--Gordon operators the following holds:
  \begin{enumerate}
    \item There exists a distinguished self-adjoint operator $K^{\mathrm{s.a.}}$ on $L^2(\mathbb{R},\cK)$ such that on $\cD(K^{\mathrm{s.a.}})$ the quadratic form (\ref{kleingordon1}) coincides with $\cinner{f_1}{K^{\mathrm{s.a.}}f_2}$.
    \item For $s>\frac12$ the following statements hold in the sense of $\langle t\rangle^{-s} L^2(\RR,\cK)\to\langle t\rangle^{s}L^2(\RR,\cK)$:
          \begin{align*}
            \slim_{\epsilon\searrow0} (K^{\mathrm{s.a.}}-\ri\epsilon)^{-1} & =G^{\Feyn},  \\
            \slim_{\epsilon\searrow0} (K^{\mathrm{s.a.}}+\ri\epsilon)^{-1} & =G^{\aFeyn}.
          \end{align*}
  \end{enumerate}
  \label{coject}
\end{conjecture}

We will discuss arguments in favor of this conjecture in Subsect.~\ref{Resolvent}.

%------------------------------------------------------------------------------%
\subsection{Perturbation of the evolution by the spectral parameter}

Now we are going to compute the resolvent of well-posed realizations of the Klein--Gordon operator.
To this end in this subsection we introduce the perturbed evolution $R_z(t,s)$.

Let $z\in\mathbb{C}$.
We define
\begin{align*}
  Z & \defn
            \begin{bmatrix}
              0 & 0 \\\one & 0
            \end{bmatrix}
  ,                         \\ B_z(t) & \defn
  \begin{bmatrix}
    W(t) & \alpha^2(t) \\L(t)-z & W^*(t)
  \end{bmatrix}
     =B(t)-zZ .
\end{align*}
Note that $Z$ is a bounded operator on each $\cW_\lambda$: \[ \|Zv\|_\lambda=\|L(t)^{\frac12(\lambda-\frac12)}v_1\|\leq \| L(t)^{\frac12(\lambda+\frac12)}v_1\|+\|L(t)^{\frac12(\lambda-\frac12)}v_2\|=\|v\|_\lambda.
\]

\begin{proposition}
  Suppose Assumption \ref{ass.I} holds.
  Then $t\mapsto B_z(t)$ generates an evolution $R_z(t,s)$ on $\cW_\lambda$, $\lambda\in[-\frac12,\frac12]$.
  We have
  \begin{equation}
    \label{pseu-}
    R_{\bar z}(t,s)^*QR_z(t,s)=Q, \quad R_z(t,s)^*=QR_{\bar z}(s,t)Q .
  \end{equation}
  In particular, if $z\in\mathbb{R}$, then $R_z(t,s)$ is pseudo-unitary on $\cW_0$.
\end{proposition}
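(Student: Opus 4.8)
The plan is to treat $-zZ$ as a bounded perturbation of the generator $B(t)$ from Prop.~\ref{exid}, and then to repeat, for the twisted pair $(B_z,B_{\bar z})$, the differentiation argument used in the proof of Prop.~\ref{symgen} (and of Thm.~\ref{thm:evolution1.}). First I would establish existence of $R_z(t,s)$. Since $L_0(t)=L(t)+b$ we have $B(t)=B_0(t)-bZ$, hence $B_z(t)=B_0(t)-(b+z)Z$. The computation just above the statement shows $Z\in B(\cW_\lambda)$ for every $\lambda\in[-\frac12,\frac12]$, so $t\mapsto-(b+z)Z$ is a constant, hence norm continuous, family of bounded operators, and $B_0(t)$ satisfies the hypotheses of Thm.~\ref{thm:evolution} with $\cW=\cW_{-\frac12}$, $\cV=\cW_{\frac12}$ exactly as verified inside the proof of Prop.~\ref{exid}. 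Thus Thm.~\ref{thm:evolution-per} applies and produces a unique evolution $R_z(t,s)$ enjoying all properties of Thm.~\ref{thm:evolution}: it is bounded on and preserves $\cW_\lambda$ for $\lambda\in[-\frac12,\frac12]$ (by interpolation, as in Prop.~\ref{exid}), and for $w\in\cW_{\frac12}$ it satisfies $\ri\partial_tR_z(t,s)w=B_z(t)R_z(t,s)w$ and $-\ri\partial_sR_z(t,s)w=R_z(t,s)B_z(s)w$ with derivatives in the strong topology of $\cW_{-\frac12}$.

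The core of the proof is the identity $R_{\bar z}(t,s)^*QR_z(t,s)=Q$, which I would read as $\cinner{R_{\bar z}(t,s)v}{QR_z(t,s)w}=\cinner{v}{Qw}$ for all $v,w$. The one new algebraic ingredient is that $Z$ is $Q$-symmetric: from the block forms $Z=\left(\begin{smallmatrix}0&0\\\one&0\end{smallmatrix}\right)$ and $Q=\left(\begin{smallmatrix}0&\one\\\one&0\end{smallmatrix}\right)$ one reads off $\cinner{Zu}{Qu'}=\cinner{u_1}{u_1'}=\cinner{u}{QZu'}$. Since $B(t)$ infinitesimally preserves $(\cW_{-\frac12},\cW_{\frac12},Q)$ (proved inside Prop.~\ref{exid}), so that $\cinner{B(t)u}{Qu'}=\cinner{u}{QB(t)u'}$ on $\cW_{\frac12}$, and since the charge form is antilinear in its first slot, these combine to the twisted intertwining relation
\begin{equation*}
  \cinner{B_{\bar z}(t)u}{Qu'}=\cinner{u}{QB_z(t)u'},\qquad u,u'\in\cW_{\frac12};
\end{equation*}
here one must keep careful track of the conjugation so that the $\bar z$ on the left matches the $z$ on the right. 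Fixing $v,w\in\cW_{\frac12}$ and differentiating $t\mapsto\cinner{R_{\bar z}(t,s)v}{QR_z(t,s)w}$ (legitimate because the charge pairing is bounded on $\cW_{-\frac12}\times\cW_{\frac12}$ and on $\cW_{\frac12}\times\cW_{-\frac12}$, and $R_z(t,s)w$, $R_{\bar z}(t,s)v$ are $\cW_{\frac12}$-continuous with $\cW_{-\frac12}$-continuous derivatives), the evolution equations give that the derivative equals $\ri\bigl(\cinner{B_{\bar z}(t)R_{\bar z}(t,s)v}{QR_z(t,s)w}-\cinner{R_{\bar z}(t,s)v}{QB_z(t)R_z(t,s)w}\bigr)$, which vanishes by the twisted relation. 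Hence the pairing is constant in $t$, and at $t=s$ it equals $\cinner{v}{Qw}$; by density of $\cW_{\frac12}$ in $\cW_0$ and boundedness of $R_z(t,s)$, $R_{\bar z}(t,s)$ and the charge form on $\cW_0$, this holds for all $v,w\in\cW_0$.

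For the second identity I would use $R_{\bar z}(t,s)^*QR_z(t,s)=Q$ together with the group law $R_z(t,s)R_z(s,t)=\one$ and $Q^2=\one$: multiplying on the right by $R_z(s,t)$ and then by $Q$ gives $R_{\bar z}(t,s)^*=QR_z(s,t)Q$, and taking adjoints ($Q$ being $*$-self-adjoint) and swapping $s\leftrightarrow t$ yields $R_z(t,s)^*=QR_{\bar z}(s,t)Q$ — equivalently, $R_z(t,s)^{*Q}=R_{\bar z}(s,t)$. Finally, when $z\in\RR$ one has $B_{\bar z}=B_z$ and the identity collapses to $R_z(t,s)^*QR_z(t,s)=Q$, i.e.\ $R_z(t,s)$ preserves $Q$ on $\cW_0$; alternatively, for real $z$ the family $B_z(t)$ itself infinitesimally preserves $(\cW_{-\frac12},\cW_{\frac12},Q)$ by the $Q$-symmetry of $Z$, so pseudo-unitarity of $R_z(t,s)$ also follows directly from Thm.~\ref{thm:evolution1.}.

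I expect no deep obstacle: the proof is a perturbation-plus-differentiation routine. The only points requiring care are the bookkeeping of the antilinear slot, so that the $z$/$\bar z$ asymmetry in the twisted relation comes out right, and — as in Prop.~\ref{exid} — checking that the perturbation argument delivers the evolution on the whole scale $\cW_\lambda$, $\lambda\in[-\frac12,\frac12]$, rather than only on $\cW_{-\frac12}$; the differentiability of the charge pairing is immediate from the regularity furnished by Thm.~\ref{thm:evolution}.
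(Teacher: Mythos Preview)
Your proposal is correct and follows precisely the route the paper implicitly intends: the paper states this proposition without proof, treating it as routine given Thm.~\ref{thm:evolution-per} (bounded perturbation of the generator) for existence and the differentiation argument of Prop.~\ref{symgen}/Thm.~\ref{thm:evolution1.} for the twisted $Q$-identity. Your handling of the $z/\bar z$ bookkeeping via the $Q$-symmetry of $Z$ and the antilinearity of the first slot is exactly the point that makes \eqref{pseu-} work, and the remaining steps are the expected density and algebra.
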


Formally the equation
\begin{equation*}
  (z+K)f=0
\end{equation*}
is equivalent to
\begin{align*}
  \ri\partial_t
               \begin{bmatrix}
                 u_1 \\u_2
               \end{bmatrix}
   & =B_z(t)
            \begin{bmatrix}
              u_1 \\u_2
            \end{bmatrix}
  ,                        \\ u_1=f, & \quad u_2=\frac{1}{\alpha^2(t)}\bigl(D_t+W(t)\bigr)f.
\end{align*}
Therefore, the evolution $R_z(t,s)$ can be used to construct the resolvent of realizations of $K$.

%------------------------------------------------------------------------------%
\subsection{Resolvent for finite intervals}
\label{Resolvent of realizations of the Klein--Gordon}

Assume again that $I$ is finite.
We are back in the setting of Subsect.~\ref{Propagators for a finite interval}.
Recall that we impose Assumption \ref{ass.I} and choose admissible involutions $S_+$, $S_-$.
Let $(\cZ_+^{(+)},\cZ_+^{(-)})$ and $(\cZ_-^{(+)},\cZ_-^{(-)})$ be the corresponding particle/antiparticle spaces.
For $z\in\CC$ and $t\in I$, set
\begin{equation}
  \label{resol}
  \cZ_{\pm,z}^{(+)}(t) \defn R_z(t,t_\pm)\cZ_+^{(+)},\quad  \cZ_{\pm,z}^{(-)}(t) \defn R_z(t,t_\pm)\cZ_{\pm}^{(-)}.
\end{equation}
Let
\begin{align*}
  \RS \defn \bigl\{ z\in\CC \;\big| & \text{ for some (hence all) $t\in I$ the pair of subspaces} \\  & \text{$\bigl(\cZ_{+,z}^{(+)}(t),\cZ_{-,z}^{(-)}(t)\bigr)$ is complementary} \bigr\}.
\end{align*}
By Prop.~\ref{basic0} (2),
\begin{align*}
  \RS \defn \bigl\{ z\in\CC \;\big| & \text{ for some (hence all) $t\in I$ the pair of subspaces} \\  & \text{$\bigl(\cZ_{+,z}^{(-)}(t),\cZ_{-,z}^{(+)}(t)\bigr)$ is complementary} \bigr\}.
\end{align*}
For $z\in \RS$ we define
\begin{align*}
  \Lambda_z^{\Feyn(+)}(t), & \quad\text{ the projection onto $\cZ_{+,z}^{(+)}(t)$ along $\cZ_{-,z}^{(-)}(t)$}, \\ \Lambda_z^{\Feyn(-)}(t), & \quad\text{ the projection onto $\cZ_{-,z}^{(-)}(t)$ along $\cZ_{+,z}^{(+)}(t)$};\\ \Lambda_z^{\aFeyn(-)}(t), & \quad\text{ the projection onto $\cZ_{+,z}^{(-)}(t)$ along $\cZ_{-,z}^{(+)}(t)$},\\ \Lambda_z^{\aFeyn(+)}(t), & \quad\text{ the projection onto $\cZ_{-,z}^{(+)}(t)$ along $\cZ_{+,z}^{(-)}(t)$}.
\end{align*}
Set
\begin{align*}
  E_z^\Feyn(t,s) & \defn \theta(t-s) R_z(t,s) \Lambda_z^{\Feyn(+)}(s) - \theta(s-t) R_z(t,s) \Lambda_z^{\Feyn(-)}(s), \\ E_z^\aFeyn(t,s) & \defn \theta(t-s) R_z(t,s) \Lambda_z^{\aFeyn(-)}(s) - \theta(s-t) R_z(t,s) \Lambda_z^{\aFeyn(-)}(s);\\ G_z^\Feyn(t,s) & \defn \ri E_{z,12}^\Feyn(t,s),\\ G_z^\aFeyn(t,s) & \defn \ri E_{z,12}^\aFeyn(t,s).
\end{align*}

\begin{proposition}
  For $ z\in \RS$ the operators $G_z^\Feyn$, $G_{\bar{z}}^\aFeyn$ are bounded and satisfy the resolvent equation:
  \begin{align}
    G_z^\Feyn-G_w^\Feyn   & =(z-w) G_z^\Feyn G_w^\Feyn, \label{popo1}
    \\
    G_z^\aFeyn-G_w^\aFeyn & =(z-w)
    G_z^\aFeyn G_w^\aFeyn.
  \end{align}
  Besides,
  \begin{align}
    ( Q E_z^\Feyn)^*  & = -QE_{\bar z}^\aFeyn,\label{assua}
    \\
    (    G_z^\Feyn)^* & =   G_{\bar z}^\aFeyn.\label{assu}
  \end{align}
  \label{popopo}
\end{proposition}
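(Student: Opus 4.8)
The plan is to establish the three claims --- boundedness, the resolvent equation, and the adjoint relations --- more or less in that order, reducing everything to properties of the perturbed evolution $R_z(t,s)$ together with the algebraic identities for the projections already in place.

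\emph{Boundedness.} Since $I$ is finite, $R_z(t,s)$ is a bounded operator on $\cW_0$ with norm locally uniformly bounded in $z$ (it is generated by $B_z(t)=B(t)-zZ$, a bounded perturbation of $B(t)$, so Thm.~\ref{thm:evolution-per} applies). The projections $\Lambda_z^{(\boxplus-)}(t),\Lambda_z^{(+\boxminus)}(t)$ are bounded because $z\in\RS$ means exactly that $\bigl(\cZ_{+,z}^{(+)}(t),\cZ_{-,z}^{(-)}(t)\bigr)$ is a complementary pair of closed subspaces. Hence the temporal kernel $E_z^{(+-)}(t,s)$ in \eqref{eq:feynman-kernelsa} is a bounded-operator-valued function on the finite square $I\times I$, bounded uniformly, so the integral operator $E_z^{(+-)}$ is bounded on $L^2(I,\cW_0)$; passing to the $(1,2)$-component gives boundedness of $G_z^{(+-)}$ on $L^2(I,\cK)$, and likewise for $G_{\bar z}^{(-+)}$. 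This part is routine.

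\emph{The resolvent equation.} This is the crux. The idea is that $E_z^{(+-)}$ is, heuristically, the inverse of $\partial_t+\ri B_z(t)=\partial_t+\ri B(t)-\ri z Z$, so $E_z^{(+-)}-E_w^{(+-)}$ should equal $\ri(z-w)E_z^{(+-)}Z E_w^{(+-)}$; projecting to the $(1,2)$-entry and using that $Z=\bigl(\begin{smallmatrix}0&0\\\one&0\end{smallmatrix}\bigr)$ picks out the $(1,2)$-entries on the right correctly, one gets $G_z^{(+-)}-G_w^{(+-)}=(z-w)G_z^{(+-)}G_w^{(+-)}$. To make this rigorous at the level of kernels I would use Duhamel's formula for the two evolutions, $R_z(t,s)-R_w(t,s)=\ri(z-w)\int_s^t R_z(t,r)Z R_w(r,s)\dif r$, together with the intertwining of the $\Lambda_z$-projections by $R_z$ (the analogue of the relations displayed after Def.~\ref{def:feyn}, now with the $z$-subscript), and carefully track the $\theta$-functions when composing $E_z^{(+-)}$ with $E_w^{(+-)}$; the Duhamel integral naturally reproduces the split into $\theta(t-s)$ and $\theta(s-t)$ pieces, and the boundary terms at $r=t_\pm$ vanish because $\Lambda_w^{(\boxplus-)}$ annihilates the in-data space and $\Lambda_w^{(+\boxminus)}$ annihilates the out-data space. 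Alternatively, and perhaps more cleanly, I would invoke Prop.~\ref{pos8}(1): for $f$ in the dense class \eqref{sfs} one has $G_z^{(+-)}(K+z)f=f$ (the $z$-analogue of \eqref{vece2.}, via the factorization \eqref{idid1} with $L$ replaced by $L-z$), so $G_z^{(+-)}$ and $G_w^{(+-)}$ are genuine (bounded) inverses of $(K+z)$ and $(K+w)$ on a common core, and the resolvent identity follows from the purely algebraic computation $(K+z)^{-1}-(K+w)^{-1}=(w-z)(K+z)^{-1}(K+w)^{-1}$, being careful to note $(K+z)-(K+w)=z-w$ so the sign works out to $(z-w)G_z^{(+-)}G_w^{(+-)}$. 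The main obstacle is bookkeeping: making sure the inverse relation holds as bounded operators (not just on a core) and that the sign conventions in $B_z=B-zZ$ versus $(z+K)$ are consistent.

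\emph{The adjoint relations.} Here I would use the pseudo-unitary-type identity \eqref{pseu-}, namely $R_z(t,s)^*=QR_{\bar z}(s,t)Q$, which is the $z$-dependent version of $R(s,t)^{*Q}=R(t,s)$ used in the proof of Prop.~\ref{pos6}. From \eqref{pseu-} one deduces that the $Q$-adjoint of the $z$-projections are the $\bar z$-projections with particle/antiparticle and in/out roles swapped: $\Lambda_z^{(\boxplus-)}(t)^{*Q}=\Lambda_{\bar z}^{(-\boxplus)}(t)$ and $\Lambda_z^{(+\boxminus)}(t)^{*Q}=\Lambda_{\bar z}^{(\boxminus+)}(t)$ --- this follows from $\cZ_{+,z}^{(+)}(t)^{\perp Q}=\cZ_{+,\bar z}^{(-)}(t)$ and the analogous identity at $t_-$, which in turn come from \eqref{pos5}, \eqref{resol} and \eqref{pseu-}. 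Then a kernel computation identical in structure to the one in the proof of Prop.~\ref{pos6}, with $R,\Lambda$ replaced by $R_z,\Lambda_z$, gives $E_z^{(+-)}(s,t)^{*Q}=-E_{\bar z}^{(-+)}(t,s)$, i.e.\ $(QE_z^{(+-)})^*=-QE_{\bar z}^{(-+)}$, which is \eqref{assua}. Finally \eqref{assu} follows by extracting the $(1,2)$-component: since $Q=\bigl(\begin{smallmatrix}0&\one\\\one&0\end{smallmatrix}\bigr)$ interchanges the two summands, the $(1,2)$-entry of $QE_z^{(+-)}$ is the $(2,2)$-entry... more directly, the $(1,2)$-entry of the operator equation together with the definition $G_z^{(+-)}=-\ri E_{z,12}^{(+-)}$ and the fact that the ordinary Hilbert-space adjoint on $L^2(I,\cK)$ corresponds under the block structure to the $Q$-adjoint composed with $Q$ on both sides, yields $(G_z^{(+-)})^*=G_{\bar z}^{(-+)}$. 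I expect \eqref{assu} to require a short but careful unwinding of how $*$ on $L^2(I,\cK)$ relates to $*Q$ on $L^2(I,\cW_0)$ through the embedding $f\mapsto\bigl(\begin{smallmatrix}f\\ \cdot\end{smallmatrix}\bigr)$; the rest is mechanical.
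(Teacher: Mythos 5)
Your primary route is exactly the paper's: the resolvent equation is proved by computing $(z-w)E_z^{(+-)}ZE_w^{(+-)}$ at the kernel level via the Duhamel identity for $R_z-R_w$, tracking the $\theta$-functions, and observing that the boundary contributions at $t_\pm$ vanish because at the endpoints the range of one $\Lambda_w$-projection coincides with the kernel of the corresponding $\Lambda_z$-projection (both being the fixed spaces $\cZ_-^{(-)}$, resp.\ $\cZ_+^{(+)}$, independent of the spectral parameter); the adjoint relations likewise follow from \eqref{pseu-} and the $Q$-adjoint swap of the projections, exactly as the paper (tersely) indicates. Your alternative argument via $G_z^{(+-)}(K+z)f=f$ on a core is the one with the genuine gap you yourself flag (it gives only a one-sided inverse on a dense set), but since it is offered only as a backup, the proposal as a whole matches the paper's proof.
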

\begin{proof}
  The boundedness is obvious.
  Let us prove (\ref{popo1}).
  A straightforward computation yields
  \bes\begin{align*}\MoveEqLeft
    (z-w) \bigl(E_z^\Feyn ZE_w^\Feyn \bigr)(t,s) = \int_{t_-}^{t_+} E_z^\Feyn(t, \tau) Z E_w^\Feyn(\tau, s) \dif\tau \\
    & = \phantom{{}+{}} \theta(t-s)(z-w)\int_s^t\Lambda_z^{\Feyn(+)}(t)R_z(t,\tau)ZR_w(\tau,s)\Lambda_w^{\Feyn(+)}(s)\dif\tau \\
    &\quad + \theta(s-t)(z-w)\int_t^s\Lambda_z^{\Feyn(-)}(t)R_z(t,\tau)ZR_w(\tau,s)\Lambda_w^{\Feyn(-)}(s)\dif\tau \\
    &\quad - \theta(t-s)(z-w)\int_{t_-}^s\Lambda_z^{\Feyn(+)}(t)R_z(t,\tau)ZR_w(\tau,s)\Lambda_w^{\Feyn(-)}(s)\dif\tau \\
    &\quad - \theta(s-t)(z-w)\int_{t_-}^t\Lambda_z^{\Feyn(+)}(t)R_z(t,\tau)ZR_w(\tau,s)\Lambda_w^{\Feyn(-)}(s)\dif\tau \\
    &\quad - \theta(s-t)(z-w)\int_s^{t_+}\Lambda_z^{\Feyn(-)}(t)R_z(t,\tau)ZR_w(\tau,s)\Lambda_w^{\Feyn(+)}(s)\dif\tau \\
    &\quad - \theta(t-s)(z-w)\int_t^{t_+}\Lambda_z^{\Feyn(-)}(t)R_z(t,\tau)ZR_w(\tau,s)\Lambda_w^{\Feyn(+)}(s)\dif\tau.
  \intertext{By the fundamental theorem of calculus this equals}
    & = \phantom{{}+{}} \theta(t-s) \Lambda_z^{\Feyn(+)}(t)\bigl(R_z(t,s)-R_w(t,s)\bigr)\Lambda_w^{\Feyn(+)}(s) \\
    &\quad + \theta(s-t)\Lambda_z^{\Feyn(-)}(t)\bigl(R_w(t,s)-R_z(t,s)\bigr)\Lambda_w^{\Feyn(-)}(s) \\
    &\quad - \theta(t-s)\Lambda_z^{\Feyn(+)}(t)\bigl(R_z(t,t_-)R_w(t_-,s)-R_z(t,s)\bigr)\Lambda_w^{\Feyn(-)}(s)\\
    &\quad - \theta(s-t)\Lambda_z^{\Feyn(+)}(t)\bigl(R_z(t,t_-)R_w(t_-,s)-R_w(t,s)\bigr)\Lambda_w^{\Feyn(-)}(s)\\
    &\quad - \theta(s-t)\Lambda_z^{\Feyn(-)}(t)\bigl(-R_z(t,t_+)R_w(t_+,s)+R_z(t,s)\bigr)\Lambda_w^{\Feyn(+)}(s)\\
    &\quad - \theta(t-s)\Lambda_z^{\Feyn(-)}(t)\bigl(-R_z(t,t_+)R_w(t_+,s)+R_w(t,s)\bigr)\Lambda_w^{\Feyn(+)}(s).
  \intertext{We rearrange this, obtaining}
    & = \phantom{{}+{}} \theta(t-s) \Lambda_z^{\Feyn(+)}(t)R_z(t,s)\bigl(\Lambda_w^{\Feyn(+)}(s)+\Lambda_w^{\Feyn(-)}(s)\bigr) \\
    &\quad - \theta(t-s)\bigl(\Lambda_z^{\Feyn(-)}(t)+ \Lambda_z^{\Feyn(+)}(t)\bigr) R_w(t,s) \Lambda_w^{\Feyn(+)}(s) \\
    &\quad - \theta(s-t) \Lambda_z^{\Feyn(-)}(t)R_z(t,s)\bigl(\Lambda^{\Feyn(+)}_w(s)+\Lambda_w^{\Feyn(-)}(s)\bigr) \\
    &\quad + \theta(s-t)\bigl(\Lambda_z^{\Feyn(-)}(t)+ \Lambda_z^{\Feyn(+)}(t)\bigr) R_w(t,s) \Lambda_w^{\Feyn(-)}(s) \\
    &\quad - R_z(t,t_-)\Lambda_z^{\Feyn(+)}(t_-)\Lambda_w^{\Feyn(-)}(t_-)R_w(t_-,s) \numberthis\label{last1} \\
    &\quad + R_z(t,t_+)\Lambda_z^{\Feyn(-)}(t_+)\Lambda_w^{\Feyn(+)}(t_+)R_w(t_+,s) \numberthis\label{last2}
  \intertext{which simplifies to}
    & = \phantom{{}-{}} \theta(t-s) \Lambda_z^{\Feyn(+)}(t)R_z(t,s) -\theta(t-s) R_w(t,s) \Lambda_w^{\Feyn(+)}(s) \\
    &\quad - \theta(s-t) \Lambda_z^{\Feyn(-)}(t)R_z(t,s) +\theta(s-t) R_w(t,s) \Lambda_w^{\Feyn(-)}(s) \\
    & = E_z^\Feyn(t,s)-E_w^\Feyn(t,s).
  \end{align*}\ees
  since, for any $z,w$,
  \begin{align*}
    \Ran\bigl(\Lambda_w^{\Feyn(-)}(t_-)\bigr) & = \cZ_-^{(-)} = \Ker\bigl(\Lambda_z^{\Feyn(+)}(t_-)\bigr) \\
    \Ran\bigl(\Lambda_w^{\Feyn(+)}(t_+)\bigr) & = \cZ_+^{(+)} = \Ker\bigl(\Lambda_z^{\Feyn(-)}(t_+)\bigr).
  \end{align*}
  Thus we have proven that
  \begin{equation}
    (z-w)\bigl(E_z^\Feyn ZE_w^\Feyn\bigr)(t,s)=E_z^\Feyn(t,s)-E_w^\Feyn(t,s).
    \label{popo}
  \end{equation}
  Taking the $1,2$ component of (\ref{popo}) we obtain (\ref{popo1}).

  To prove (\ref{assua}) we use (\ref{pseu-}).
  This implies (\ref{assu}).
\end{proof}

Clearly, if we can define $K^\Feyn$, $K^\aFeyn$ by (\ref{pkc1}) as operators with dense domains, then we have
\[
  G_z^\Feyn=(z+K^\Feyn)^{-1},\quad G_z^\aFeyn=(z+K^\aFeyn)^{-1}
\]
and
\[
  \text{resolvent set of $ K^\Feyn$} \,=\, \bar{ \text{resolvent set of $ K^{\aFeyn}$}} \,\subset\, \RS.
\]

%------------------------------------------------------------------------------%
\subsection{Resolvent for $I=\mathbb{R}$}
\label{Resolvent}

In this subsection we will give some arguments supporting Conj.~\ref{coject}.
We will sketch a construction of a family of operators which we expect to be the resolvent of the (putative) distinguished self-adjoint realization of the abstract Klein--Gordon operator for $I=\mathbb{R}$.
We impose the assumptions of Subsect.~\ref{Propagators on the real line} (but it is likely that more assumptions are needed).
Our analysis will not be complete.

First note that we cannot repeat the constructions of Subsect.~\ref{Resolvent of realizations of the Klein--Gordon} without major changes.
In fact, for $I=\mathbb{R}$ in the definitions (\ref{resol}) one should take $t\to\pm\infty$.
However, for $z\not\in\mathbb{R}$ the evolution $R_z(t,s)$ blows up on a part of $\cW_0$ and decays on another part as $t\to\pm\infty$.

We need to define the projections onto the (distorted) positive/negative part of the spectrum of $B_{\pm,z}$ (defined analogously to $B_\pm$ in Subsect.~\ref{Propagators on the real line}).
This is straightforward for $B_\pm$, because they are self-adjoint.
However, $B_{\pm,z}$ in general are not self-adjoint.

For simplicity, we will assume that $B_\pm $ are strongly stable, so that their spectrum has a gap around $0$.
Then for small $\Re(z)$ the operators $B_{\pm,z}$ are ``bisectorial'', which is sufficient for the construction of these projections.

This is described in Proposition 7.2 of \cite{derezinski-siemssen:static}, which implies the following proposition:

\begin{proposition}
  There exists $\zeta_0>0$ such that the strip $\{\zeta\in\mathbb{C} \mid -\zeta_0\leq\Re(\zeta)\leq\zeta_0\}$ is contained in the resolvent set of $B_{\pm,z}$.
  Moreover, the operators
  \begin{align*}
    \Pi_{\pm, z}^{(+)} & \defn \lim_{\tau\to\infty}\frac12\Big(\one+\frac1{\pi\ri} \int_{-\ri\tau}^{\ri\tau}(B_{\pm,z}-\zeta)^{-1}\dif\zeta\Big), \\ \Pi_{\pm, z}^{(-)} & \defn \lim_{\tau\to\infty}\frac12\Big(\one-\frac1{\pi\ri} \int_{-\ri\tau}^{\ri\tau}(B_{\pm,z}-\zeta)^{-1}\dif\zeta\Big)
  \end{align*}
  constitute a pair of complementary projections commuting with $B_{\pm,z}$ such that
  \begin{align*}
    \sigma\bigl(B_{\pm,z}\Pi_{\pm,z}^{(+)}\bigr) & =\sigma\bigl(B_{\pm,z}\bigr)\cap\{w\in\mathbb{C} \mid \Re(w)\geq0\}, \\ \sigma\bigl(B_{\pm,z}\Pi_{\pm,z}^{(-)}\bigr) & =\sigma\bigl(B_{\pm,z}\bigr)\cap\{w\in\mathbb{C} \mid \Re(w)\leq0\}.
  \end{align*}
\end{proposition}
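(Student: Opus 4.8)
The plan is to reduce everything to a spectral mapping argument for the finite-dimensional-looking $2\times2$ matrix $B_z(\pm\infty)=B(\pm\infty)-zZ$, viewed as a bounded-from-below perturbation of the strongly stable generator $B(\pm\infty)$. First I would fix the $+\infty$ case (the $-\infty$ case being identical) and observe that strong stability of $B(\pm\infty)$ means, via Def.~\ref{def:1}, that $B(\pm\infty)$ is invertible and similar to a self-adjoint operator whose sign is an admissible involution; equivalently, in the Hilbert space $\cW_\bullet$ attached to the admissible involution $\sgn\!\big(B(\pm\infty)\big)$, the operator is self-adjoint with spectrum bounded away from $0$, say $\sigma\big(B(\pm\infty)\big)\cap(-\delta,\delta)=\emptyset$. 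Since $Z$ is bounded on every $\cW_\lambda$ (as noted just above the statement), $B_z(\pm\infty)$ is a bounded perturbation of a self-adjoint-like operator, and for $|\zeta|$ small and $\Re\zeta$ in a sufficiently narrow strip the resolvent $(B(\pm\infty)-\zeta)^{-1}$ exists with a uniform bound $\lesssim \delta^{-1}$; then a Neumann series argument shows $(B_z(\pm\infty)-\zeta)^{-1}=(B(\pm\infty)-zZ-\zeta)^{-1}$ exists provided $\|zZ\|\cdot\|(B(\pm\infty)-\zeta)^{-1}\|<1$. This yields a $\zeta_0>0$ (depending on $z$, $\delta$, and $\|Z\|$) with the strip $\{-\zeta_0\le\Re\zeta\le\zeta_0\}$ in the resolvent set of $B_z(\pm\infty)$, which is the first assertion.

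Next I would define the Riesz-type projections by the stated contour integrals along the imaginary axis. The key point is that $B_z(\pm\infty)$ has spectrum disjoint from the imaginary axis inside the strip (indeed from the whole strip), so the vertical line $\{\Re\zeta=0\}$ (suitably truncated and closed off at $\pm\ri\tau$ by large semicircles, or by taking the principal value and controlling the tails) separates $\sigma\big(B_z(\pm\infty)\big)$ into its right and left halves. I would write
\begin{equation}
\Pi_{\pm,z}^{(+)}=\frac{1}{2\pi\ri}\oint_{\Gamma_+}\big(\zeta-B_z(\pm\infty)\big)^{-1}\dif\zeta,
\end{equation}
where $\Gamma_+$ is a contour enclosing the right half of the spectrum, and check that the limit of the truncated imaginary-axis integral in the statement reproduces exactly this Riesz projection: the extra $\frac12\one$ term accounts for closing the contour at infinity, using that $(B_z(\pm\infty)-\zeta)^{-1}\sim -\zeta^{-1}\one$ as $|\zeta|\to\infty$ so the large arc contributes $\frac12\one$. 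That $\Pi_{\pm,z}^{(+)}+\Pi_{\pm,z}^{(-)}=\one$ and that each is idempotent and commutes with $B_z(\pm\infty)$ is then the standard Riesz functional calculus. The spectral decomposition identities follow because the contour $\Gamma_+$ encloses precisely $\sigma\big(B_z(\pm\infty)\big)\cap\{\Re w\ge 0\}$ (there are no spectral points on the axis), so $B_z(\pm\infty)\Pi_{\pm,z}^{(+)}$ has spectrum equal to that portion together with $\{0\}$ on the complementary subspace, which is the content of the last two displayed equations.

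The main obstacle I anticipate is not the algebra but the \emph{convergence and contour-deformation bookkeeping} for the unbounded strip integral: one must justify that the principal-value integral $\lim_{\tau\to\infty}\int_{-\ri\tau}^{\ri\tau}(B_z(\pm\infty)-\zeta)^{-1}\dif\zeta$ converges in operator norm and equals the difference of spectral projections, which requires a decay estimate $\|(B_z(\pm\infty)-\ri s)^{-1}\|=O(|s|^{-1})$ as $|s|\to\infty$ — true here because $B_z(\pm\infty)$ is, up to similarity, self-adjoint plus bounded, but it must be stated carefully in the Hilbertizable (not Hilbert) setting, invoking the equivalence of compatible norms. The cleanest route is to invoke Proposition~7.2 of~\cite{derezinski-siemssen:static} directly, as the statement already flags: that reference presumably establishes exactly this dichotomy for strongly stable pseudo-unitary generators, and the only new input needed is that $B_z(\pm\infty)=B(\pm\infty)-zZ$ remains within the class to which that proposition applies for $\zeta$ in the strip — i.e. that small bounded perturbations preserve the spectral gap around the imaginary axis — which is the Neumann-series estimate from the first paragraph. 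So in practice the proof reduces to: (i) the perturbation estimate giving $\zeta_0$, and (ii) citation of the static-case proposition for the rest.
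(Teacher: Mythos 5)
The first thing to note is that the paper offers no proof of this proposition at all: it is introduced with the words ``let us apply Proposition 7.2 of \cite{derezinski-siemssen:static}'' and left at that, so your closing remark that the cleanest route is to cite that result is exactly what the authors do. Your reconstruction of the Riesz-projection half of the statement is correct: the normalization checks out (for a scalar $b$ with $\Re b>0$ the truncated integral tends to $\pi\ri$, so the formula returns $\one$, and for $\Re b<0$ it returns $0$), the $\tfrac12\one$ is precisely the contribution of the arc at infinity, and the symmetric limit converges because the leading $-\zeta^{-1}\one$ term of the resolvent is odd along the imaginary axis while the remainder is $O(|\zeta|^{-2})$; idempotency, complementarity, commutation and the spectral splitting are then standard functional calculus.

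The genuine gap is in your step (i), the resolvent-set strip. The Neumann-series bound requires $|z|\,\norm{Z}\,\sup_\zeta\norm{(B(\pm\infty)-\zeta)^{-1}}<1$, i.e.\ roughly $|z|\lesssim\delta$ where $\delta$ is the spectral gap of $B(\pm\infty)$. But the proposition is used immediately afterwards to define $\cZ_{\pm,z}^{(\pm)}(t)$ for \emph{all} $z$ with $\Im z\geq0$ resp.\ $\Im z\leq0$, hence for arbitrarily large $|z|$, where a generic bounded-perturbation argument cannot keep a strip around the imaginary axis spectrum-free. The correct mechanism is structural rather than perturbative: since $QZ=\diag(\one,0)\geq0$, one has $QB_z(\pm\infty)=H(\pm\infty)-z\diag(\one,0)$, and the eigenvalue equation $(B_z(\pm\infty)-\ri s)u=0$ gives
\begin{equation*}
  \cinner{u}{H(\pm\infty)u}-z\norm{u_1}^2=\ri s\cinner{u}{Qu},
\end{equation*}
whose real part alone (using positivity of $H(\pm\infty)$, bounded away from zero by strong stability) already forces $u=0$ for $\Re z\leq0$, and which is the starting point for excluding imaginary spectrum in the remaining cases. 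This exploitation of the specific block form of $B_z$ is the ingredient your sketch is missing if one wants a self-contained proof rather than the citation; treating $zZ$ as a generic small perturbation does not suffice.
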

Set
\begin{align*}
  \cZ_{\pm,z}^{(\pm)}(t) & \defn \Ran\Big(\lim_{\tau\to\pm\infty}R_z(t,\tau) \Pi_{\pm,z}^{(\pm)}R_z(\tau,t)\Big),\quad \Im(z)\geq0 \\ \cZ_{\pm,z}^{(\mp)}(t) & \defn \Ran\Big(\lim_{\tau\to\pm\infty}R_z(t,\tau) \Pi_{\pm,z}^{(\mp)}R_z(\tau,t)\Big),\quad \Im(z)\leq0.
\end{align*}

We can now complement Conj.~\ref{coject} with an additional conjecture about the resolvent of $K^{\mathrm{s.a.}}$:

\begin{conjecture}
  \label{conject}
  We expect that for $\Im(z)\geq0$ for some (hence all) $t\in \mathbb{R}$ the pair of subspaces $\bigl(\cZ_{+,z}^{(+)}(t),\cZ_{-,z}^{(-)}(t)\bigr)$ is complementary.
  Let $\bigl(\Lambda_z^{\Feyn(+)}(t),\Lambda_z^{\Feyn(-)}(t)\bigr)$ be the pair of projections corresponding to this pair of spaces.

  Equivalently, we expect that for $\Im(z)\leq0$ for some (hence all) $t\in \mathbb{R}$ the pair of subspaces $\bigl(\cZ_{+,z}^{(-)}(t),\cZ_{-,z}^{(+)}(t)\bigr)$ is complementary.
  Let $\bigl(\Lambda_z^{\aFeyn(-)}(t),\Lambda_z^{\aFeyn(+)}(t)\bigr)$ be the pair of projections corresponding to this pair of spaces.

  We also introduce
  \begin{alignat*}{2}
    E_z(t,s) & \defn \theta(t-s) R_z(t,s) \Lambda_z^{\Feyn(+)}(s) - \theta(s-t) R_z(t,s) \Lambda_z^{\Feyn(-)}(s),\quad && \Im(z)>0; \\
    E_z(t,s) & \defn \theta(t-s) R_z(t,s) \Lambda_z^{\aFeyn(-)}(s) - \theta(s-t) R_z(t,s) \Lambda_z^{\aFeyn(+)}(s),\quad && \Im(z)<0; \\
    G_z(t,s) & \defn \ri E_{z,12}(t,s),\quad && \Im(z)\neq0.
  \end{alignat*}

  Then we conjecture that $G_z$ defines for $z\in\mathbb{C}\backslash\mathbb{R}$ a bounded operator on $L^2(\mathbb{R},\cK)$ with a dense range and a trivial nullspace such that
  \begin{align}
    G_z-G_w & =(z-w) G_zG_w, \label{popo1.}
    \\           G_z^* & =G_{\bar z}.\label{popo2.}
  \end{align}
  Thus $G_z$ is the resolvent of a self-adjoint operator, which we can call $K^\mathrm{s.a}$ and treat as the distinguished self-adjoint realization of $K$.
\end{conjecture}

Let us sketch some arguments in favor of the above conjecture.

First of all, in the stationary case, that is, if $B(t)$ does not depend on $t$, the conjecture is true (with minor additional assumptions) following the arguments of \cite{derezinski-siemssen:static} and \cite{derezinski-siemssen:propagators}.

The conjecture is also true if $\cK=\mathbb{C}$.
In fact, the operator $K$ is then essentially the well-known 1-dimensional magnetic Schr\"odinger operator.
It should not be difficult to generalize this to the case of a finite dimensional $\cK$, or bounded $L(t)$ and $W(t)$.

Unfortunately, for a generic spacetime, $L(t)$ is unbounded and non-stationary.
We know in this case that asymptotic complementarity holds for real $z$, so we can expect it to hold in a neighborhood of $\mathbb{R}$.
$E_z$  are well defined
as quadratic forms on, say, $C_\mathrm{c}(\mathbb{R},\cW_0)$
and (\ref{popo2.}) is easily checked.
However, we do not know how to control the norm of $E_z(t,s)$ for large $t,s$, and hence to show the boundedness of $E_z$.

We expect that
\begin{align}
  \lim_{t\to+\infty}\e^{\mp\ri tB_{\pm,z}}\Pi_{\pm,z}^{(\pm)} & =0,\quad \Im(z)\geq 0;\label{pas1}
  \\
  \lim_{t\to-\infty}\e^{\mp\ri tB_{\pm,z}}\Pi_{\pm,z}^{(\mp)} & =0,\quad \Im(z)\leq 0.\label{pas2}
\end{align}
(This follows under a slightly stronger assumption from Proposition 7.2 of \cite{derezinski-siemssen:static}.)
Now the resolvent equation (\ref{popo1.}) should follow by the same calculation as in the proof of Prop.~\ref{popopo}, except that the terms (\ref{last1}) and (\ref{last2}) should be zero by (\ref{pas1}) and (\ref{pas2}).

%------------------------------------------------------------------------------%
\subsection{Additional regularity}

In order to have better properties of the Klein--Gordon form, and in particular, to guarantee that it defines an operator, it will be useful to introduce a family of assumptions more restrictive than Assumption \ref{ass.I}.
This family will depend on a parameter $\rho\geq0$.

\begin{assumption}[$\rho$]\mbox{}
  \begin{enumerate}
    \item
          For any $t\in I$ there exist $0<c_1\leq c_2$ such that
          \begin{equation*}
            c_1L_0(0)^{1+\rho}\leq L_0(t)^{1+\rho}\leq c_2 L_0(0)^{1+\rho}
          \end{equation*}
          and $I^\cl\ni t\mapsto L_0(0)^{-\frac{1+\rho}{2}} L_0(t)^{1+\rho}L_0(0)^{-\frac{1+\rho}{2}}\in B(\cK)$ is absolutely norm continuous.
    \item $I^\cl\ni t\mapsto L_0(t)^{\frac\rho2}\alpha^2(t) L_0(t)^{-\frac\rho2} \in B(\cK)$ is absolutely norm continuous.
    \item
          $I^\cl\ni t\mapsto L_0(t)^{\frac\rho2} W(t)L_0(t)^{\frac{-1-\rho}{2}}$, $L_0(t)^{-\frac\rho2} W(t)L_0(t)^{\frac{-1+\rho}{2}}\in B(\cK)$ are absolutely norm continuous and there exists $a<1$ such that
          \begin{align*}
            \|L(t)^{-\frac\rho2}\alpha(t)^{-1}W(t)L_0(t)^{\frac{-1+\rho}{2}}\|\leq a, \\ \|L(t)^{\frac\rho2}\alpha(t)^{-1}W(t)L_0(t)^{\frac{-1-\rho}{2}}\|\leq a.
          \end{align*}
  \end{enumerate}
  \label{assII}
\end{assumption}

Note that Assumption \ref{ass.I} coincides with Assumption \ref{assII}($0$).
If $0\leq \rho'\leq\rho$, then Assumption \ref{assII}($\rho$) implies Assumption \ref{assII}($\rho'$).

\begin{proposition}
  Suppose Assumption \ref{assII}($\rho$) holds.
  Then the following is true:
  \begin{enumerate}
    \item
          For any $t\in I^\cl$ and for $|\beta|\leq 1+\rho$, the Hilbert spaces $\cK_{\beta,t}$ and $\cK_{\beta,0}$ are compatible.
          Hence we can extend the scale of Hilbertizable spaces $\cK_\beta$ to $|\beta|\leq 1+\rho$ \item We can extend the scale of Hilbertizable spaces $\cW_\lambda$ to $|\lambda|\leq\frac12+\rho$.
          Besides, for any $t\in I^\cl$ and such $\lambda$ the Hilbert spaces $\cW_{\lambda,t}$ are compatible with $\cW_\lambda$.
    \item
          For any $-\rho-\frac12\leq\lambda\leq\rho-\frac12$
          the function $I\ni t\mapsto B(t):\cW_{\lambda+1}\to\cW_{\lambda}$ satisfies the assumptions of Thm.~\ref{thm:evolution-per}, and hence defines an evolution
          $R(t,s)$ on $\cW_\lambda$, $|\lambda|\leq\frac12+\rho$.
          For $-\rho-\frac12\leq\lambda\leq\rho-\frac12$ this evolution satisfies
          \begin{equation}
            \label{podi}
            \bigl(\partial_t+\ri B(t)\bigr)R(t,s)u=0,\quad u\in\cW_{\lambda+1}.
          \end{equation}
  \end{enumerate}
\end{proposition}
\begin{proof}
  Assumption \ref{assII}($\rho$) implies immediately that the Hilbertizable spaces $\cK_{\pm(1+\rho),t}$ and $\cK_{\pm(1+\rho),0}$ coincide.
  Therefore (1) follows by the Kato--Heinz inequality.

  Let us drop $(t)$.
  We have
  \begin{equation}
    \begin{bmatrix}
      L_0^{\frac\theta2} & 0 \\0 & L_0^\frac{-1+\theta}{2}
    \end{bmatrix}
    \begin{bmatrix}
      W & \alpha^2 \\L_0 & W^*
    \end{bmatrix}
    \begin{bmatrix}
      L_0^{\frac{-1-\theta}{2}} & 0 \\0 & L_0^{-\frac{\theta}{2}}
    \end{bmatrix}
    =
    \begin{bmatrix}
      L_0^{\frac\theta2} W L_0^{\frac{-1-\theta}2} & L_0^{\frac\theta2}\alpha^2 L_0^{-\frac\theta2} \\\one & L_0^{\frac{-1+\theta}2}W^* L_0^{-\frac\theta2}
    \end{bmatrix}
    .
    \label{qaw1}
  \end{equation}
  (\ref{qaw1}) is bounded for $\theta=\rho$ and $\theta=-\rho$.
  Hence by interpolation it is bounded for $-\rho\leq \theta\leq\rho$.
  Hence $B_0$ is bounded from $\cW_{\theta+\frac12}$ to $\cW_{\theta-\frac12}$.

  Let $-\rho\leq \theta\leq\rho$.
  (\ref{qaw1}) can be represented as
  \begin{align*}
    \begin{bmatrix}
      L_0^{\frac\theta2} \alpha L_0^{-\frac{\theta}2} & 0 \\0 & \one
    \end{bmatrix}
    \begin{bmatrix}
      L_0^{\frac\theta2} \alpha^{-1} W L_0^{\frac{-1-\theta}2} & \one \\\one & L_0^{\frac{-1+\theta}2}W^*\alpha^{-1} L_0^{-\frac\theta2}
    \end{bmatrix}
    \begin{bmatrix}
      \one & 0 \\0 & L_0^{\frac\theta2} \alpha L_0^{-\frac{\theta}2}
    \end{bmatrix}
  \end{align*}
  The two extreme terms are invertible.
  Besides, \[\|L_0^{\frac\theta2} \alpha^{-1} W L_0^{\frac{-1-\theta}2}\|\leq a<1.
  \]
  Hence the middle term is also invertible.
  This proves the invertibility of $B_0(t)$ as a map from $\cW_{\theta+\frac12}$ to $\cW_{\theta-\frac12}$.

  Let $-\rho-\frac12\leq\theta\leq \rho+\frac12$.
  Then for some $\theta_0\in[-\frac12,\frac12]$ and $n\in\mathbb{Z}$ we have $\theta=n+\theta_0$.
  Then $\cW_{\theta,t}=B_0(t)^n\cW_{\theta_0,t}$.
  But as Hilbertizable spaces $\cW_{\theta_0,t}=\cW_{\theta_0}$.
  We have just proved that $B_0(t)^n$ are bounded invertible in the sense $\cW_{\theta_0}\to\cW_{\theta_0+n}$ for $\theta_0+n\leq \rho+\frac12$.
  Hence $\cW_{\theta_0+n,t}=\cW_{\theta_0+n}$.
  Hence (2) is true.

  (3) is proven in a similar way as Prop.~\ref{exid}.
\end{proof}

Note that Assumption \ref{assII}($\rho$) is especially important for $\rho=\frac12$.
Then (\ref{podi}) holds with $\lambda=0$, so that $B(t)$ can be interpreted as an (unbounded) pseudo-unitary generator on the Krein space $\cW_0$ with the domain $\cW_1$.

%------------------------------------------------------------------------------%
\subsection{The abstract Klein--Gordon operator}
\label{The abstract Klein-Gordon operator}

Analysis of differential operators with variable coefficients of low regularity is a rather technical and complicated subject, even if these coefficients are scalar.
In our case these coefficients have values in unbounded operarators, hence it is not surprising that defining Klein--Gordon operators with low regularity conditions is difficult and messy.

For completeness, in the following theorems we give conditions which allow us to define abstract Klein--Gordon operators.
Note that we do not attempt to be optimal.

\begin{proposition}
  Suppose Assumption \ref{assII}($\frac12$) holds.
  Let $\bullet=\vee,\wedge,\Feyn,\aFeyn$.
  \begin{enumerate}
    \item
          Let
          \begin{equation}
            \label{sfs}
            f\in C_\mathrm{c}(I,\cK_{\frac32})\cap C_\mathrm{c}^1(I,\cK_{\frac12})
            ,\quad\alpha^{-2}(-\ri\partial_t+W)f\in
            C_\mathrm{c}^1(I,\cK_{-\frac12}) .
          \end{equation}
          Then
          \begin{equation}
            G^\bullet Kf=f.
            \label{vece2.}
          \end{equation}
    \item If the space of $f$ satisfying (\ref{sfs}) is dense in $\cK$, then
          $\Ker(G^\bullet)=\{0\}$ and $\Ran(G^\bullet)$ is dense.
    \item
          If in addition $I\ni t\mapsto L(t)^{-\frac14}\bigl(\partial_t\alpha(t)^{-2}\bigr) L(t)^{-\frac14}$, $L(t)^{-\frac14}\bigl(\partial_t W(t)\bigr) L(t)^{-\frac34} \in B(\cK)$ are continuous, then (\ref{sfs}) is equivalent to
          \begin{equation*}
            f\in      C_\mathrm{c}(I,\cK_{\frac32})\cap C_\mathrm{c}^1(I,\cK_{\frac12})
            \cap C_\mathrm{c}^2(I,\cK_{-\frac12}).
          \end{equation*}
  \end{enumerate}
  \label{pro1.}
\end{proposition}
\begin{proof}
  (1):
  By Prop.~\ref{pos8} (1) we know that if
  $u\in C_\mathrm{c}(I,\cW_1)\cap C_\mathrm{c}^1(I,\cW_0)$, then
  \begin{equation}
    E^\bullet\bigl(\partial_t+\ri B(t)\bigr)u=u.
    \label{vece1.}
  \end{equation}
  $      f\in C_\mathrm{c}(I,\cK_{\frac32})\cap C_\mathrm{c}^1(I,\cK_{\frac12})$ implies that $\alpha^{-2}(-\ri\partial_t+W)f\in
    C_\mathrm{c}(I,\cK_{\frac12})$.
  Therefore,
  \begin{equation}
    \label{vece}
    u \defn
    \begin{bmatrix}
      f \\
      -\alpha^{-2}(-\ri\partial_t+W)f
    \end{bmatrix}
    \in
    C_\mathrm{c}(I,\cW_1)\cap C_\mathrm{c}^1(I,\cW_0).
  \end{equation}
  Hence we can apply (\ref{vece1.}) to $u$.
  We have
  \begin{align*}
    \notag Q( -\ri \partial_t+ B) & =
                                      \begin{bmatrix}
                                        L & -\ri\partial_t+ W^* \\-\ri\partial_t+W & \alpha^2
                                      \end{bmatrix}
    \\ = &
    \begin{bmatrix}
      \one & (-\ri\partial_t + W^*)\alpha^{-2} \\0 & \one
    \end{bmatrix}
      \begin{bmatrix}
      - K & 0 \\0 & \alpha^2
    \end{bmatrix}
      \begin{bmatrix}
      \one & 0 \\ \alpha^{-2}(-\ri \partial_t + W) & \one
    \end{bmatrix}
      .
  \end{align*}
  For $u$ given by (\ref{vece}) we have
  \[ Q(-\ri\partial_t+B) u=
    \begin{bmatrix}
      -Kf \\0
    \end{bmatrix}
    .
  \]
  We obtain
  \begin{align*}
    u=E^\bullet
    \bigl(\partial_t+\ri B(t)\bigr)u
    =
    \begin{bmatrix}
      -  E_{12}^\bullet Kf \\ - E_{22}^\bullet Kf
    \end{bmatrix}
    ,
  \end{align*}
  which yields (\ref{vece2.}).

  By (\ref{vece2.})  the range of $G^\bullet$ contains (\ref{sfs}) contains.
  The same argument shows that the range of $G^{\bullet*}$ contains (\ref{sfs}).
  If the space of (\ref{sfs}) is dense, we have
  \begin{equation*}
    \Ker(G^\bullet)=\Ran(G^{\bullet*})^\perp=\{0\}.
  \end{equation*}
  This proves (2).

  (3) follows by checking that $\partial_t\alpha^{-2}\partial_tf$ and $\partial_t\alpha^{-2} Wf$ are in $C_\mathrm{c}(I,\cK_{-\frac12})$.
\end{proof}

Thus, if $I$ is finite, under the assumptions of Prop.~\ref{pro1.} (3)  we can define  an unbounded  closed operator $K^\bullet$
in the sense of $L^2(I,\cK)$ with $0$ belonging to its resolvent set.
It corresonds to the quadratic form (\ref{kleingordon1}) with the appropriate boundary conditions.
For instance, the boundary conditions of $K^\Feyn$ are
\begin{align*}
  f\in\cD( K^\Feyn) & \;\Rightarrow\;
                                        \begin{bmatrix}
                                          f(t_\mp) \\ -\alpha^{-2}(-\ri\partial_t+W)f(t_\mp)
                                        \end{bmatrix}
  \in\cZ_{\mp}^{(\mp)},                                                                  \\[2ex] f\in\cD( K^{\aFeyn}) & \;\Rightarrow\;
  \begin{bmatrix}
    f(t_\mp) \\ -\alpha^{-2}(-\ri\partial_t+W)f(t_\mp)
  \end{bmatrix}
          \in\cZ_{\mp}^{(\pm)} .
\end{align*}

Clearly, the above construction is indirect.
It does not mean that the expression (\ref{kleingordon}) is well defined in the sense of the Hilbert space $L^2(I,\cK)$.
Under more stringent conditions, such as those described in the following proposition, one can directly interpret (\ref{kleingordon}) as an operator:

\begin{proposition}
  Suppose that Assumption \ref{assII}($1$) holds.
  In addition, assume that $I^\cl\ni t\mapsto \bigl(\partial_t\alpha^{-2}(t)\bigr)L_0(t)^{-\frac12}$, $\bigl( \partial_t W(t)\bigr)L_0(t)^{-1}\in B(\cK)$ are norm continuous families.
  Then $K$, as defined in (\ref{kleingordon}), maps
  \begin{equation}
    C_\mathrm{c}(I,\cK_2)\cap C_\mathrm{c}^1(I,\cK_1)\cap C_\mathrm{c}^2(I,\cK_0)\label{qeqe}
  \end{equation}
  into $C_\mathrm{c}(I,\cK_0)$.
  Hence (\ref{qeqe}) can serve as a dense domain of the operator $K$.
  \label{analog}
\end{proposition}
\begin{proof}
  We rewrite $K$ as
  \begin{align*}
    K & =\frac{1}{\alpha^2(t)}D_t^2+W^*(t)\frac{1}{\alpha^2(t)}D_t+ \frac{1}{\alpha^2(t)}W(t)D_t+ \Big(\partial_t\frac{\ri}{\alpha^2(t)}\Big)D_t \\  & +W^*(t)\frac{1}{\alpha^2(t)}W(t)+ \Big(\partial_t\frac{\ri}{\alpha^2(t)}\Big)W(t)-\frac{\ri}{\alpha^2(t)} \partial_t W(t)-L(t).
    \tag*{$\square$}
  \end{align*}
\end{proof}

Now in the $I=\RR$ case we can strengthen Conj.~\ref{conject}:
\begin{conjecture}
  \label{conject2}
  Impose the assumptions of Conj.~\ref{conject} and Prop.~\ref{analog}, Then the operator $K$ with the domain, say, \eqref{qeqe} is essentially self-adjoint and its closure $K^{\mathrm{s.a.}}$ satisfies Conj.~\ref{conject}.
\end{conjecture}

%***************************************************************************
\section{Bosonic quantization}
\label{Bosonic quantization}

In this section we describe the basics of quantization used in bosonic QFT.
It involves two steps.
First, we select a classical phase space, and we associate with it an algebra of Canonical Commutation Relations.
Second, we choose a representation of this algebra.

We will describe four formalisms used in Step 1 as summarized in the table in Subsect.~\ref{Intro Bosonic quantization}.
We also describe how to define Fock representations, which is the most common realization of Step 2.

Note that, unlike in the introduction and the next section, in this section we do not put ``hats'' on quantized operators to reduced the notational burden.

%------------------------------------------------------------------------------%
\subsection{Real (or neutral) formalism}
\label{Bosonic quantization: real (or neutral) formalism}

%------------------------------------------------------------------------------%
\subsubsection{Canonical commutation relations}
\label{Canonical commutation relations}

Suppose that $\cY$ is a real vector space equipped with an antisymmetric form $\omega$, i.e., $(\cY,\omega)$ is a pre-symplectic space.

Let $\CCR(\cY)$ denote the complex unital $*$-algebra generated by $\Phi(w)$, $w\in\cY$, satisfying
\begin{enumerate}
  \item $\Phi(w)^* = \Phi(w)$,
  \item the map $\cY \ni w \mapsto \Phi(w)$ is linear,
  \item the canonical commutation relations hold,
        \begin{equation*}
          \bigl[ \Phi(v), \Phi(w) \bigr] = \im \rinner{v}{\omega w},
          \quad
          v,w \in \cY.
        \end{equation*}
\end{enumerate}

Let $\cW \defn \CC\cY$ be the complexification of $\cY$ and $Q$ the corresponding Hermitian form, as described in~\eqref{qiu}:
\begin{equation*}
  \cinner{v}{Q w} \defn \im \rinner{\bar{v}}{\omega w},\qquad v,w\in\cW.
\end{equation*}
We extend $\Phi$ to $\cW$, so that it is complex antilinear:
\begin{equation*}
  \Phi(w_R+\im w_I) \defn \Phi(w_R)-\im\Phi(w_I), \quad w_R,w_I\in\cY.
\end{equation*}
Then we have, for all $v,w \in \cW$,
\begin{align*}
  \Phi^*(w) \defn \Phi(w)^* & = \Phi(\bar w), \\ \bigl[ \Phi(v), \Phi^*(w) \bigr] & = \cinner{v}{Q w}.
\end{align*}

%------------------------------------------------------------------------------%
\subsubsection{Fock representation}

Assume in addition that $\cW$ is Krein.
Let $S_\bullet$ be an admissible anti-real involution on $\cW$, see Subsect.~\ref{sub:Admissible involutions and Krein spaces}.
Let $\Pi_\bullet^{(+)}$, $\cZ_\bullet^{(\pm)}=\Pi_\bullet^{(+)}\cW$ be the corresponding particle projection and space, see Subsect.~\ref{sub:Involutions}.

It is well-known that the two-point function
\begin{equation*}
  \omega_\bullet\bigl(\Phi(v)\Phi^*(w)\bigr) = \cinner{\Pi_\bullet^{(+)} v}{Q\Pi_\bullet^{(+)} w}, \quad v,w \in \cW,
\end{equation*}
uniquely determines a centered pure quasi-free state on the algebra
$\CCR(\cY)$.

 To describe the  corresponding GNS
representation first note that the charge $Q$ defines
on $\cZ_\bullet^{(+)}$ a positive definite scalar product
\begin{equation}\label{hilbo}
  \cinner{z_1}{z_2} \defn \cinner{z_1}{Qz_2},\quad z_1,z_2\in\cZ_\bullet^{(+)}.\end{equation}
Hence for $z \in \cZ_\bullet^{(+)}$ we can introduce
the standard
annihilation, resp.\ creation operators  $a_\bullet(z)$
and $a_\bullet^*(z)$ acting on
 the
bosonic Fock space $\Gamma_\s(\cZ_\bullet^{(+)})$, see e.g.\ \cite{derezinski-gerard}.
The state $\omega_\bullet$ is given by the vacuum $\Omega_\bullet \in \Gamma_\s(\cZ_\bullet^{(+)})$:
\begin{equation*}
  \omega_\bullet(\,\cdot\,) = \cinner{\Omega_\bullet}{\,\cdot\;\Omega_\bullet},
\end{equation*}
and the representation is given by
\begin{align*}
  \Phi_\bullet(w) & \defn a_\bullet(\Pi_\bullet^{(+)} w) + a_\bullet^*(\Pi_\bullet^{(+)} \bar w), \\ \Phi_\bullet^*(w) & \defn a_\bullet^*(\Pi_\bullet^{(+)} w) + a_\bullet(\Pi_\bullet^{(+)} \bar w),\quad w \in \cW.
\end{align*}
Note that if $z \in \cZ_\bullet^{(+)}$, then
\begin{alignat*}{2} \Phi_\bullet(z) & = \Phi_\bullet^*(\bar z) = a_\bullet(z), \\ \Phi_\bullet(\bar z) & = \Phi_\bullet^*(z) = a_\bullet^*(z).
\end{alignat*}

%------------------------------------------------------------------------------%

\subsubsection{Two-component representation}

Let $\cX$ be a real Hilbertizable space and $\cX^*$ is its
dual.  The pairing of $\cX$ and $\cX^*$
  will be denoted $\rinner{\cdot}{\cdot}$.
We equip $\cX^*\oplus\cX$ with the symplectic form
\begin{equation*}
  (u_2,v_2)\omega(u_1,v_1)= \rinner{u_2}{v_1} - \rinner{v_2}{u_1} ,\qquad(u_1,v_1),(u_2,v_2)\in\cX^*\oplus\cX.
\end{equation*}
Then $\cY \defn \cX^*\oplus\cX$ is a symplectic space and $\CC\cY$ is a Krein space.

Consider the $*$-algebra generated by $\phi(u)$, $u\in\cX^*$, $\pi(v)$, $v\in \cX$ satisfying
\begin{enumerate}
  \item $\phi(u)^* = \phi(u)$, $\pi(v)^* = \pi(v)$; \item the maps $\cX^* \ni u \mapsto \phi(u)$, $\cX \ni v \mapsto \pi(v)$ are linear; \item the CCR in the two-component form hold:
        \begin{alignat*}{3}
          \bigl[ \phi(u_1), \phi(u_2) \bigr] & = \bigl[ \pi(v_1), \pi(v_2) \bigr] = 0;
          \quad
          && u_1,u_2\in\cX^*, v_1,v_2\in\cX.
          \\
          \bigl[ \phi(u), \pi(v) \bigr]      & = \im \rinner{u}{v},
          \quad
          && u\in\cX^*, v\in\cX.
        \end{alignat*}
\end{enumerate}
We can pass to the formalism of Subsubsect.~\ref{Canonical commutation relations} by setting
\begin{align*}
  \Phi(v,u) & \defn\pi(v)+ \phi(u),\quad u\in\cX^*, v\in\cX.
\end{align*}
Indeed,
\begin{equation*}
  \big[ \Phi(v_2,u_2), \Phi(v_1,u_1)\big] =\im\bigl( \rinner{u_2}{v_1}- \rinner{v_2}{u_1}\bigr).
\end{equation*}

We can extend $\phi,\pi$ to $\mathbb{C}\cX^*\ni u\mapsto\phi(u)$ and $\mathbb{C}\cX\ni v\mapsto \pi(v)$ by antilinearity.
Then we can replace the symplectic form by the Hermitian form
\begin{align*}
  \cinner[\big]{(\im v_2,u_2)}{Q(\im v_1,u_1)} & = \rinner{\bar{u_2}}{\im v_1}+ \rinner{\bar{\im v_2}}{ u_1} \\ & =\im\bigl( \rinner{\bar{u_2}}{v_1}- \rinner{\bar{ v_2}}{u_1} \bigr).
\end{align*}

Every anti-real admisssible involution $S_\bullet$ on $\mathbb{C}\cX^*\oplus \mathbb{C}\cX$ determines a Fock state $\omega_\bullet$.
Let
\begin{align*}
  \Pi_\bullet^{(+)}=
  \begin{bmatrix}
    \Pi_{\bullet 11}^{(+)} & \Pi_{\bullet 12}^{(+)} \\ \Pi_{\bullet 21}^{(+)} & \Pi_{\bullet 22}^{(+)}
  \end{bmatrix}
  .
\end{align*}
be the corresponding ``particle projection''.
The conditions $\Pi_\bullet^{(+)*}Q=Q\Pi_\bullet^{(+)}$ and $\Pi_\bullet^{(+)}+\bar{\Pi_\bullet^{(+)}}=\one$ yield
\[
  \Pi_{\bullet 22}^{(+)*} = \Pi_{\bullet 11}^{(+)},
  \quad
  \Pi_{\bullet 12}^{(+)*} = \Pi_{\bullet 12}^{(+)}
  ,\quad
  \Pi_{\bullet 21}^{(+)*} = \Pi_{\bullet 21}^{(+)},
\]
\[
  \Pi_{\bullet 22}^{(+)} + \bar{\Pi_{\bullet 22}^{(+)}} = \Pi_{\bullet 11}^{(+)} + \bar{\Pi_{\bullet 11}^{(+)}} = \one,
  \quad
  \Pi_{\bullet 12}^{(+)} + \bar{\Pi_{\bullet 12}^{(+)}} = \Pi_{\bullet 21}^{(+)} + \bar{\Pi_{\bullet 21}^{(+)}} = 0.
\]
The two-point correlation functions of the state $\omega_\bullet$ are
given by
\begin{subequations}\label{recipe1}\begin{align}
    \omega_\bullet\bigl(\pi(v)\pi(v')\bigr) &= \rinner[\big]{v}{\Pi_{\bullet21}^{(+)}v'}, \\
    \omega_\bullet\bigl(\phi(u)\phi(u')\bigr) &= \rinner[\big]{u}{\Pi_{\bullet12}^{(+)}u'}, \\
    \omega_\bullet\bigl(\phi(u)\pi(v)\bigr) &= \ri\rinner[\big]{u}{\Pi_{\bullet11}^{(+)}v}, \\
    \omega_\bullet\bigl(\pi(v)\phi(u)\bigr) &= -\ri\rinner[\big]{v}{\Pi_{\bullet22}^{(+)}u} = -\ri\rinner{u}{\bar{\Pi_{\bullet 11}^{(+)}}v}.
\end{align}\end{subequations}

%\begin{align*}\MoveEqLeft
  % \omega_\bullet\cinner[\Big]{\bigl(\phi(u_2)-\im\pi(v_2)\bigr)}{\bigl(\phi^*(u_1)+\im\pi^*(v_1)\bigr)} \\ & = \cinner{u_2}{\Pi_{\bullet 21}^{(+)}u_1}+\cinner{u_2}{\Pi_{\bullet 22}^{(+)}v_1}+\cinner{v_2}{\Pi_{\bullet 11}^{(+)}u_1}+\cinner{v_2}{\Pi_{\bullet 12}^{(+)}v_1}.
%\end{align*}

%------------------------------------------------------------------------------%
\subsection{Complex (or charged) formalism}
\label{Bosonic quantization: complex (or charged) formalism}

\subsubsection{Charged canonical commutation relations}
\label{Charged canonical commutation relations}

Suppose that $\cW$ is a complex vector space equipped with a Hermitian form
\begin{equation*}
  \cinner{v}{Qw}, \quad v,w\in\cW.
\end{equation*}

Let $\CCR(\cW)$ denote the complex unital $*$-algebra generated by $\Psi(w)$ and $\Psi^*(w)$, $w\in\cW$, such that
\begin{enumerate}
  \item $\Psi^*(w) = \Psi(w)^*$,
  \item the map $\cW \ni w \mapsto \Psi^*(w)$ is linear,
  \item the canonical commutation relations in the complex form hold:
    \begin{equation*}
      \bigl[\Psi(v),\Psi^*(w)\bigr] = \cinner{v}{Q w},
      \quad
      \bigl[\Psi^*(v),\Psi^*(w)\bigr] = 0,  \quad v,w \in \cW.
    \end{equation*}
\end{enumerate}

We can pass from the complex to real formalism as follows.
Set
\begin{equation*}
  \Phi_\mathrm{R}(w) \defn \frac{1}{\sqrt2}\bigl(\Psi(w)+\Psi^*(w)\bigr), \quad \Phi_\mathrm{I}(w) \defn \frac{1}{\im\sqrt2}\bigl(\Psi(w)-\Psi^*(w)\bigr),
\end{equation*}
Then
\begin{align*}
  \big[\Phi_\mathrm{R}(w_2), \Phi_\mathrm{R}(w_1)\big] & =\im\Im\cinner{w_2}{Qw_1}, \\ \big[\Phi_\mathrm{R}(w_2), \Phi_\mathrm{I}(w_1)\big] & =0,\\ \big[\Phi_\mathrm{I}(w_2), \Phi_\mathrm{I}(w_1)\big] & =\im\Im\cinner{w_2}{Qw_1}.
\end{align*}
Therefore,  according to the real formalism of Subsect. \ref{Canonical
  commutation relations}, the phase space is $\cW\oplus\cW$ considered as a real space, and on each $\cW$ we put the symplectic form $\Im\cinner{\cdot}{Q\cdot}$.

%------------------------------------------------------------------------------%
\subsubsection{Fock representations}

Assume, in addition, that $(\cW,Q)$ is Krein.
Let $S_\bullet$ be an admissible involution on $\cW$ and introduce $\Pi_\bullet^{(\pm)}$, $\cZ_\bullet^{(\pm)}:=\Pi_\bullet^{(\pm)}\cW$\ as in Subsect.~\ref{sub:Involutions}.

Then we have a unique centered pure quasi-free state on $\CCR(\cW)$ defined by
\begin{align*}
  \omega_\bullet\bigl(\Psi(v) \Psi^*(w)\bigr) & = \cinner{v}{Q \Pi_\bullet^{(+)} w}, \\ \omega_\bullet\bigl(\Psi^*(v) \Psi(w)\bigr) & = -\cinner{w}{Q \Pi_\bullet^{(-)} v}, \\ \omega_\bullet\bigl(\Psi^*(v) \Psi^*(w)\bigr) & = \omega_\bullet\bigl(\Psi(v) \Psi(w)\bigr) = 0.
\end{align*}

Let us describe explicitly the GNS representation of $\omega_\bullet$.
 The space $\cZ_\bullet^{(+)}$ is a Hilbert space with the scalar
product \eqref{hilbo}.
The space $\bar{\cZ_\bullet^{(-)}}$  has
  the scalar product
  \begin{equation}\label{hilbo1}
  \cinner{\bar{z_1}}{\bar{z_2}} \defn -\cinner{z_2}{Qz_1}=-\bar{\cinner{z_1}{Qz_2}},\quad
  \bar{z_1},\bar{z_2}\in\bar{\cZ_\bullet^{(-)}},\end{equation}
and is also a Hilbert space.
The state $\omega_\bullet$ is represented by the Fock vacuum
$\cinner{\Omega}{\,\cdot\;\Omega}$ in the doubled Fock space
\begin{equation*}
  \Gamma_\s\bigl( \cZ_\bullet^{(+)} \oplus \bar{\cZ_\bullet^{(-)}} \bigr) \simeq \Gamma_\s(\cZ_\bullet^{(+)}) \otimes \Gamma_\s(\bar{\cZ_\bullet^{(-)}}).
\end{equation*}
Denote the creation and annihilation operators by~$a_\bullet^*$ and~$a_\bullet$.
The fields $\Psi$ in the GNS representation given by~$\omega_\bullet$ will be denoted by~$\Psi_\bullet$.
We have
\begin{align*}
  \Psi_\bullet(w) & \defn a_\bullet(\Pi_\bullet^{(+)} w) + a_\bullet^*(\bar{\Pi_\bullet^{(-)} w}), \\ \Psi_\bullet^*(w) & \defn a_\bullet^*(\Pi_\bullet^{(+)} w) + a_\bullet(\bar{\Pi_\bullet^{(-)} w}).
\end{align*}

The operator $\dGamma(S_\bullet)$ plays the role of a charge:
\begin{proposition}
  We have a $U(1)$ group of symmetries
  \begin{align*}
    \e^{\im s \dGamma(S_\bullet)} \Psi_\bullet(w) \e^{-\im s \dGamma(S_\bullet)}   & = \e^{-\im s} \Psi_\bullet(w),
    \\
    \e^{\im s \dGamma(S_\bullet)} \Psi_\bullet^*(w) \e^{-\im s \dGamma(S_\bullet)} & = \e^{\im s} \Psi_\bullet^*(w).
  \end{align*}
\end{proposition}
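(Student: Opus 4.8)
The plan is to reduce the two conjugation formulas to the standard intertwining properties of second quantization. Recall that the GNS Fock space is $\Gamma_\s\bigl(\cZ_\bullet^{(+)} \oplus \bar{\cZ_\bullet^{(-)}}\bigr)$, and that under the decomposition $\cW = \cZ_\bullet^{(+)} \oplus \cZ_\bullet^{(-)}$ the admissible involution is $S_\bullet = \Pi_\bullet^{(+)} - \Pi_\bullet^{(-)}$, i.e.\ $+\one$ on $\cZ_\bullet^{(+)}$ and $-\one$ on $\cZ_\bullet^{(-)}$. Passing to the conjugate space $\bar{\cZ_\bullet^{(-)}}$ that actually appears in the Fock space leaves this action as $-\one$ (the scalar $-1$ being real). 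Hence $\dGamma(S_\bullet)$ is the self-adjoint one-particle number-type operator equal to $+\one$ on $\cZ_\bullet^{(+)}$ and $-\one$ on $\bar{\cZ_\bullet^{(-)}}$, and $\e^{\im s \dGamma(S_\bullet)} = \Gamma(u_s)$, where $u_s$ is the unitary on the one-particle space acting as $\e^{\im s}\one$ on $\cZ_\bullet^{(+)}$ and as $\e^{-\im s}\one$ on $\bar{\cZ_\bullet^{(-)}}$.

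Next I would invoke the elementary identities $\Gamma(u) a_\bullet^*(f) \Gamma(u)^{-1} = a_\bullet^*(uf)$ and $\Gamma(u) a_\bullet(f) \Gamma(u)^{-1} = a_\bullet(uf)$, valid for any unitary $u$ on the one-particle space (the second follows from the first by taking adjoints, since $\Gamma(u)^* = \Gamma(u^{-1})$). Applied with $u = u_s$: if $z \in \cZ_\bullet^{(+)}$ then $u_s z = \e^{\im s} z$, so by antilinearity of $a_\bullet$ and linearity of $a_\bullet^*$ one gets $\e^{\im s \dGamma(S_\bullet)} a_\bullet(z) \e^{-\im s \dGamma(S_\bullet)} = \e^{-\im s} a_\bullet(z)$ and $\e^{\im s \dGamma(S_\bullet)} a_\bullet^*(z) \e^{-\im s \dGamma(S_\bullet)} = \e^{\im s} a_\bullet^*(z)$; if $\zeta \in \bar{\cZ_\bullet^{(-)}}$ then $u_s \zeta = \e^{-\im s}\zeta$, giving $\e^{\im s \dGamma(S_\bullet)} a_\bullet^*(\zeta) \e^{-\im s \dGamma(S_\bullet)} = \e^{-\im s} a_\bullet^*(\zeta)$ and $\e^{\im s \dGamma(S_\bullet)} a_\bullet(\zeta) \e^{-\im s \dGamma(S_\bullet)} = \e^{\im s} a_\bullet(\zeta)$. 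Substituting these into $\psi_\bullet(w) = a_\bullet(\Pi_\bullet^{(+)} w) + a_\bullet^*(\bar{\Pi_\bullet^{(-)} w})$, where $\Pi_\bullet^{(+)} w \in \cZ_\bullet^{(+)}$ and $\bar{\Pi_\bullet^{(-)} w} \in \bar{\cZ_\bullet^{(-)}}$, gives at once $\e^{\im s \dGamma(S_\bullet)} \psi_\bullet(w) \e^{-\im s \dGamma(S_\bullet)} = \e^{-\im s}\psi_\bullet(w)$; substituting into $\psi_\bullet^*(w) = a_\bullet^*(\Pi_\bullet^{(+)} w) + a_\bullet(\bar{\Pi_\bullet^{(-)} w})$ gives $\e^{\im s \dGamma(S_\bullet)} \psi_\bullet^*(w) \e^{-\im s \dGamma(S_\bullet)} = \e^{\im s}\psi_\bullet^*(w)$. (Alternatively the second relation is the adjoint of the first, since $\dGamma(S_\bullet)^* = \dGamma(S_\bullet)$ and $\psi_\bullet^*(w) = \psi_\bullet(w)^*$.)

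The only genuinely delicate point — the one I would take care to spell out — is the identification of the one-particle operator underlying $\dGamma(S_\bullet)$: one must be explicit that the $-\one$ action of $S_\bullet$ on $\cZ_\bullet^{(-)}\subset\cW$ transports to $-\one$ on the conjugate space $\bar{\cZ_\bullet^{(-)}}$ occurring in the Fock space, and that the (anti)linearity conventions for $a_\bullet$, $a_\bullet^*$ are exactly those fixed in the construction of the representation. Everything else is a two-line computation, so no real obstacle is expected. A consistency check confirms the signs: $\dGamma(S_\bullet)$ is the charge (``number of particles minus number of antiparticles''), and $\psi_\bullet(w)$ either annihilates a particle or creates an antiparticle, hence lowers the charge by one, matching the factor $\e^{-\im s}$.
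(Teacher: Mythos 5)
Your proof is correct and follows essentially the same route as the paper's: conjugate the creation/annihilation operators by $\Gamma(u_s)$ with $u_s=\e^{\im s S_\bullet}$ acting as the scalars $\e^{\pm\im s}$ on the two summands of the one-particle space, then use the (anti)linearity of $a_\bullet$, $a_\bullet^*$. Your explicit discussion of how $S_\bullet|_{\cZ_\bullet^{(-)}}=-\one$ transports to $\bar{\cZ_\bullet^{(-)}}$ is a point the paper glosses over (its intermediate display is in fact careless about exactly this), so your version is, if anything, the cleaner one.
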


% ------------------------------------------------------------------------------%

\subsubsection{Two-component representations}

Suppose $\cV$ is a complex Hilbertizable space and $\cV^*$ is its
antidual. The pairing of $\cV$ and $\cV^*$ is denoted $\cinner{\cdot}{\cdot}$.
We equip $\cV^*\oplus\cV$ with the Hermitian form
\[
  \cinner[\big]{(v_2,u_2)}{Q(v_1,u_1)} = \cinner{u_2}{v_1}+ \cinner{v_2}{u_1} .
\]
Then $\cV^*\oplus\cV$ is a Krein space.

Consider the $*$-algebra generated by $\psi(u)$, $\psi^*(u)$, $u\in\cV^*$, $\eta(v)$, $\eta^*(v)$, $v\in \cV$ satisfying
\begin{enumerate}
  \item
    $\psi(u)^* = \psi^*(u)$, $\eta(v)^* = \eta^*(v)$; \item the maps $\cV^* \ni u \mapsto \psi(u)$, $\cV \ni v \mapsto \eta(v)$ are antilinear; \item the CCR in the complex version of the two-componenet form hold (we write only non-zero commutators):
    \begin{align*}
      \bigl[ \psi(u), \eta^*(v) \bigr] & = \im
      \cinner{u}{v},                           \\
      \bigl[ \psi^*(u), \eta(v) \bigr] & = \im
      \cinner{v}{u},
    \end{align*}
    for $u\in\cV^*, v\in\cV$.
\end{enumerate}
To pass to the formalism of Subsubsect.~\ref{Charged canonical commutation relations}, we set
\begin{align*}
  \Psi(v,u) & \defn \im\eta(v)+\psi(u), \\
  \Psi^*(v,u) & \defn -\im\eta^*(v)+\psi^*(u),\qquad u\in\cV^*, v\in\cV.
\end{align*}
Indeed,
\[
  \big[ \Psi(v_2,u_2), \Psi^*(v_1,u_1)\big] = \cinner{u_2}{v_1}+ \cinner{v_2}{u_1}.
\]

Every admisssible involution $S_\bullet$ on $\cV^*\oplus \cV$ determines a Fock state $\omega_\bullet$.
Let
\[
  \Pi_\bullet^{(\pm)}=
  \begin{bmatrix}
    \Pi_{\bullet 11}^{(\pm)} & \Pi_{\bullet 12}^{(\pm)} \\ \Pi_{\bullet 21}^{(\pm)} & \Pi_{\bullet 22}^{(\pm)}
  \end{bmatrix}
  .
\]
be the corresponding ``particle and antiparticle projection''.
The conditions $\Pi_\bullet^{(\pm)*}Q=Q\Pi_\bullet^{(\pm)}$ and $\Pi_\bullet^{(+)}+\Pi_\bullet^{(-)}=\one$ yield
\[
  \Pi_{\bullet 22}^{(\pm)*} = \Pi_{\bullet 11}^{(\pm)},
  \quad
  \Pi_{\bullet 12}^{(\pm)*} = \Pi_{\bullet 12}^{(\pm)},
  \quad
  \Pi_{\bullet 21}^{(\pm)*} = \Pi_{\bullet 21}^{(\pm)},
\]
\[
  \Pi_{\bullet 22}^{(+)} + \Pi_{\bullet 22}^{(-)} = \Pi_{\bullet 11}^{(+)} + \Pi_{\bullet 11}^{(-)} = \one,
  \quad
  \Pi_{\bullet 12}^{(+)} + \Pi_{\bullet 12}^{(-)} = \Pi_{\bullet 21}^{(+)} + \Pi_{\bullet 21}^{(-)} = 0.
\]
The two-point correlation functions of the state $\omega_\bullet$ are given by
\begin{subequations}\label{recipe}\begin{align}
  \omega_\bullet\bigl(\eta(v)\eta^*(v')\bigr) &= \omega_\bullet\bigl(\eta^*(v') \eta(v)\bigr) = \cinner[\big]{v}{\Pi_{\bullet21}^{(+)}v'}, \\
  \omega_\bullet\bigl(\psi(u)\psi^*(u')\bigr) &= \omega_\bullet\bigl(\psi^*(u') \psi(u)\bigr) = \cinner[\big]{u}{\Pi_{\bullet12}^{(+)}u'}, \\
  \omega_\bullet\bigl(\psi(u)\eta^*(v)\bigr) &= \phantom{-}\ri \cinner[\big]{u}{\Pi_{\bullet11}^{(+)}v}, \qquad \omega_\bullet\bigl(\eta^*(v)\psi(u)\bigr) = -\ri \cinner[\big]{u}{\Pi_{\bullet11}^{(-)}v}, \\
  \omega_\bullet\bigl(\eta(v)\psi^*(u)\bigr) &= -\ri \cinner[\big]{v}{\Pi_{\bullet22}^{(+)}u}, \qquad \omega_\bullet\bigl(\psi^*(u)\eta(v)\bigr) = \phantom{-} \ri \cinner[\big]{v}{\Pi_{\bullet22}^{(-)}u},
\end{align}\end{subequations}

%\bes\label{recipe}
%\begin{align}
  %\label{recipe1}
  %\begin{split}
    %\MoveEqLeft \omega_\bullet\Big(\bigl(\psi(u_2)-\im\eta(v_2)\bigr) \bigl(\psi^*(u_1)+\im\eta^*(v_1)\bigr)\Big) \\
    %& = \cinner{u_2}{\Pi_{\bullet 21}^{(+)}u_1}+\cinner{u_2}{\Pi_{\bullet 22}^{(+)}v_1}+\cinner{v_2}{\Pi_{\bullet 11}^{(+)}u_1}+\cinner{v_2}{\Pi_{\bullet 12}^{(+)}v_1},
  %\end{split}
  %\\\label{recipe2}
 % \begin{split}
   % \MoveEqLeft \omega_\bullet\Big(\bigl(\psi^*(u_1)+\im\eta^*(v_1)\bigr) \bigl(\psi(u_2)-\im\eta(v_2)\bigr)\Big) \\
    %& = -\cinner{u_2}{\Pi_{\bullet 21}^{(-)}u_1}-\cinner{u_2}{\Pi_{\bullet 22}^{(-)}v_1}-\cinner{v_2}{\Pi_{\bullet 11}^{(-)}u_1}-\cinner{v_2}{\Pi_{\bullet 12}^{(-)}v_1}.
  %\end{split}\end{align}\ees

%******************************************************************************%
\section{Klein--Gordon equation and Quantum Field Theory on curved space-times}
\label{Quantum Field Theory on curved space-times}

In this section we formulate the main results of this paper in the setting of QFT on curved spacetimes, more precisely, on a globally hyperbolic manifolds equipped with electromagnetic and scalar potentials.
We describe the role various propagators play in the theory of quantum fields satisfying the Klein--Gordon equation.

The usual presentations of this topic make the assumption that all coefficients in the Klein--Gordon equation are smooth.
The results that we have obtained in the previous sections allow us to consider systems with much lower regularity.

%------------------------------------------------------------------------------%
\subsection{Half-densities on a pseudo-Riemannian manifold}
\label{Half-densities on a pseudo-Riemannian manifold}

Let $M$ be a manifold.
A \emph{half-density} on $M$ is an assignment of a complex function to every coordinate patch satisfying the following condition: if $x\mapsto f(x)$ and $x'\mapsto f'(x')$ are two such assignments, then
\begin{equation*}
  \Big|\frac{\partial x'}{\partial x}\Big|^{\frac12}f'(x')=f(x),
\end{equation*}
where $ \Big|\frac{\partial x'}{\partial x}\Big|$ denotes the Jacobian of the change of coordinates.
The space of square integrable half-densities will be denoted $L^2(M)$.
Thus if we choose coordinates $x$ and the support of a function $f$ is contained in the corresponding coordinate patch, then
\begin{equation}
  \cinner{f_1}{f_2}=\int \bar{f_1(x)}f_2(x)\dif x,\label{half1}
\end{equation}
where $\dif x=\dif x^1\cdots\dif x^d$ is the Lebesgue measure.
Note that the scalar product \eqref{half1} is independent of coordinates.

Suppose that $M$ is a pseudo-Riemannian manifold with a metric tensor, which in coordinates $x=[x^\mu]$ is given by the matrix $g(x)=[g_{\mu\nu}(x)]$.
Let $|g|(x) \defn \big|\det[g_{\mu\nu}(x)]\big|$.
$M$ has a distinguished density, which in the coordinates $x$ is given by $\sqrt{|g|(x)}\dif x$.
The space of scalar functions, square integrable with respect to $\sqrt{|g|(x)}\dif x$, is denoted $L^2(M,\sqrt{|g|})$.
Obviously \[L^2(M,\sqrt{|g|})\ni f\mapsto|g|^{\frac14}f\in L^2(M)\] is a unitary map.

%------------------------------------------------------------------------------%
\subsection{Klein--Gordon equation on spacetime and the conserved current}

Suppose a pseudo-Riemannian manifold is equipped with a vector field $[A^\mu(x)]$, called the electromagnetic potential, and the scalar potential $Y(x)$.

In this and the following subsection we develop the basic formalism thinking of $A,Y$ as smooth functions.
In Subsect.~\ref{Classical propagators} we translate this fomalism to the setting of Sect.~\ref{Abstract Klein-Gordon operator}, which alows for low regularity.

The Klein--Gordon operator, written first in the scalar and then in the half-density formalism in any local coordinates is
\begin{align*}
  K & \defn - \abs{g}^{-\frac12} (D_\mu - A_\mu) \abs{g}^\frac12 g^{\mu\nu} (D_\nu - A_\nu) - Y, \\ K & \defn - \abs{g}^{-\frac14} (D_\mu - A_\mu) \abs{g}^\frac12 g^{\mu\nu} (D_\nu - A_\nu)\abs{g}^{-\frac14} - Y.
\end{align*}
For functions $u,v\in C^\infty (M)$ introduce the \emph{current}, which
again we write first in the scalar, then in the half-density formalism:
\bes
\begin{align*}
  j^\mu(x;\bar u,v) \defn & -\bar{u(x)} g^{\mu\nu}(x)|g|^{\frac12}(x)\bigl(D_\nu-A_\nu(x)\bigr)v(x)                    \\
                          & -\bar{\bigl( D_\nu -A_\nu(x)\bigr)u(x)}g^{\mu\nu}(x)|g|^{\frac12}(x)v(x),
  \\
  j^\mu(x;\bar u,v) \defn & -\bar{u(x)} g^{\mu\nu}(x)|g|^{\frac14}(x)\bigl(D_\nu-A_\nu(x)\bigr)|g|^{-\frac14}(x)v(x)             \\
                          & -\bar{\bigl( D_\nu -A_\nu(x)\bigr)|g|^{-\frac14}(x)u(x)}g^{\mu\nu}(x)|g|^{\frac14}(x)v(x).
\end{align*}
\ees
We check that if $u$, $v$ solve the Klein--Gordon equation, that is
\begin{equation*}
  Ku=Kv=0,
\end{equation*}
then the current $j^\mu(x;\conj{u},v)$ is conserved, that is
\begin{equation*}
  \partial_\mu j^\mu(x;\conj{u},v)=0.
\end{equation*}

Let $M$ be globally hyperbolic, see e.g.\ \cite{bar}.
If $\Omega\subset M$, then $J^\vee(\Omega)$ denotes the \emph{future shadow}, and $J^\wedge(\Omega)$ the \emph{past shadow} of $\Omega$, that is, the set of all points in $M$ that can be reached from $\Omega$ by future/past directed causal paths.
A set $\Theta\subset M$ is called \emph{space compact} if there exists a compact $\Omega\subset M$ such that $\Theta\subset J^\vee(\Omega)\cup J^\wedge(\Omega)$.
$C_\sc(M)$ denotes the set of continuous functions on $M$ with  a space compact support.

Let $\Sol_\sc$ denote the set of smooth  space compact solutions to the
Klein--Gordon equation. For $u,v\in\Sol_\sc$,
\begin{equation}
  \label{symplec}
  \cinner{u}{Qv} \defn   \int_{\Sigma} j^\mu(x;\bar u,v)\dif\mathrm{s}_\mu(x)
\end{equation}
does not depend on the choice of a Cauchy surface $\Sigma$, where $\dif\mathrm{s}_\mu(x)$ denotes the natural measure on $\mathcal{S}$ times the normal vector.
(\ref{symplec}) defines a Hermitian form on $\Sol_\sc$ called the \emph{charge}.

%------------------------------------------------------------------------------%
\subsection{Foliating the spacetime}
\label{Foliating the spacetime}

Let us fix a diffeomorphism $I\times\Sigma\to M$, where $I$ is, as usual, $[t_-,t_+]$ or $\mathbb{R}$, and $\Sigma$ is a manifold.
In other words, equip $M$ with a \emph{time function} $t=x^0\in I$ such that all the leaves of the foliation $\Sigma_t = \{t\} \times \Sigma$ are identified with a fixed manifold $\Sigma$.
The generic notation for a point of~$\Sigma$ will be $\vec x$.

The restriction of~$g$ to the tangent space of~$\Sigma_t$ defines a time-dependent family of metrics on~$\Sigma$, denoted $h(t)=h=[h_{ij}]$.
We make the assumption that all $h$ are Riemannian, or, what is equivalent, that the covector $\dif t$ is always timelike.
We set $|h|=\det h$.
In coordinates, the metric can be written as
\begin{align*}
  g_{\mu\nu} \dif x^\mu \dif x^\nu & = -\alpha^2 \dif t^2 + h_{ij} (\dif x^i + \beta^i \dif t) (\dif x^j + \beta^j \dif t), \\ g^{\mu\nu} \partial_\mu \partial_\nu & = -\frac{1}{\alpha^2} (\partial_t - \beta^i \partial_i)^2 + h^{ij} \partial_i \partial_j .
\end{align*}
for some $\alpha(x)>0$ and $[\beta^i(x)]$.
We have $|g|=\alpha^2|h|$.
The Klein--Gordon operator in the half-density formalism can now be written
\begin{align}
  \notag K & = \abs{g}^{-\frac14} (D_0-\beta^i D_i-A_0+\beta^iA_i) \frac{\abs{g}^\frac12}{\alpha^2} (D_0-\beta^i D_i-A_0+\beta^iA_i)\abs{g}^{-\frac14} \\\notag & \quad - \abs{h}^{-\frac14} (D_i - A_i) \abs{h}^\frac12 h^{ij} (D_j - A_j)\abs{h}^{-\frac14} - Y\\  & = \bigl(D_t+W^*(t)\bigr)\frac{1}{\alpha^2(t)}\bigl(D_t+W(t)\bigr)-L(t),\label{hilbert1}
\end{align}
where
\bes
\begin{align}
  \label{beltrami0}
  L(t) & \defn \abs{g}^{-\frac14} (D_i - A_i) \abs{g}^\frac12 h^{ij} (D_j - A_j)\abs{g}^{-\frac14} + Y \\ \label{beltrami}
       & \mathrel{\phantom{\defn}\mathllap{=}} \abs{h}^{-\frac14} \Big(D_i -\frac\ri{2\alpha}\alpha_{,i}- A_i\Big)
  \abs{h}^\frac12 h^{ij} \Big(D_j
  + \frac\ri{2\alpha}\alpha_{,i}- A_j\Big)\abs{h}^{-\frac14} + Y                                       \\
  W(t) & \defn \beta^iD_i-A_0+\beta^iA_i+\frac{\ri}{4\abs{g}}\abs{g}_{,0}
  -\frac{\ri}{4\abs{g}}\beta^i\abs{g}_{,i}
\end{align}
\ees

The temporal component of the current, again in the half-density formalism, is
\begin{align*}
  \notag j^0(x;\bar u,v) & = \bar{ u(x)}\frac{1}{\alpha^2(x)}\bigl(D_0-W(x)\bigr)v(x) \\  &\quad +\bar{\frac{1}{\alpha^2(x)}\bigl(D_0-W(x)\bigr)u(x)}v(x).
\end{align*}

We use the half-density formalism to define the spaces $L^2(M)$ and $L^2(\Sigma)$.
We define $L^2(I)$ using the Lebesgue measure.
We have
\begin{equation}
  \label{hilbert}
  L^2(M)\simeq L^2(I,\Sigma)= L^2(I)\otimes L^2(\Sigma),
\end{equation}

We treat $\Sigma$ as equipped with the metric $h(t)$.
The operator $L(t)$ is a Hermitian operator on $C_\mathrm{c}^\infty(\Sigma)$ in the sense of the Hilbert space $L^2(\Sigma)$.
We have written it in two ways: \eqref{beltrami0} looks simpler, but the expression \eqref{beltrami} is manifestly covariant with respect to a change of coordinates on $\Sigma$.

For  brevity, we will write $\cK=L^2(\Sigma)$.
Note that $K$ in \eqref{hilbert1} has the form of an abstract Klein--Gordon operator considered in Sect.~\ref{Abstract Klein-Gordon operator}.

Choosing $\Sigma_t$ for the Cauchy surface we can rewrite \eqref{symplec} as
\begin{align*}
  \cinner{u}{Qv} & =\int_\Sigma \bar{u(t,\vec x)}\frac{1}{\alpha^2(t,\vec x)}\bigl(D_0+W(t,\vec x)\bigr) v(t,\vec x)\dif\vec x \\ & \quad + \int_\Sigma \bar{\bigl(D_0+W(t,\vec x)\bigr) u(t,\vec x)}\frac{1}{\alpha^2(t,\vec x)} v(t,\vec x) \dif\vec x,\quad u,v\in\cK.
\end{align*}

%------------------------------------------------------------------------------%
\subsection{Classical propagators}
\label{Classical propagators}

After identifying $M\simeq I\times\Sigma$, (\ref{hilbert}) and (\ref{hilbert1}) show that we are in the setting of Sect.~\ref{Abstract Klein-Gordon operator} devoted to the abstract Klein Gordon
operator.
From now on we impose Assumption \ref{ass.I}.
We introduce the formalism of Sect.~\ref{Abstract Klein-Gordon operator}, such as the Hilbertizable spaces $\cK_\beta,$ $\beta\in[-1,1]$ and $\cW_\lambda$, $\lambda\in[-\frac12,\frac12]$, as in \eqref{cK} and \eqref{cW}, the generator of the evolution $B(t)$ and the evolution itself $R(t,s)$.
In particular, the space $\cW_0$ is a Krein space equipped with the form $Q$.

It is easy to construct the classical propagators in this setting.
First we define $G^\bullet$, $\bullet=\vee,\wedge,\PJ$ as operators $C_\mathrm{c}(I,\cK) \to C(I,\cK)$ as in \eqref{pkc1.}.
By the Schwartz kernel theorem they possess distributional kernels, which we denote $G^\bullet(x,y) =G^\bullet(t,\vec x;s,\vec y) $,

It is obvious that
\begin{align*}
  \supp G^\wedge\subset\{(t,\vec x;s,\vec y)\in M\times M \mid t\leq s\}, \\
  \supp G^\vee\subset \{(t,\vec x;s,\vec y)\in M\times M \mid t\geq s\}.
\end{align*}

One can expect their support to be even smaller, more precisely, that they are \emph{causal}:
\begin{align*}
  \supp G^\wedge\subset\{(x,y)\in M\times M \mid x\in J^\wedge(x)\}, \\
  \supp G^\vee\subset\{(x,y)\in M\times M \mid x\in J^\vee(x)\},\\
  \supp G^\PJ\subset\{(x,y)\in M\times M \mid x\in J(x)\}.
\end{align*}
If $g,A,Y$ are smooth, this is very well-known, proven in numerous sources.
Under Assumption \ref{ass.I} this is presumably also true.
Under slightly more restrictive assumptions it follows from Theorem E1 of \cite{derezinski-siemssen:propagators}, see also \cite{sanchez} for a different approach.

%------------------------------------------------------------------------------%
\subsection{Non-classical propagators}
\label{Non-classical propagators.}

If $I$ is finite, we choose two admissible involutions $S_\pm$ on $\cW_0$.

If $I=\mathbb{R}$, we assume that the spacetime is stationary for large times and assume that $B_\pm $ are stable.
We set $S_\pm \defn \sgn\bigl(B_\pm \bigr) $, which are automatically admissible involutions.

With help of these admissible involutions, we define the non-classical propagators $G_\pm^{(+)},$ $G_\pm^{(-)},$ $G^\Feyn,$ $G^\aFeyn$.

If $I$ is finite, the non-classical propagators can be interpreted as bounded operators on $L^2(M)$.
If $G^\Feyn,$ $G^\aFeyn$ have zero nullspaces, then we define
\begin{equation*}
  K^\Feyn \defn G^{\Feyn-1},\quad K^\aFeyn \defn G^{\aFeyn-1},
\end{equation*}
which can be treated as well-posed realizations of the Klein--Gordon operator satisfying $K^{\Feyn*}=K^{\aFeyn}$.

If $I=\mathbb{R}$, then the non-classical propagators can be understood as, say, operators $C_\mathrm{c}(I,L^2(\Sigma)) \to C(I,L^2(\Sigma))$.

Suppose we impose the assumption of Prop.~\ref{analog}.
It is then clear that the operator $K$ is Hermitian (or, as it is often termed, symmetric) on $C_\mathrm{c}^\infty(M)$.
One can ask about the existence of its self-adjoint extensions.
This is the subject of following conjecture, which is essentially a spacetime version of Conj.~\ref{coject} and~\ref{conject2}.

\begin{conjecture}
  For a large class of asymptotically stationary and stable Klein--Gordon operators the following holds:
  \begin{enumerate}
    \item The operator $K$ with the domain $C_\mathrm{c}^\infty(M)$ is essentially self-adjoint in the sense of $L^2(M)$.
          Denote its unique self-adjoint extension by $K^{\mathrm{s.a.}}$.
    \item In the sense of $\langle t\rangle^{-s} L^2(M)\to\langle t\rangle^{s}L^2(M)$, for $s>\frac12$,
          \begin{align*}
            \slim_{\epsilon\searrow0} (K^{\mathrm{s.a.}}-\ri\epsilon)^{-1} & =G^{\Feyn},  \\
            \slim_{\epsilon\searrow0} (K^{\mathrm{s.a.}}+\ri\epsilon)^{-1} & =G^{\aFeyn}.
          \end{align*}
  \end{enumerate}
  \label{coject2}
\end{conjecture}

Note that Conj.~\ref{coject2} is true in the stable stationary case, see \cite{derezinski-siemssen:static} and \cite{derezinski-siemssen:propagators}.
As proven by Vasy \cite{vasy:selfadjoint} and Nakamura--Taira
\cite{nakamura,nakamura2,nakamura3}, it is also true for some classes
of asymptotically Minkowskian spacetimes.  Kami\'{n}ski
described a counterexample to a certain strong version of this conjecture \cite{kaminski}.

%------------------------------------------------------------------------------%
\subsection{Charged fields}

Let us now describe the formalism of classical and quantum field theory in our setting.
Note that the formalism described in the introduction used pointlike fields.
In this section pointlike fields may be ill-defined because of insufficient smoothness.
Therefore we prefer to use smeared fields.

Recall that elements of $\cW_0$ can be written as two component vectors:
\begin{equation*}
  w=
  \begin{bmatrix}
    w_1 \\w_2
  \end{bmatrix}
  .
\end{equation*}
The space $\cW_0$ is preserved by the dynamics $R(t,s)$.
We treat time $t=0$ as the ``reference time''.

For any $t\in I$, $u\in\cK_{-\frac12}$, $v\in\cK_{\frac12}$ we define the following functionals on $\cW_0$:
\bes
\begin{align*}
  \rinner{\psi_t(u)}{w} & =\int\bar{u(\vec x)}\bigl( R(t,0)w \bigr)_1 (\vec x)\dif\vec x, \\ \rinner{\psi^*_t(u)}{w} & =\int u(\vec x)\bar{\bigl( R(t,0)w \bigr)_1 (\vec x)}\dif\vec x,\\ \rinner{\eta_t(v)}{w} & =-\ri\int\bar{v(\vec x)}\bigl( R(t,0)w \bigr)_2 (\vec x)\dif\vec x,\\ \rinner{\eta^*_t(v)}{w} & =\ri\int v(\vec x)\bar{\bigl( R(t,0)w \bigr)_2 (\vec x)}\dif\vec x.
\end{align*}
\ees

From the symplectic structure associated with the charge form $Q$ we derive the Poisson brackets between $\psi_t,\psi_t^*,\eta_t,\eta_t^*$.
Below we present only the non-zero cases:
\begin{align*}
  \big\{\psi_t(u),\eta^*_t(v)\big\}= & \int\bar{u(\vec x)}v(\vec x)\dif\vec x, \\ \big\{\psi^*_t(u),\eta_t(v)\big\}= & \int u(\vec x)\bar{v(\vec x)}\dif\vec x.
\end{align*}

The first step of quantization is the replacement of the Poisson bracket by $\im$ times the commutator.
Thus we obtain the commutation relations
\begin{align*}
  \big[\hat\psi_t(u),\hat\eta^*_t(v)\big] &= \im\int\bar{u(\vec x)}v(\vec x)\dif\vec x, \\ \big[\hat\psi^*_t(u),\hat\eta_t(v)\big] &= \im\int u(\vec x)\bar{v(\vec x)}\dif\vec x.
\end{align*}

Then one chooses the in and the out Fock state.
Recall that they are determined by two admissible involutions $S_\pm$.
From $S_\pm$ we obtain two pairs of complementary projections $\Pi_\pm^{(+)},\Pi_\pm^{(-)}$.
Following the recipe \eqref{recipe}, $\Pi_\pm^{(+)}$ and $\Pi_\pm^{(-)}$ are used to define the Fock states $\omega_\pm=\cinner{\Omega_\pm}{\,\cdot\;\Omega_\pm}$.
In the slab geometry case this is done using the fields $\hat\psi_{t_\pm}, \hat\psi_{t_\pm}^*, \hat\eta_{t_\pm}, \hat\eta_{t_\pm}^*$.
In the unrestricted time case it is done using $\hat\psi_{t}, \hat\psi_{t}^*, \hat\eta_{t}, \hat\eta_{t}^*$ with $\pm t>T$.

We can also smear the fields with spacetime functions.
Suppose that, say, $f\in C_\mathrm{c}(I,\cK_{-\frac12}).$
Set
\begin{align*}
  \psi[f] \defn
  \int
  \psi_t\bigl(f(t,\,\cdot\,)\bigr)\dif
  t,\quad
  \psi^*[f] \defn
  \int
  \psi_t^*\bigl(f(t,\,\cdot\,)\bigr)\dif t.
\end{align*}
The Poisson bracket of the fields is known then as the \emph{Peierls bracket}:
\begin{align}\label{peierls}
  \big\{\psi[f_1],\psi^*[f_2]\big\} &
  =-\iint\bar{f_1(x)}
  G^\PJ(x,y)f_2(y)\dif x\dif y.
\end{align}

Similarly, we can smear the quantum fields:
\begin{align}\label{peierls2}
  \hat\psi[f] \defn \int \hat\psi_t\bigl(f(t,\cdot)\bigr)\dif t,\quad \hat\psi^*[f] \defn \int \hat\psi_t^*\bigl(f(t,\cdot)\bigr)\dif t.
\end{align}
The commutator of fields is expressed by the Peierls bracket.
\bes\label{field}
\begin{align}
  \label{field1}
  \big[\hat\psi[f_1],\hat\psi^*[f_2]\big] &
  =-\im\iint\bar{f_1(x)}
  G^\PJ(x,y)f_2(y)\dif x\dif y.
\end{align}
The vacuum expectation values of the products of fields are expressed by the positive/negative frequency bisolutions:
\begin{align}
  \label{field2}
  \cinner[\big]{\Omega_\pm}{\hat\psi[f_1]\hat\psi^*[f_2]\Omega_\pm} & =
  \iint\bar{f_1(x)}
  G_\pm^{(+)}(x,y)f_2(y)\dif x\dif y,                               \\\label{field3}
  \cinner[\big]{\Omega_\pm}{\hat\psi^*[f_2]\hat\psi[f_1]\Omega_\pm} & =
  \iint\bar{f_1(x)}
  G_\pm^{(-)}(x,y)f_2(y)\dif x\dif y.
\end{align}
\ees
Let $\torder{}$ denote the time-ordered product and and $\atorder{}$ the reverse time-ordered product.
Assume in addition the Shale condition for $\omega_+$ and $\omega_-$, so that $\Omega_+$ and $\Omega_-$ can be treated as vectors in the same representation.
Then the vacuum expectation values of the time-ordered and reverse time-order products divided by the overlap of the vacua is expressed by the Feynman, resp.\ anti-Feynman inverses:
\bes
\label{feynman-c-all}
\begin{align}
  \label{feynman-c}
  \frac{\cinner[\big]{\Omega_+}{\torder{\hat\psi[f_1] \hat\psi^*[f_2]}\Omega_-}}{\cinner{\Omega_+}{\Omega_-}}  & = -\im\iint\bar{ f_1(x)}
  G^\Feyn(t,s) f_2(y) \dif x \dif y,
  \\
  \label{feynman-c-}
  \frac{\cinner[\big]{\Omega_-}{\atorder{\hat\psi[f_1] \hat\psi^*[f_2]}\Omega_+}}{\cinner{\Omega_-}{\Omega_+}} & = \im\iint\bar{ f_1(x)}
  G^\aFeyn(t,s) f_2(y) \dif x \dif y.
\end{align}
\ees
Note, however, that as already mentioned in the introduction the RHS of~\eqref{feynman-c-all} is well-defined also if the Shale condition is not satisfied.

Here are identities that can be used to prove the above formulas:
\bes\label{peierls0}\begin{align}
  \psi_t(u)&=\psi_s\big(R(s,t)_{22}u\big)+\ri\eta_s\big(R(s,t)_{12}u\big),\\
  \psi_t^*(u)&=\psi_s^*\big(R(s,t)_{22}u\big)-\ri\eta_s^*\big(R(s,t)_{12}u\big).
\label{peierls3}  \end{align}
\ees
They follow  the definition of $\psi_t$ and $\psi_t^*$ and from
the pseudounitarity of $R(s,t)$.
  Commuting  $\psi_s(v)$ with \eqref{peierls3} we obtain
  \begin{align*}
    \{\psi_s(v),\psi_t(u)\}&=-\ri\cinner{v}{R(s,t)_{12}u}\\&=-\int \bar{v(\vec
      x)}G^\PJ(s,\vec x;t,\vec y)u(\vec y)\dif\vec x\dif\vec y,
    \end{align*} from which the formula for the Peierls bracket \eqref{peierls} follows.

Clearly, the quantized version of \eqref{peierls0}  with all fields
decorated with hats is also true. It implies \eqref{field1}

\begin{remark}
  In most physics literature one uses \emph{pointlike fields}, denoted typically $\hat\psi^*(x),\hat\psi(x)$, $x\in M$, (not to be confused with the spatially smeared fields $\hat\psi_t^*(u),\hat\psi_t(u)$).
  Formally, the smeared-out fields are given by
  \begin{align*}
    \hat\psi[f] & \defn \int \bar{f(x)}\hat\psi(x)\dif x, \\
    \hat\psi^*[f] & \defn \int f(x)\hat\psi^*(x)\dif x.
  \end{align*}
  Smeared-out fields are more typical for the mathematics literature, since they can be interpreted as closed densely defined operators (at least in a linear QFT).
  Nevertheless, pointlike fields are convenient.
  We used them in the introduction.
  Note that identities (\ref{field1}), (\ref{field2}), (\ref{field3}), (\ref{feynman-c}) and (\ref{feynman-c-}) are equivalent to identities (\ref{two1}), (\ref{two2a.}), (\ref{two2b.}), (\ref{two3.}) and (\ref{two4.}).
\end{remark}

%------------------------------------------------------------------------------%
\subsection{Neutral fields}

Suppose that the electromagnetic potential $[A^\mu]$ is absent.
Then the Klein--Gordon operator
\begin{align*}
  K & \defn \abs{g}^{-\frac14} \partial_\mu \abs{g}^\frac12 g^{\mu\nu} \partial_\nu \abs{g}^{-\frac14} - Y.
\end{align*}
is real.
In particular, the spaces $\cK_\beta$, $\cW_\lambda$ can be equipped with the usual complex conjugation and their real subspaces $\cK_{\beta,\mathbb{R}}$, $\cW_{\lambda,\mathbb{R}}$ can be defined.
Note that $\cW_{0,\mathbb{R}}$ is a real Krein space equipped with the symplectic form
\begin{equation*}
  \rinner{v}{\omega w} \defn \Im\cinner{v}{Qw}.
\end{equation*}
Also note that the evolution satisfies $R(t,s)=\bar{R(t,s)}$ and thus it can be restricted to $\cW_{0,\mathbb{R}}$.

For any $t\in I$, $u\in\cK_{-\frac12,\mathbb{R}}$, $v\in\cK_{\frac12,\mathbb{R}}$ we define the following functionals on $\cW_{0,\mathbb{R}}$:
\begin{align*}
  \rinner{\phi_t(u)}{w} & =\int u(\vec x)\bigl( R(t,0)w \bigr)_1 (\vec x)\dif\vec x, \\
  \rinner{\pi_t(v)}{w}  & =-\ri\int v(\vec x)\bigl( R(t,0)w \bigr)_2 (\vec x)\dif\vec x.
\end{align*}

From the symplectic structure given by the form $\omega$ we derive the Poisson brackets between $\phi_t,\pi_t$.
Below we present the only non-zero relation:
\begin{align*}
  \big\{\phi_t(u),\pi_t(v)\big\}=\int u(\vec x)v(\vec x)\dif\vec x.
\end{align*}

The first step of quantization is the replacement of the Poisson bracket by $\im$ times the commutator.
Thus we obtain the commutation relations
\begin{align*}
  \big[\hat\phi_t(u),\hat\pi_t(v)\big]=\im\int u(\vec x) v(\vec x)\dif\vec x.
\end{align*}

Then one chooses the in and the out Fock state -- as in the charged case.
In addition, in the slab geometry case, we demand that the two admissible involutions $S_\pm$ on $\cW_0$ are anti-real.
(In the unrestricted time case this is automatic.)
Thus the two pairs of complementary projections $\Pi_\pm^{(+)},\Pi_\pm^{(-)}$ obtained from $S_\pm$ are real.

We can also smear the fields with space-time functions.
Suppose that, say, $f\in C_\mathrm{c}(I,\cK_{-\frac12,\mathbb{R}}).$
Set
\begin{align*}
  \hat\phi[f] \defn
  \int
  \hat\phi_t\bigl(f(t,\cdot)\bigr)\dif
  t.
\end{align*}
Now we have the following identities:
\bes\label{field-r}
\begin{align}
  \label{field1-r}
  \big[\hat\phi[f_1],\hat\phi[f_2]\big]                                                                      &
  =-\im\iint f_1(x)G^\PJ(x,y)f_2(y)\dif x\dif y,                                                                                                                     \\\label{field2-r}
  \cinner[\big]{\Omega_\pm}{\hat\phi[f_1]\hat\phi[f_2]\Omega_\pm}                                                & =
  \iint f_1(x)G_\pm^{(+)}(x,y)f_2(y)\dif x\dif y,
  \\  \label{feynman-c-r}
  \frac{\cinner[\big]{\Omega_+}{\torder{\hat\phi[f_1] \hat\phi[f_2]}\Omega_-}}{\cinner{\Omega_+}{\Omega_-}}  & = -\im\iint f_1(x) G^\Feyn(t,s) f_2(y) \dif x \dif y, \\
  \label{feynman-c-r-}
  \frac{\cinner[\big]{\Omega_-}{\atorder{\hat\phi[f_1] \hat\phi[f_2]}\Omega_+}}{\cinner{\Omega_-}{\Omega_+}} & = \im\iint f_1(x)
  G^\aFeyn(t,s) f_2(y) \dif x \dif y.
\end{align}
\ees
As in the charged case, also here the Shale conditions is required for the LHS of \eqref{feynman-c-r} and \eqref{feynman-c-r-} to be well-defined.

\begin{remark}
  In most physics literature one uses \emph{pointlike fields}, denoted typically $\hat\phi(x)$, $x\in M$.
  Formally, the smeared-out fields are given by
  \begin{align*}
    \hat\phi[f] & \defn \int f(x)\hat\phi(x)\dif x.
  \end{align*}
  for $f\in C_\mathrm{c}^\infty(M,\mathbb{R})$,
\end{remark}

%%%%%%%%%%%%%%%%%%%%%%%%%%%%%%%%%%%%%%%%%%%%%%%%%%%%%%%%%%%%%%%%%%%%%%%%%%%%%%%%

%------------------------------------------------------------------------------%
\begin{acknowledgement}
J.D.\ acknowledges the support of National Science Center (Poland) under the grant UMO-2019/35/B/ST1/01651.
He also benefited from the support of Istituto Nazionale di Alta Matematica ``F.~Severi'', through the Intensive Period ``INdAM Quantum Meetings (IQM22)''.
Moreover, he is grateful to C.~G\'{e}rard, M.~Wrochna and W.~Kami\'{n}ski for useful discussions.
\end{acknowledgement}

%******************************************************************************%
% \bibliographystyle{spmpsci}
% \bibliography{references.bib}

\end{document}